\DeclareSymbolFont{stmry}{U}{stmry}{m}{n}
\DeclareMathSymbol\mapsfromchar\mathrel{stmry}{"5B}
\def\longmapsfrom{\longleftarrow\mapsfromchar}
\newcommand{\HH}{\mathcal{H}}
\newcommand{\HHbar}{\overline{\mathcal{H}}}
\newcommand{\VV}{\mathcal{V}}
\newcommand{\FF}{\mathcal{F}}
\newcommand{\RR}{\mathcal{R}}
\newcommand{\supersetvtx}{\mathit{superset}}
\newcommand{\reachable}{\rightsquigarrow}
\newcommand{\reach}{\reachable}
\newcommand{\reacheq}{\equiv}
\newcommand{\emptystack}{[\,]}
\newcommand{\incomp}{\mathit{Finished}}
\newcommand{\troot}{\mathit{low}}
\newcommand{\tindex}{\mathit{index}}
\newcommand{\ismax}{\mathit{is\_term}}
\newcommand{\cur}{\mathit{cur}}
\newcommand{\old}{\mathit{old}}
\newcommand{\graph}{\mathsf{graph}}
\newcommand{\single}{\mathit{no\_merge}}
\newcommand{\scc}{\textsc{Scc}}
\newcommand{\collected}{\mathit{collected}}
\newcommand{\new}{\mathit{new}}
\newcommand{\Find}{\Call{Find}}
\newcommand{\tplib}{{TPLib}}
\newcommand{\vtrue}{\mathbf{t}}
\newcommand{\vfalse}{\mathbf{f}}
\newcommand{\etal}{\textit{et al.}}
\newcommand{\etc}{\textit{etc}}
\newcommand{\ie}{\textit{i.e.}}
\newtheorem{proposition}{Proposition}
\newtheorem{lemma}[proposition]{Lemma}
\newtheorem{theorem}[proposition]{Theorem}
\newtheorem{corollary}[proposition]{Corollary}
\newtheorem{invariant}{Invariant}
\newtheorem{open_problem}{Question}
\theoremstyle{remark}
\newtheorem{remark}{Remark}
\newtheorem{example}[remark]{Example}
\DeclarePairedDelimiter\card{\lvert}{\rvert}
\DeclareMathOperator{\size}{\mathsf{size}}
\algrenewcommand\alglinenumber[1]{\scriptsize #1:}
\renewcommand{\lineref}[1]{{\scriptsize\ref{#1}}}
\title[Strongly connected components in directed hypergraphs]{On the complexity of strongly connected components in directed hypergraphs}
\author{Xavier Allamigeon}
\address{INRIA Saclay -- Ile-de-France and CMAP, Ecole Polytechnique, France}
\email{firstname.lastname@inria.fr}
\begin{document}
\begin{abstract}
We study the complexity of some algorithmic problems on directed hypergraphs and their strongly connected components (\scc{}s). The main contribution is an almost linear time algorithm computing the terminal strongly connected components (\ie\ \scc{}s which do not reach any components but themselves). \emph{Almost linear} here means that the complexity of the algorithm is linear in the size of the hypergraph up to a factor $\alpha(n)$, where $\alpha$ is the inverse of Ackermann function, and $n$ is the number of vertices. 
Our motivation to study this problem arises from a recent application of directed hypergraphs to computational tropical geometry.

We also discuss the problem of computing all \scc{}s. We establish a superlinear lower bound on the size of the transitive reduction of the reachability relation in directed hypergraphs, showing that it is combinatorially more complex than in directed graphs. 
Besides, we prove a linear time reduction from the well-studied problem of finding all minimal sets among a given family to the problem of computing the \scc{}s. Only subquadratic time algorithms are known for the former problem. These results strongly suggest that the problem of computing the \scc{}s is harder in directed hypergraphs than in directed graphs.
\end{abstract}

\keywords{Directed hypergraphs, strongly connected components, minimal set problem, subset partial order}
\date{\today}

\maketitle

\section{Introduction}\label{sec:introduction}

Directed hypergraphs consist in a generalization of directed graphs, in which the tail and the head of the arcs are sets of vertices. Directed hypergraphs have a very large number of applications, since hyperarcs naturally provide a representation of implication dependencies. Among others, they are used to solve several problems related to satisfiability in propositional logic, in particular on Horn formulas, see for instance~\cite{Ausiello91,Ausiello97,Gallo95,Gallo98,Pretolani03}. They also appear in problems relative to network routing~\cite{Pretolani00}, functional dependencies in database theory~\cite{AusielloJACM83}, 
model checking~\cite{Liu98}, chemical reaction networks~\cite{Ozturan08}, transportation networks~\cite{Nguyen89,Nguyen98}, and more recently, tropical convex geometry~\cite{AllamigeonGaubertGoubaultDCG2013,AllamigeonGaubertGoubaultSTACS10}.

Many algorithmic aspects of directed hypergraphs have been studied, in particular optimization related ones, such as determining shortest paths~\cite{Nguyen89,NielsenORL06}, maximum flows, minimum cardinality cuts, or minimum weighted hyperpaths (we refer to the surveys of Ausiello~\etal~\cite{Ausiello01} and of Gallo~\etal~\cite{GalloDAM93} for a comprehensive list of contributions). Naturally, some problems raised by the reachability relation in directed hypergraphs have also been studied. For instance, determining the set of the vertices reachable from a given vertex is known to be solvable in linear time in the size of the directed hypergraph (see for instance~\cite{GalloDAM93}).\footnote{In the sequel, the underlying model of computation is the Random Access Machine.} 

In directed graphs, many other problems can be solved in linear time, such as testing acyclicity or strong connectivity, computing the strongly connected components (\scc{}s), determining a topological sorting over them, \etc{}. Surprisingly, the analogues of these elementary problems in directed hypergraphs have not received any particular attention (as far as we know). Unfortunately, none of the direct graph algorithms can be straightforwardly extended to directed hypergraphs. The main reason is that the reachability relation of hypergraphs does not have the same structure: for instance, establishing that a given vertex $u$ reaches another vertex $v$ generally involves vertices which do not reach $v$. Moreover, as shown by Ausiello~\etal\ in~\cite{AusielloISCO12}, the vertices of a hypercycle do not necessarily belong to a same strongly connected component. 

Naturally, the aforementioned problems can be solved by determining the whole graph of the reachability relation, calling a linear time reachability algorithm on every vertex of the directed hypergraph. This naive approach is obviously not optimal, in particular when the hypergraph coincides with a directed graph.

\paragraph{Contributions} We first present in Section~\ref{sec:maxscc} an algorithm able to determine the terminal strongly connected components of a directed hypergraph in time complexity $O(N \alpha(n))$, where $N$ is the size of the hypergraph, $n$ the number of vertices, and $\alpha$ is the inverse of the Ackermann function. An \scc{} is said to be \emph{terminal} when no other \scc{} is reachable from it. The time complexity is said to be \emph{almost linear} because $\alpha(n) \leq 4$ for any practical value of $n$. As a by-product, the following two properties:
\begin{inparaenum}[(i)]
\item is a directed hypergraph strongly connected? 
\item does a hypergraph admit a sink (\ie\ a vertex reachable from all vertices)? 
\end{inparaenum}
can be determined in almost linear time. 

Problems involving terminal \scc{}s have important applications in computational tropical geometry. In particular, the algorithm presented here is the cornerstone of an analog of the double description method in tropical algebra~\cite{AllamigeonGaubertGoubaultDCG2013}. We refer to Section~\ref{subsec:other_properties} for further details, where other applications to Horn formulas and nonlinear spectral theory are also discussed. 

The contributions presented in Section~\ref{sec:combinatorics} indicate that the problem of computing the complete set of \scc{}s is very likely to be harder in directed hypergraphs than in directed graphs. 

In Section~\ref{subsec:transitive_reduction}, we establish a lower bound result which shows that the size of the transitive reduction of the reachability relation may be superlinear in the size of the directed hypergraph (whereas it is linearly upper bounded in the setting of directed graphs). An important consequence is that any algorithm computing the \scc{}s in directed hypergraphs by exploring the entire reachability relation, or at least a transitive reduction, has a superlinear complexity. 

In Section~\ref{subsec:set_pb_reduction}, we prove a linear time reduction from the minimal set problem to the problem of computing the strongly connected components. Given a family $\FF$ of sets over a certain domain, the minimal set problem consists in determining all the sets of $\FF$ which are minimal for the inclusion. While it has received much attention (see Section~\ref{subsec:set_pb_reduction} and the references therein), the best known algorithms are only subquadratic time.

\paragraph{Related Work} Reachability in directed hypergraphs has been defined in different ways in the literature, depending on the context and the applications. The reachability relation which is discussed here is basically the same as in~\cite{AusielloTCS90,Ausiello91,Ausiello01}, but is referred to as \emph{$B$-reachability} in~\cite{GalloDAM93,Gallo95}. It precisely captures the logical implication dependencies in Horn propositional logic, and also the functional dependencies in the context of relational databases. 
Some variants of this reachability relation have been introduced, for instance with the additional requirement that every hyperpath has to be provided with a linear ordering over the alternating sequence of its vertices and hyperarcs~\cite{ThakurTripathiTCS09}. These variants are beyond the scope of the paper.

As mentioned above, determining the set of the reachable vertices from a given vertex has been thoroughly studied. Gallo~\etal\  provide a linear time algorithm in~\cite{GalloDAM93}. In a series of works~\cite{AusielloTCS90,Ausiello91,Ausiello97}, Ausiello~\etal\  introduce online algorithms maintaining the set of reachable vertices, or hyperpaths between vertices, under hyperarc insertions/deletions. 

Computing the transitive closure and reduction of a directed hypergraph has also been studied by Ausiello~\etal\  in~\cite{Ausiello86}. In their work, reachability relations between sets of vertices are also taken into account, in contrast with our present contribution in which we restrict to reachability relations between vertices. The notion of transitive reduction in~\cite{Ausiello86} is also different from the one discussed here (Section~\ref{subsec:transitive_reduction}). More precisely, the transitive reduction of~\cite{Ausiello86} rather corresponds to minimal hypergraphs having the same transitive closure (several minimality properties are studied, including minimal size, minimal number of hyperarcs, \etc{}). In contrast, we discuss here the transitive reduction of the reachability relation (as a binary relation over vertices) and not of the hypergraph itself.

\section{Preliminary definitions and notations}\label{sec:preliminaries}

A \emph{directed hypergraph} is a pair $(\VV,A)$, where $\VV$ is a set of vertices, and $A$ a set of hyperarcs. A \emph{hyperarc} $a$ is itself a pair $(T,H)$, where $T$ and $H$ are both non-empty subsets of $\VV$. They respectively represent the \emph{tail} and the \emph{head} of $a$, and are also denoted by $T(a)$ and $H(a)$. Note that throughout this paper, the term \emph{hypergraph(s)} will always refer to directed hypergraph(s).

The size of a directed hypergraph $\HH = (\VV,A)$ is defined as $\size(\HH) = \card{\VV} + \sum_{(T,H) \in A} (\card{T} + \card{H})$ (where $\card{S}$ denotes the cardinality of any set $S$).

Given a directed hypergraph $\HH = (\VV,A)$ and $u,v \in \VV$, the vertex $v$ is said to be \emph{reachable} from the vertex $u$ in $\HH$, which is denoted $u \reachable_\HH v$, if $u = v$, or there exists a hyperarc $(T,H)$ such that $v \in H$ and all the elements of $T$ are reachable from $u$. This also leads to a notion of hyperpaths: a \emph{hyperpath} from $u$ to $v$ in $\HH$ is a sequence of $p$ hyperarcs $(T_1,H_1),\dots,(T_p,H_p) \in A$ satisfying $T_i \subseteq \cup_{j = 0}^{i-1} H_j$ for all $i = 1, \dots, p+1$, with the conventions $H_0 = \{u\}$ and $T_{p+1} = \{v\}$. The hyperpath is said to be \emph{minimal} if none of its subsequences is a hyperpath from $u$ to $v$. 

The \emph{strongly connected components} (\scc{}s for short) of a directed hypergraph $\HH$ are the equivalence classes of the relation $\reacheq_\HH$, defined by $u \reacheq_\HH v$ if $u \reach_\HH v$ and $v \reach_\HH u$. A component $C$ is said to be \emph{terminal} if for any $u \in C$ and $v \in \VV$, $u \reach_\HH v$ implies $v \in C$.

If $f$ is a function from $\VV$ to an arbitrary set, the image of the directed hypergraph $\HH$ by $f$ is the hypergraph, denoted $f(\HH)$, consisting of the vertices $f(v)$ ($v \in \VV$) and the hyperarcs $(f(T(a)),f(H(a)))$ ($a \in A$), where $f(S) := \{f(x) \mid x \in S\}$. 

\begin{figure}[t]
\begin{center}
\begin{tikzpicture}[>=stealth',scale=0.6, vertex/.style={circle,draw=black,very thick,minimum size=2ex}, hyperedge/.style={draw=black,thick}, simpleedge/.style={draw=black,thick}]
\node [vertex] (u) at (-2,-1) {$u$};
\node [vertex] (v) at (0,0) {$v$};
\node [vertex] (w) at (0,-2)  {$w$};
\node [vertex] (x) at (3.5,0) {$x$};
\node [vertex] (y) at (3.5,-2) {$y$};
\node [vertex] (t) at (2,-4.5) {$t$};

\path[->] (u) edge[simpleedge,out=90,in=-180] (v);
\node at (-1,-0.5) {$a_1$};
\path[->] (v) edge[simpleedge,out=-90,in=90] (w);
\node at (-0.5,-1.3) {$a_2$};
\path[->] (w) edge[simpleedge,out=-120,in=-60] (u);
\node at (-1.5,-2.5) {$a_3$};
\node at (1.75,-0.5) {$a_4$};
\node at (2.5,-3.5) {$a_5$};
\hyperedge[0.5][$(hyper@tail)!0.6!(hyper@head)$]{v,w}{y,x};
\hyperedge[0.4][$(hyper@tail)!0.6!(hyper@head)$]{y,w}{t};
\end{tikzpicture}
\end{center}
\caption{A directed hypergraph}\label{fig:hypergraph}
\end{figure}

\begin{example}
Consider the directed hypergraph depicted in Figure~\ref{fig:hypergraph}. Its vertices are $u, v, w, x, y, t$, and its hyperarcs $a_1 =(\{u\}, \{v\})$, $a_2 = (\{v\}, \{w\})$, $a_3 = (\{w\}, \{u\})$, $a_4 = (\{v,w\}, \{x,y\})$, and $a_5 = (\{w,y\}, \{t\})$. A hyperarc is represented as a bundle of arcs, and is decorated with a solid disk portion when its tail contains several vertices.

Applying the recursive definition of reachability from the vertex $u$ discovers the vertices $v$, then $w$, which leads to the two vertices $x$ and $y$ through the hyperarc $a_4$, and finally $t$ through $a_5$. The vertex $t$ is reachable from $u$ through the hyperpath $a_1,a_2,a_4,a_5$ (which is minimal). As mentioned in Section~\ref{sec:introduction}, some vertices play the role of ‘‘auxiliary'' vertices when determining reachability. In our example, establishing that $t$ is reachable from $u$ requires to establish that $y$ is reachable from $u$, while $y$ does not reach $t$. Such a situation cannot occur in directed graphs.
\end{example}

Observe that all the notions presented in this section are generalizations of their analogues on directed graphs. Indeed, any digraph $G = (\VV,A)$ ($A \subseteq \VV \times \VV$) can be equivalently seen as a directed hypergraph $\HH = \bigl(\VV,\bigl\{(\{u\},\{v\}) \mid (u,v) \in A \bigr\}\bigr)$. The reachability relations on $G$ and $\HH$ coincide, and $G$ and $\HH$ both have the same size. The notations introduced here will be consequently used for directed graphs as well.

\section{Computing the terminal \scc{}s in almost linear time}\label{sec:maxscc}

\subsection{Principle of the algorithm}\label{subsec:maxscc_principle}

Given a directed hypergraph $\HH = (\VV, A)$, an hyperarc $a \in A$ is said to be \emph{simple} when $\card{T(a)} = 1$. Such hyperarcs generate a directed graph, denoted by $\graph(\HH)$, defined as the couple $(\VV, A')$ where $A' =  \{ (t,h) \mid (\{ t \},H)  \in A \text{ and } h \in H \}$.
We first point out a remarkable special case in which the terminal \scc{}s of the directed hypergraph $\HH$ and the digraph $\graph(\HH)$ are equal.
\begin{proposition}\label{prop:terminal_scc}
Let $\HH$ be a directed hypergraph. Every terminal strongly connected component of $\graph(\HH)$ reduced to a singleton is a terminal strongly connected component of $\HH$.

Besides, if all terminal strongly connected components of $\graph(\HH)$ are singletons, then $\HH$ and $\graph(\HH)$ have the same terminal strongly connected components.
\end{proposition}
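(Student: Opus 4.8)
The statement has two parts, and the second clearly uses the first, so I would prove them in that order.

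For the first part, let $C$ be a terminal SCC of $\graph(\HH)$ that is a singleton, say $C = \{v\}$. I need two things: that $C$ is an SCC of $\HH$ (i.e., the $\HH$-reachability class of $v$ is exactly $\{v\}$), and that it is terminal in $\HH$ (i.e., $v \reach_\HH w$ implies $w = v$). Both will follow from a single observation: since $C = \{v\}$ is terminal in $\graph(\HH)$, the vertex $v$ cannot reach any \emph{other} vertex in $\graph(\HH)$; in particular there is no simple hyperarc $(\{v\}, H)$ with $H \not\subseteq \{v\}$ — more precisely, any simple hyperarc with tail $\{v\}$ has head $\{v\}$, and also $v$ cannot be reached through a chain of simple hyperarcs from any vertex it reaches and back. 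The key claim is: \emph{if $v \reach_\HH w$ then $w = v$}. I would prove this by induction on the recursive definition of reachability. The base case $w = v$ is trivial. For the inductive step, suppose $w \in H(a)$ for some hyperarc $a = (T,H)$ with every element of $T$ reachable from $v$; by the induction hypothesis every element of $T$ equals $v$, so $T = \{v\}$, i.e.\ $a$ is a simple hyperarc with tail $\{v\}$. But then, in $\graph(\HH)$, $v$ reaches every vertex of $H$; since $\{v\}$ is terminal in $\graph(\HH)$, we get $H \subseteq \{v\}$, hence $w = v$. This simultaneously shows the reachability class of $v$ is $\{v\}$ (so $\{v\}$ is an SCC of $\HH$) and that it is terminal.

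For the second part, assume every terminal SCC of $\graph(\HH)$ is a singleton. By the first part, each such singleton is a terminal SCC of $\HH$. It remains to show there are no \emph{other} terminal SCCs of $\HH$ and that the terminal SCCs of $\graph(\HH)$ exhaust those of $\HH$ — equivalently, that every terminal SCC of $\HH$ arises this way. So let $D$ be any terminal SCC of $\HH$; I want to show $D = \{v\}$ for some $v$ whose singleton is a terminal SCC of $\graph(\HH)$. The natural route: restrict attention to $D$. Since $D$ is terminal in $\HH$, for any $u \in D$ every hyperarc used to witness reachability out of $u$ stays inside $D$ (any vertex reached lies in $D$, and I should check the tails can be taken inside $D$ too — a vertex $w$ with $u \reach_\HH w$ has $w \in D$ and $w \reach_\HH u$, so $w \equiv_\HH u$). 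Now consider $\graph(\HH)$ restricted to $D$: it has at least one terminal SCC $C' \subseteq D$, and any terminal SCC of the restriction is also terminal in $\graph(\HH)$ itself (since nothing outside $D$ is $\HH$-reachable from $D$, and $\graph(\HH)$-reachability implies $\HH$-reachability). By hypothesis $C'$ is a singleton $\{v\}$; by the first part $\{v\}$ is a terminal SCC of $\HH$; but $v \in D$ and $D$ is an SCC, so $D = \{v\}$. This shows the terminal SCCs of $\HH$ are exactly the singleton terminal SCCs of $\graph(\HH)$, which by hypothesis are \emph{all} the terminal SCCs of $\graph(\HH)$; hence the two hypergraphs have the same terminal SCCs.

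**Main obstacle.** The induction in the first part is clean; the delicate point is the second part, specifically justifying that terminality of $D$ in $\HH$ lets me work entirely "inside $D$" — I must be careful that witnessing $\HH$-reachability from a vertex of $D$ never requires auxiliary vertices outside $D$ (this is exactly the phenomenon flagged in the introduction, where proving $u \reach v$ can involve vertices not reaching $v$). The resolution is that terminality is stronger than it looks: if $u \in D$ and $u \reach_\HH w$, then $w \in D$ by definition of terminal, and since SCCs partition $\VV$ and $w$ reaches back into $D$'s reachability closure, $w \equiv_\HH u$, so every auxiliary vertex is itself in $D$. Once that is nailed down, passing to the induced subgraph $\graph(\HH)|_D$ and invoking the first part is routine.
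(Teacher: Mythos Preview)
Your proof is correct and, for the first part, essentially identical to the paper's (you make the induction on the recursive definition of reachability explicit, whereas the paper compresses it into the observation that a minimal violation $u \reach_\HH v$ with $v \neq u$ forces a hyperarc with tail $\{u\}$).

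For the second part, your route is slightly more roundabout than the paper's. The paper simply picks, for $u$ in a terminal \scc{} $D$ of $\HH$, any terminal \scc{} of $\graph(\HH)$ that $u$ reaches \emph{in $\graph(\HH)$} (such a component always exists in a finite digraph); by hypothesis it is a singleton $\{v\}$, by the first part it is a terminal \scc{} of $\HH$, and since $u \reach_\HH v$ and $D$ is terminal, $D = \{v\}$. This avoids restricting to $D$ and then arguing that terminal \scc{}s of the restriction are terminal globally. More importantly, your stated ``main obstacle'' --- that auxiliary vertices in $\HH$-reachability might escape $D$ --- is a red herring: your own argument never actually uses $\HH$-reachability inside $D$, only $\graph(\HH)$-reachability, and for the latter the terminality of $D$ in $\HH$ immediately confines all intermediate vertices to $D$ (since $\graph(\HH)$-reachability implies $\HH$-reachability). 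So the step you flag as delicate is in fact routine, and the paper's shortcut bypasses it entirely.
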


\begin{proof}
Assume $\HH = (\VV,A)$. 
Let $\{ u \}$ be a terminal \scc{} of $\graph(\HH)$. Suppose that there exists $v \neq u$ such that $u \reach_\HH v$. There is necessarily a hyperarc $(T, H) \in A$ such that $T = \{ u \}$ and $H \neq \{ u \}$. Let $w \in H \setminus \{ u \}$. Then $(u,w)$ is an arc of $\graph(\HH)$. Since $\{ u \}$ is a terminal \scc{} of $\graph(\HH)$, this enforces $w = u$, which is a contradiction. Hence $\{ u \}$ is a terminal \scc{} of the hypergraph~$\HH$.

Assume that every terminal \scc{} of $\graph(\HH)$ is a singleton. Let $C$ be a terminal \scc{} of $\HH$, and $u \in C$. Consider $\{v\}$ a terminal \scc{} of $\graph(\HH)$ such that $u \reach_{\graph(\HH)} v$. Using the first part of the proof, $\{v\}$ is a terminal \scc{} of $\HH$. Besides, $\{v\}$ is reachable from $C$ in $\HH$. We conclude that $C = \{v\}$.
\end{proof}

The following proposition ensures that, in a directed hypergraph, merging two vertices of a same \scc{} does not alter the reachability relation.
\begin{proposition}\label{prop:collapse}
Let $\HH = (\VV,A)$ be a directed hypergraph, and let $x,y \in \VV$ such that $x \reacheq_\HH y$. Consider the function $f$ mapping any vertex distinct from $x$ and $y$ to itself, and both $x$ and $y$ to a same vertex $z$ (with $z \not \in \VV$). Then $u \reach_\HH v$ if, and only if, $f(u) \reach_{f(\HH)} f(v)$.
\end{proposition}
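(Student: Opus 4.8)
The plan is to prove the two implications of the equivalence separately, both by a closure/induction argument on the recursive definition of reachability. It is convenient to use the following reformulation of the definition of Section~\ref{sec:preliminaries}: for a hypergraph $\mathcal{G}=(\mathcal{W},B)$ and a vertex $s\in\mathcal{W}$, the set $R_{\mathcal{G}}(s):=\{w\in\mathcal{W}\mid s\reach_{\mathcal{G}}w\}$ is the \emph{least} subset $R$ of $\mathcal{W}$ such that $s\in R$ and, for every $(T,H)\in B$, $T\subseteq R$ implies $H\subseteq R$. Hence, to show $R_{\mathcal{G}}(s)\subseteq S$ for some candidate set $S$, it suffices to check that $s\in S$ and that $S$ is closed under that rule. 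Two elementary consequences will be used throughout: reachability is transitive (if $w\in R_{\HH}(s)$ then $R_{\HH}(w)\subseteq R_{\HH}(s)$, since $R_{\HH}(s)$ is a closed set containing $w$), and, because $x\reacheq_{\HH}y$, for every $s\in\VV$ one has $x\in R_{\HH}(s)\iff y\in R_{\HH}(s)$.

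For the forward implication $u\reach_{\HH}v\Rightarrow f(u)\reach_{f(\HH)}f(v)$, I would show that the set $\{v\in\VV\mid f(u)\reach_{f(\HH)}f(v)\}$ contains $u$ and is closed under the rule above for $\HH$, so that it contains $R_{\HH}(u)$. The base case is trivial. For closure, if $(T,H)\in A$ and $f(u)$ reaches $f(t)$ in $f(\HH)$ for every $t\in T$, then $f(T)\subseteq R_{f(\HH)}(f(u))$; since $(f(T),f(H))$ is a hyperarc of $f(\HH)$, this forces $f(H)\subseteq R_{f(\HH)}(f(u))$, i.e.\ $f(u)$ reaches $f(h)$ for every $h\in H$. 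This direction uses neither $z\notin\VV$ nor the hypothesis $x\reacheq_{\HH}y$.

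The backward implication is the delicate one, and the only place where $x\reacheq_{\HH}y$ matters: a hyperpath of $f(\HH)$ witnessing $f(u)\reach_{f(\HH)}f(v)$ may use the merged vertex $z$ ``as $x$'' at one point and ``as $y$'' at another. To handle this I would introduce $S:=\{p\in f(\VV)\mid (p\neq z\Rightarrow p\in R_{\HH}(u))\text{ and }(p=z\Rightarrow x\in R_{\HH}(u))\}$ and prove $R_{f(\HH)}(f(u))\subseteq S$ by the same closure argument, now carried out in $f(\HH)$. The base case $f(u)\in S$ reduces to $u\reach_{\HH}u$, and, when $u\in\{x,y\}$, additionally to $x\in R_{\HH}(u)$, which holds since $x\reacheq_{\HH}y$. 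For closure, let $(f(T),f(H))$ be a hyperarc of $f(\HH)$ with $f(T)\subseteq S$; unfolding the definition of $S$ and using $x\in R_{\HH}(u)\iff y\in R_{\HH}(u)$ shows that every $t\in T$ lies in $R_{\HH}(u)$, hence $T\subseteq R_{\HH}(u)$, hence $H\subseteq R_{\HH}(u)$, and from this one reads off (again invoking $x\in R_{\HH}(u)\iff y\in R_{\HH}(u)$ for the elements of $H$ mapped to $z$) that every element of $f(H)$ belongs to $S$. Once $R_{f(\HH)}(f(u))\subseteq S$ is established, the implication follows: if $f(u)\reach_{f(\HH)}f(v)$ then $f(v)\in S$; if $v\notin\{x,y\}$ this gives $v\in R_{\HH}(u)$ directly, and if $v\in\{x,y\}$ it gives $x\in R_{\HH}(u)$, whence $v\in R_{\HH}(u)$ since $v\reacheq_{\HH}x$.

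The only genuine obstacle is the bookkeeping around the merged vertex in the backward direction: one must ensure that whenever $z$ appears in the tail or head of a hyperarc of $f(\HH)$, the corresponding statement about $\HH$ is transferred correctly to \emph{both} $x$ and $y$ (and that reachability is never mistakenly asserted ``to $z$'' in $\HH$, where $z$ does not live). The identity $x\in R_{\HH}(u)\iff y\in R_{\HH}(u)$ — a consequence of $x\reacheq_{\HH}y$ together with transitivity of reachability — is exactly what makes this transfer valid; everything else is a routine verification of the two closure conditions.
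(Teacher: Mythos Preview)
Your proposal is correct and follows essentially the same route as the paper's proof: both directions are handled by induction on the recursive definition of reachability, and in the backward direction the hypothesis $x\reacheq_{\HH}y$ is invoked exactly at the point where a vertex of $\{x,y\}$ must be swapped for the other. The only cosmetic difference is that you phrase the induction as a least-closed-set/closure argument (introducing the auxiliary set $S$ to package the case split on whether the target equals $z$), whereas the paper carries out the same case analysis inline during a direct structural induction on the reachability derivation.
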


\begin{proof}
Let $\HH' = f(\HH)$. First assume that $u \reach_\HH v$, and let us show by induction that $f(u) \reach_{f(\HH)} f(v)$. The case $u = v$ is trivial. If there exists $(T, H) \in A$ such that $v \in H$ and for all $w \in T$, $u \reach_\HH w$, then $f(u) \reach_{f(\HH)} f(w)$ by induction, which proves that $f(v)$ is reachable from $f(u)$ in $f(\HH)$.

Conversely, suppose $f(u) \reach_{f(\HH)} f(v)$. If $f(u) = f(v)$, then either $u = v$, or the two vertices $u$ and $v$ belong to $\{x,y\}$. In both cases, $v$ is reachable from $u$ in $\HH$. Now suppose that there exists a hyperarc $(f(T), f(H))$ in $f(\HH)$ such that $f(v) \in f(H)$, and for all $w \in T$, $f(u) \reach_{f(\HH)} f(w)$. By induction hypothesis, we know that $u \reach_\HH w$. If $v \in H$, we obtain the expected result. If not, $v$ necessarily belongs to $\{x, y\}$. If, for instance, $v = x$, then $y \in H$. Thus $y$ is reachable from $u$ in $\HH$, and we conclude by $x \reacheq_\HH y$.
\end{proof}
It follows that the terminal \scc{}s of $\HH$ and $f(\HH)$ are in one-to-one correspondence. This property can be straightforwardly extended to the operation of merging several vertices of a same \scc{} simultaneously.

Using Propositions~\ref{prop:terminal_scc} and~\ref{prop:collapse}, we now sketch a method which computes the terminal \scc{}s in a directed hypergraph $\HH = (\VV,A)$. It performs several transformations on a hypergraph $\HH_\cur$ whose vertices are labelled by subsets of $\VV$:
\noindent
\begin{center}
\begin{tikzpicture}
\node[draw=black, text width=0.97\textwidth,text opacity=1] {
Starting from the hypergraph $\HH_\cur$ image of $\HH$ by the map $u \mapsto \{ u  \}$,
\begin{compactenum}[(i)]
\item\label{step:i} compute the terminal \scc{}s of the directed graph $\graph(\HH_\cur)$.
\item\label{step:ii} if one of them, say $C$, is not reduced to a singleton, replace $\HH_\cur$ by $f(\HH_\cur)$, where $f$ merges all the elements $U$ of $C$ into the vertex $\bigcup_{U \in C} U$. Then go back to Step~\eqref{step:i}.
\item\label{step:iii} otherwise, return the terminal \scc{}s of the directed graph $\graph(\HH_\cur)$.
\end{compactenum}};
\end{tikzpicture}
\end{center}
Each time the \emph{vertex merging step} (Step~\eqref{step:ii}) is executed, new arcs may appear in the directed graph $\graph(\HH_\cur)$. This case is illustrated in Figure~\ref{fig:merging}. In both sides, the arcs of $\graph(\HH_\cur)$ are depicted in solid, and the non-simple arcs of $\HH_\cur$ in dotted line. Note that the vertices of $\HH_\cur$ contain subsets of $\VV$, but enclosing braces are omitted for readability. Applying Step~\eqref{step:i} from vertex $u$ (left side) discovers a terminal \scc{} formed by $u$, $v$, and $w$ in the directed graph $\graph(\HH_\cur)$. At Step~\eqref{step:ii} (right side), the vertices are merged, and the hyperarc $a_4$ is transformed into two graph arcs leaving the new vertex $\{u,v,w\}$.

The termination of this method is ensured by the fact that the number of vertices in $\HH_\cur$ is strictly decreased each time Step~\eqref{step:ii} is applied. When the method is terminated, terminal \scc{}s of $\HH_\cur$ are all reduced to single vertices, each of them labelled by subsets of $\VV$. Propositions~\ref{prop:terminal_scc} and~\ref{prop:collapse} prove that these subsets are precisely the terminal \scc{}s of $\HH$. 

\begin{figure}[t]
\begin{center}
\begin{tikzpicture}[scale=0.6, vertex/.style={circle,draw=black,very thick,minimum size=2ex}, hyperedge/.style={draw=black,thick,dotted},simpleedge/.style={thick}]
\begin{scope}[scale=1, >=stealth']
\node [vertex,very thick] (u) at (-2,-1) {$u$} node[node distance=3.5ex,left of=u] {$0$};
\node [vertex,very thick] (v) at (0,0) {$v$} node[node distance=3.5ex,below left of=v] {$1$};
\node [vertex,very thick] (w) at (0,-2)  {$w$} node[node distance=3.5ex,below of=w] {$2$};
\node [vertex,dashed] (x) at (3.5,-0.25) {$x$};
\node [vertex,dashed] (y) at (3.5,-2) {$y$};
\node [vertex,dashed] (t) at (2,-4.5) {$t$};

\node at (1.75,-0.3) {$\begin{aligned} r_{a_4} & = v \\[-1.5ex] c_{a_4} & = 2 \end{aligned}$};
\node at (3.2,-3.4) {$\begin{aligned} r_{a_5} & = w \\[-1.5ex] c_{a_5} & = 1 \end{aligned}$};

\path[->] (u) edge[simpleedge,out=90,in=-180] (v);
\path[->] (v) edge[simpleedge,out=-90,in=90] (w);
\path[->] (w) edge[simpleedge,out=-120,in=-60] (u);
\hyperedge[0.6][$(hyper@tail)!0.6!(hyper@head)$]{v,w}{y,x};
\hyperedge{y,w}{t};
\end{scope}
\begin{scope}[scale=1, >=stealth', xshift=9cm]
\node [vertex,very thick,text width=0.7cm,text centered] (uvw) at (0,-1.5) {$u$ \hfill $v$\\ $w$} node[node distance=6ex,left of=uvw] {$0$};
\node [vertex] (x) at (3.5,0) {$x$} node[node distance=3.5ex,right of=x] {$3$};
\node [vertex,dashed] (y) at (3.5,-2) {$y$};
\node [vertex,dashed] (t) at (2,-4.5) {$t$};

\node at (3.2,-3.4) {$\begin{aligned} r_{a_5} & = w \\[-1.5ex] c_{a_5} & = 1 \end{aligned}$};
\path[->] (uvw) edge[simpleedge,out=40,in=-180] (x);
\path[->] (uvw) edge[simpleedge,out=40,in=120] (y);
\hyperedge[0.48]{y,uvw}{t};
\end{scope}
\end{tikzpicture}
\end{center}
\caption{A vertex merging step (the index of the visited vertices is given beside)}\label{fig:merging}
\end{figure}

\subsection{Optimized algorithm}\label{subsec:optimized_algorithm}

The sketch given in Section~\ref{subsec:maxscc_principle} is naturally not optimal (each vertex can be visited $O(\card{\VV})$ times). We propose to incorporate the vertex merging step directly into an algorithm determining the terminal \scc{}s in directed graphs, in order to gain efficiency. The resulting algorithm on directed hypergraphs is given in Figure~\ref{fig:maxscc}. We suppose that the directed hypergraph $\HH$ is provided with the lists $A_u$ of hyperarcs $a$ such that $u \in T(a)$, for each $u \in \VV$ (these lists can be built in linear time in a preprocessing step).

The algorithm consists of a main function \Call{TerminalScc}{} which initializes data, and then iteratively calls the function \Call{Visit}{} on the vertices which have not been visited yet. Following the sketch given in Section~\ref{subsec:maxscc_principle}, the function \Call{Visit}{$u$} repeats the following three tasks:
\begin{inparaenum}[(i)] 
\item it recursively searches a terminal \scc{} in the underlying directed graph $\graph(\HH_\cur)$, starting from the vertex $u$, 
\item once a terminal \scc{} is found, it performs a vertex merging step on it, 
\item and finally, it discovers the new graph arcs (if any) arising from the merging step.
\end{inparaenum}

Before discussing each of these three operations, we explain how the directed hypergraph $\HH_\cur$ is manipulated by the algorithm. Observe that the vertices of the hypergraph $\HH_\cur$ always form a partition of the initial set $\VV$ of vertices. Instead of referring to them as subsets of $\VV$, we use a union-find structure, which consists in three functions \Call{Find}{}, \Call{Merge}{}, and \Call{MakeSet}{} (see~\cite[Chapter~21]{Cormen01} for instance). A call to $\Call{Find}{u}$ returns, for each original vertex $u \in \VV$, the unique vertex of the hypergraph $\HH_\cur$ containing $u$. Two vertices $U$ and $V$ of $\HH_\cur$ can be merged by a call to $\Call{Merge}{U, V}$, which returns the new vertex. Finally, the ``singleton'' vertices $\{ u \}$ of the initial instance of the hypergraph $\HH_\cur$ are created by the function $\Call{MakeSet}{}$. In practice, each vertex of $\HH_\cur$ is encoded as a representative element $u \in \VV$, in which case the vertex corresponds to the subset $\{ v \in \VV \mid \Call{Find}{v} = u \}$. In other words, the hypergraph $\HH_\cur$ is precisely the image of $\HH$ by the function $\Call{Find}{}$.
To avoid confusion, we denote the vertices of the hypergraph $\HH$ by lower case letters, and the vertices of $\HH_\cur$ (and subsequently $\graph(\HH_\cur)$) by capital ones. By convention, if $u \in \VV$, $\Find{u}$ will correspond to the associated capital letter~$U$. 

\paragraph{Discovering terminal \scc{}s in the directed graph $\graph(\HH_{\cur})$}

This task is performed by the parts of the algorithm which are not shaded in gray. Similarly to Tarjan's algorithm~\cite{Tarjan72}, it uses a stack $S$ and two arrays indexed by vertices, $\tindex$ and $\troot$. The stack $S$ stores the vertices $U$ of $\graph(\HH_\cur)$ which are currently visited by $\Call{Visit}{}$. The array $\tindex$ tracks the order in which the vertices are visited, \ie\ $\tindex[U] < \tindex[V]$ if, and only if, $U$ has been visited by \Call{Visit}{} before $V$. 
The value $\troot[U]$ is used to determine the minimal index of the visited vertices which are reachable from $U$ (see Line~\lineref{scc:min}). A (non necessarily terminal) strongly connected component $C$ of $\graph(\HH_\cur)$ is discovered when a vertex $U$ satisfies $\troot[U] = \tindex[U]$ (Line~\lineref{scc:root}). Then $C$ consists of all the vertices stored in the stack $S$ above $U$. The vertex $U$ is the element of the \scc{} which has been visited first, and is called its \emph{root}. Once the visit of the \scc{} is terminated, its vertices are collected in a set $\incomp$ (Line~\lineref{scc:finished}).

Additionally, the algorithm uses an array $\ismax$ of booleans, allowing to track whether a \scc{} of $\graph(\HH_\cur)$ is terminal. A \scc{} is terminal if, and only if, its root $U$ satisfies $\ismax[U] = \True$. In particular, the boolean $\ismax[U]$ is set to $\False$ as soon as $U$ is connected to a vertex $W$ located in a distinct \scc{} (Line~\lineref{scc:not_max1}) or satisfying $\ismax[W] = \False$ (Line~\lineref{scc:not_max2}). 

\begin{figure}[t]
\begin{scriptsize}
\begin{minipage}[t]{0.43\textwidth}
\vspace{0pt}
\begin{algorithmic}[1]
\Function {TerminalScc}{$\HH = (\VV,A)$}
  \State \tikz[remember picture, baseline]{\coordinate (l6);}$n \gets 0$, $S \gets \emptystack$, $\incomp \gets \emptyset$ \tikz[remember picture, baseline]{\coordinate (t6);}
  \ForAll{$a \in A$} 
    \State $r_a \gets \Nil$, $c_a \gets 0$\tikz[remember picture, baseline]{\coordinate (r6);}
  \EndFor\tikz[remember picture, baseline]{\coordinate (b6);}
  \ForAll{$u \in \VV$} 
    \State $\tindex[u] \gets \Nil$
    \State $\troot[u] \gets \Nil$\tikz[remember picture, baseline]{\coordinate (t3);}
    \State \tikz[remember picture, baseline]{\coordinate (l3);}$F_u \gets \emptystack$, \Call{Makeset}{$u$}\tikz[remember picture, baseline]{\coordinate (b3);}\tikz[remember picture, baseline]{\coordinate (r3);}\label{scc:makeset} 
  \EndFor\label{scc:end_init}
  \ForAll{$u \in \VV$}\label{scc:begin_main_loop}
    \If{$\tindex[u] = \Nil$} 
      \State \Call{Visit}{$u$} \label{scc:visit_call}
    \EndIf
  \EndFor\label{scc:end_main_loop}
\EndFunction
\Statex
\Function {Visit}{$u$} 
  \State local $U \gets \Call{Find}{u}$\label{scc:find1}, local $F \gets \emptystack$\label{scc:begin}
  \State $\tindex[U] \gets n$, $\troot[U] \gets n$\label{scc:troot_def}
  \State $n \gets n+1$
  \State \tikz[remember picture, baseline]{\coordinate (t4);}$\ismax[U] \gets \True$
  \State push $U$ on the stack $S$
  \ForAll{$a \in A_u$}\label{scc:begin_node_loop}
    \sIf{$\card{T(a)} = 1$} push $a$ on $F$\label{scc:simple_edge}
    \fElse
      \sIf{$r_a = \Nil$} $r_a \gets u$\label{scc:root_def}\tikz[remember picture, baseline]{\coordinate (t);} \tikz[remember picture, baseline]{\coordinate (r);}
      \State local $R_a \gets \Call{Find}{r_a}$\label{scc:find2}
      \If{$R_a$ appears in $S$}\label{scc:root_reach}
	  \State $c_a \gets c_a + 1$\label{scc:counter_increment}
	  \If{$c_a = \card{T(a)}$}\label{scc:counter_reach}
	    \State push $a$ on stack $F_{R_a}$\label{scc:stack_edge}
	  \EndIf
      \EndIf\tikz[remember picture, baseline]{\coordinate (b);}\label{scc:end_aux}
    \EndIf
  \EndFor\label{scc:end_node_loop}\tikz[remember picture, baseline]{\coordinate (l);}
\algstore{scc_break}
\end{algorithmic}
\end{minipage}
\hfill
\begin{minipage}[t]{0.55\textwidth}
\vspace{0pt}
\begin{algorithmic}[1]
\algrestore{scc_break}
  \While{$F$ is not empty} \label{scc:begin_edge_loop}\tikz[remember picture]{\node[anchor=south] (goto2) {};}\tikz[remember picture, baseline]{\coordinate (t5);}
    \State pop $a$ from $F$
    \ForAll{$w \in H(a)$} \label{scc:begin_edge_loop2}
      \State local $W \gets \Call{Find}{w}$\label{scc:find3}
      \sIf{$\tindex[W] = \Nil$} \Call{Visit}{$w$}
      \If{$W \in \incomp$} \label{scc:membership}
	\State $\ismax[U] \gets \False$ \label{scc:not_max1}
      \Else
	\State $\troot[U] \gets \min(\troot[U],\troot[W])$ \label{scc:min}
	\State $\ismax[U] \gets \ismax[U] \And \ismax[W]$ \label{scc:not_max2} \tikz[remember picture, baseline]{\coordinate (r5);}
      \EndIf
    \EndFor \label{scc:end_edge_loop2}
  \EndWhile \label{scc:end_edge_loop}\tikz[remember picture, baseline]{\coordinate (b5);}
  \If{$\troot[U] = \tindex[U]$} \label{scc:root} \label{scc:begin2}
    \If{$\ismax[U] = \True$} 
    \Statex \qquad \Lcomment{a terminal \scc{} is discovered} \tikz[remember picture, baseline]{\coordinate (t2);}
      \State \tikz[remember picture, baseline]{\coordinate (l2);}local $i \gets \tindex[U]$\label{scc:begin_node_merging}
      \State pop each $a$ from $F_U$ and push it on $F$\label{scc:push_on_fprime1}
      \State pop $V$ from $S$
      \While{$\tindex[V] > i$} \label{scc:begin_node_merging_loop}
	\State pop each $a$ from $F_V$ and push it on $F$ \label{scc:push_on_fprime2}\tikz[remember picture, baseline]{\coordinate (r2);}
	\State $U \gets \Call{Merge}{U, V}$\label{scc:merge}
	\State pop $V$ from $S$
      \EndWhile \label{scc:end_node_merging_loop}\label{scc:end_node_merging}
      \State $\tindex[U] \gets i$, push $U$ on $S$\label{scc:index_redef}
      \sIf{$F$ is not empty} go to Line~\lineref{scc:begin_edge_loop}\tikz[remember picture, baseline]{\coordinate (b2);}\label{scc:end_node_merging2}\label{scc:goto} \tikz[remember picture]{\node[anchor=south] (goto1) {};}
    \EndIf
    \Repeat \label{scc:begin_non_max_scc_loop}
      \State pop $V$ from $S$, add $V$ to $\incomp$ \label{scc:finished}
    \Until{$\tindex[V] = \tindex[U]$} \label{scc:end_non_max_scc_loop}
  \EndIf \label{scc:end2}
\EndFunction\tikz[remember picture, baseline]{\coordinate (b4);}\label{scc:end}
\end{algorithmic}
\end{minipage}
\end{scriptsize}
\begin{tikzpicture}[remember picture,overlay]
\path (t) ++ (0ex,2ex) coordinate (t);
\path (b) ++ (0ex,-0.5ex) coordinate (b);
\path (l) ++ (-0.5ex,0ex) coordinate (l);
\path (r) ++ (0.5ex,0ex) coordinate (r);
\path let \p1 = (l), \p2 = (t) in coordinate (lt) at (\x1,\y2);
\path let \p1 = (l), \p2 = (b) in coordinate (lb) at (\x1,\y2);
\path let \p1 = (r), \p2 = (t) in coordinate (rt) at (\x1,\y2);
\path let \p1 = (r), \p2 = (b) in coordinate (rb) at (\x1,\y2);
\filldraw[draw=none, style=very nearly transparent, fill=gray!70!black] (lt) -- (rt) -- (rb) -- (lb) -- cycle;
\draw[draw=black] (rt) ++ (-1ex,0ex) -- (rt) -- (rb) -- ++ (-1ex,0ex);
\node[anchor=north,text width=0.4\textwidth] at ($(rb)$) {auxiliary data update step};
\path (t2) ++ (0ex,-0.5ex) coordinate (t2);
\path (b2) ++ (0ex,-0.5ex) coordinate (b2);
\path (l2) ++ (-0.5ex,0ex) coordinate (l2);
\path (r2) ++ (0.5ex,0ex) coordinate (r2);
\path let \p1 = (l2), \p2 = (t2) in coordinate (lt2) at (\x1,\y2);
\path let \p1 = (l2), \p2 = (b2) in coordinate (lb2) at (\x1,\y2);
\path let \p1 = (r2), \p2 = (t2) in coordinate (rt2) at (\x1,\y2);
\path let \p1 = (r2), \p2 = (b2) in coordinate (rb2) at (\x1,\y2);
\filldraw[draw=none, style=very nearly transparent, fill=gray!70!black] (lt2) -- (rt2) -- (rb2) -- (lb2) -- cycle;
\draw[draw=black] (rt2) ++ (-1ex,0ex) -- (rt2) -- (rb2) -- ++ (-1ex,0ex);
\node[anchor=base west,text width=0.12\textwidth] at ($(rt2)!0.5!(rb2)$) {vertex merging step};
\path (t3) ++ (0ex,-0.5ex) coordinate (t3);
\path (b3) ++ (0ex,-0.5ex) coordinate (b3);
\path (l3) ++ (-0.5ex,0ex) coordinate (l3);
\path (r3) ++ (0.5ex,0ex) coordinate (r3);
\path let \p1 = (l3), \p3 = (t3) in coordinate (lt3) at (\x1,\y3);
\path let \p1 = (l3), \p3 = (b3) in coordinate (lb3) at (\x1,\y3);
\path let \p1 = (r3), \p3 = (t3) in coordinate (rt3) at (\x1,\y3);
\path let \p1 = (r3), \p3 = (b3) in coordinate (rb3) at (\x1,\y3);
\filldraw[draw=none, style=very nearly transparent, fill=gray!70!black] (lt3) -- (rt3) -- (rb3) -- (lb3) -- cycle;
\path (t6) ++ (0ex,-0.5ex) coordinate (t6);
\path (b6) ++ (0ex,-0.5ex) coordinate (b6);
\path (l6) ++ (-0.5ex,0ex) coordinate (l6);
\path (r6) ++ (0.5ex,0ex) coordinate (r6);
\path let \p1 = (l6), \p3 = (t6) in coordinate (lt6) at (\x1,\y3);
\path let \p1 = (l6), \p3 = (b6) in coordinate (lb6) at (\x1,\y3);
\path let \p1 = (r6), \p3 = (t6) in coordinate (rt6) at (\x1,\y3);
\path let \p1 = (r6), \p3 = (b6) in coordinate (rb6) at (\x1,\y3);
\filldraw[draw=none, style=very nearly transparent, fill=gray!70!black] (lt6) -- (rt6) -- (rb6) -- (lb6) -- cycle;
\end{tikzpicture}
\caption{Computing the terminal \scc{}s in directed hypergraphs}\label{fig:maxscc}
\end{figure}
  
\paragraph{Vertex merging step} This step is performed from Lines~\lineref{scc:begin_node_merging} to~\lineref{scc:end_node_merging2}, when it is discovered that the vertex $U = \Find{u}$ is the root of a terminal \scc{} in the digraph $\graph(\HH_\cur)$. All vertices $V$ which have been collected in that \scc{} are merged to $U$ (Line~\lineref{scc:merge}). Let $\HH_\new$ be the resulting hypergraph. 

At Line~\lineref{scc:end_node_merging2}, the stack $F$ is expected to contain the new arcs of the directed graph $\graph(\HH_\new)$ leaving the newly ``big'' vertex $U$ (this point will be explained in the next paragraph). If $F$ is empty, the singleton $\{U\}$ constitutes a terminal \scc{} of $\graph(\HH_\new)$, hence also of $\HH_\new$ (Proposition~\ref{prop:terminal_scc}). Otherwise, we go back to Line~\lineref{scc:begin_edge_loop} to discover terminal \scc{}s from the new vertex $U$ in the digraph $\graph(\HH_\new)$. 

\paragraph{Discovering the new graph arcs} In this paragraph, we explain informally how the new graph arcs arising after a vertex merging step can be efficiently discovered without examining all the hyperarcs. The formal proof of this technique is provided in Appendix~\ref{sec:correctness_proof}.

During the execution of \Call{Visit}{$u$}, the local stack $F$ is used to collect the hyperarcs which represent arcs leaving the vertex $\Find{u}$ in $\graph(\HH_\cur)$. Initially, when \Call{Visit}{$u$} is called, the vertex \Call{Find}{$u$} is still equal to $u$. Then, the loop from Lines~\lineref{scc:begin_node_loop} to~\lineref{scc:end_node_loop} iterates over the set $A_u$ of the hyperarcs $a \in A$ such that $u \in T(a)$. At the end of the loop, it can be verified that $F$ is indeed filled with all the simple hyperarcs leaving $u = \Find{u}$ in $\HH_\cur$, as expected (see Line~\lineref{scc:simple_edge}).

The main difficulty is to collect in $F$ the arcs which are added to the digraph $\graph(\HH_\cur)$ after a vertex merging step. To this aim, each non-simple hyperarc $a \in A$ is provided with two auxiliary data: 
\begin{itemize}[\textbullet]
\item a vertex $r_a$, called the \emph{root} of the hyperarc $a$, which is defined as the first vertex of the tail $T(a)$ to be visited by a call to \Call{Visit}{},
\item a counter $c_a \geq 0$, which determines the number of vertices $x \in T(a)$ which have been visited and such that $\Call{Find}{x}$ is reachable from $\Call{Find}{r_a}$ in the current digraph $\graph(\HH_\cur)$.
\end{itemize}
These auxiliary data are maintained in the \emph{auxiliary data update step}, located from Lines~\lineref{scc:root_def} to~\lineref{scc:end_aux}. Initially, the root $r_a$ of any hyperarc $a$ is set to the special value $\Nil$. The first time a vertex $u$ such that $a \in A_u$ is visited, $r_a$ is assigned to $u$ (Line~\lineref{scc:root_def}). Besides, at the call to \Call{Visit}{$u$}, the counter $c_a$ of each non-simple hyperarc $a \in A_u$ is incremented, but only when $R_a = \Find{r_a}$ belongs to the stack $S$ (Line~\lineref{scc:counter_increment}). This is indeed a necessary and sufficient condition to the fact that $\Find{u}$ is reachable from $\Find{r_a}$ in the digraph $\graph(\HH_\cur)$ (see Invariant~\ref{inv:call_to_visit3} in Appendix~\ref{sec:correctness_proof}). 

It follows from these invariants that, when the counter $c_a$ reaches the threshold value $\card{T(a)}$, all the vertices $X = \Find{x}$, for $x \in T(a)$, are reachable from $R_a$ in the digraph $\graph(\HH_\cur)$. Now suppose that, later, it is discovered that $R_a$ belongs to a terminal strongly connected component $C$ of $\graph(\HH_\cur)$. Then the aforementioned vertices $X$ must all stand in the component $C$ (since it is terminal). Therefore, when the vertex merging step is applied on this \scc{}, the vertices $X$ are merged into a single vertex $U$. Hence, the hyperarc $a$ necessarily generates new simple arcs leaving $U$ in the new version of the digraph $\graph(\HH_\cur)$. Let us verify that in this case, $a$ is correctly placed into $F$ by our algorithm. As soon as $c_a$ reaches the value $\card{T(a)}$, the hyperarc $a$ is placed into a temporary stack $F_{R_a}$ associated to the vertex $R_a$ (Line~\lineref{scc:stack_edge}). This stack is then emptied into $F$ during the vertex merging step, at Lines~\lineref{scc:push_on_fprime1} or~\lineref{scc:push_on_fprime2}. 

\begin{example}
In the left side of Figure~\ref{fig:merging}, the execution of the loop from Lines~\lineref{scc:begin_node_loop} to~\lineref{scc:end_node_loop} during the call to \Call{Visit}{$v$} sets the root of the hyperarc $a_4$ to the vertex $v$, and $c_{a_4}$ to $1$. Then, during \Call{Visit}{$w$}, $c_{a_4}$ is incremented to $2 = \card{T(a_4)}$. The hyperarc $a_4$ is therefore pushed on the stack $F_{v}$ (because $R_{a_4} = \Find{r_{a_4}} = \Find{v} = v$). Once it is discovered that $u$, $v$, and $w$ form a terminal \scc{} of $\graph(\HH_\cur)$, $a_4$ is collected into $F$ during the merging step. It then allows to visit the vertices $x$ and $y$ from the new vertex (rightmost hypergraph). A fully detailed execution trace is provided in Appendix~\ref{sec:execution_trace} below.
\end{example}

\paragraph{Correctness and complexity} 

For the sake of simplicity, we have not included in \Call{TerminalScc}{} the step returning the terminal \scc{}s. However, they can be easily built by examining each vertex (hence in time $O(\card{\VV})$), as shown below:
\begin{theorem}\label{th:correctness}
Let $\HH = (\VV,A)$ be a directed hypergraph. After the execution of \Call{TerminalScc}{$\HH$}, the terminal strongly connected components of $\HH$ are precisely the sets $C_U = \{ v \in \VV \mid \Call{Find}{v} = U \text{ and } \ismax[U] = \True \}$.
\end{theorem}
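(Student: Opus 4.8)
The plan is to show that the algorithm of Figure~\ref{fig:maxscc} faithfully realises the three-step scheme of Section~\ref{subsec:maxscc_principle}, and then to conclude with Propositions~\ref{prop:terminal_scc} and~\ref{prop:collapse}. The first ingredient I would set up is the dictionary between the algorithm's state and the scheme's hypergraph $\HH_\cur$: at every moment of the execution, $\HH_\cur$ is the image $\Call{Find}{}(\HH)$, the union--find structure encoding the current partition of $\VV$ into the vertices of $\HH_\cur$. A call $\Call{Merge}{U,V}$ occurs only inside the vertex merging step (Lines~\lineref{scc:begin_node_merging}--\lineref{scc:end_node_merging2}), and there $U$ and $V$ are consecutive elements of the stack $S$ lying above the root detected at Line~\lineref{scc:root}; since that portion of $S$ is a strongly connected component of $\graph(\HH_\cur)$ (the standard Tarjan invariant, restated for $\graph(\HH_\cur)$), the two merged vertices always lie in a common \scc{} of $\HH_\cur$. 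Hence Proposition~\ref{prop:collapse} applies to every merge, and an induction on the number of merges produces a canonical bijection between the terminal \scc{}s of the running $\HH_\cur$ and those of $\HH$, under which a terminal \scc{} of $\HH_\cur$ reduced to a vertex $U$ corresponds to the set $\{v\in\VV\mid\Call{Find}{v}=U\}$.

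The technical heart is a package of loop invariants showing that the unshaded part of the code performs a correct Tarjan-style exploration of the \emph{dynamically evolving} digraph $\graph(\HH_\cur)$. I would prove, by induction on the progress of the algorithm, that: (a) $S$ always lists exactly the currently active vertices of $\graph(\HH_\cur)$ in order of first visit; (b) $\tindex$ and $\troot$ obey the usual Tarjan relations relative to $\graph(\HH_\cur)$, so that the test $\troot[U]=\tindex[U]$ at Line~\lineref{scc:root} fires precisely at the roots of \scc{}s; (c) $\incomp$ is exactly the union of the already-finished \scc{}s; and (d) at the moment a root $U$ is recognised, $\ismax[U]=\True$ iff the \scc{} rooted at $U$ is terminal in the then-current $\graph(\HH_\cur)$ --- this uses that $\ismax$ is cleared at Lines~\lineref{scc:not_max1} and~\lineref{scc:not_max2} exactly when a graph arc toward another, or toward a non-terminal, component is traversed, and that $\ismax$ is never reset once cleared. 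The delicate point is that the graph changes when a merge is performed and control jumps back to Line~\lineref{scc:begin_edge_loop}: I would argue that after the merging loop and the re-push at Line~\lineref{scc:index_redef}, invariants (a)--(d) hold again for $\graph(\HH_\new)$, the merged vertex $U$ keeping index $i$ and legitimately inheriting $\ismax[U]=\True$, since every graph arc leaving the \scc{} just merged has, up to that point, been found to stay inside it.

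For (a)--(d) to survive a merge one must know that, on re-entry at Line~\lineref{scc:begin_edge_loop}, the stack $F$ holds precisely the not-yet-processed simple hyperarcs of $\HH_\new$ leaving $U=\Call{Find}{u}$. Here I would invoke the bookkeeping formalised in Appendix~\ref{sec:correctness_proof}: by Invariant~\ref{inv:call_to_visit3} the counter $c_a$ equals the number of $x\in T(a)$ with $\Call{Find}{x}$ reachable from $R_a=\Call{Find}{r_a}$ in $\graph(\HH_\cur)$, so $c_a=\card{T(a)}$ means all of $T(a)$ is reachable from $R_a$; if $R_a$ subsequently turns out to lie in a terminal \scc{} $C$, then, $C$ being terminal, all of $T(a)$ lies in $C$ and is merged into one vertex $U$, making $a$ a genuine new simple arc out of $U$ --- and $a$ was parked on $F_{R_a}$ at exactly the step where $c_a$ reached $\card{T(a)}$ (Line~\lineref{scc:stack_edge}), hence is flushed into $F$ at Line~\lineref{scc:push_on_fprime1} or~\lineref{scc:push_on_fprime2}. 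Conversely nothing else is ever pushed onto any $F_a$, and the simple hyperarcs of $\HH_\cur$ leaving $\Call{Find}{u}$ that have already been collected but not yet explored are exactly those placed on $F$ at Line~\lineref{scc:simple_edge} together with those flushed from the $F_a$'s; so $F$ is precisely the out-neighbourhood of $U$ in $\graph(\HH_\new)$, with no spurious arcs, which is what makes the exploration restarted at Line~\lineref{scc:begin_edge_loop} correct.

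Granting all this, the conclusion is immediate. When \Call{TerminalScc}{} halts, every vertex of $\VV$ has been visited and no merge is possible, so every terminal \scc{} of $\graph(\HH_\cur)$ is a singleton, and by invariant (d) a vertex $U$ is the root of a terminal \scc{} of $\graph(\HH_\cur)$ iff $\ismax[U]=\True$. Proposition~\ref{prop:terminal_scc} then identifies the terminal \scc{}s of $\HH_\cur$ with exactly those singletons, and the bijection of the first step transports them to the terminal \scc{}s of $\HH$ as the sets $C_U=\{v\in\VV\mid\Call{Find}{v}=U\text{ and }\ismax[U]=\True\}$, which is the claim. The main obstacle, as flagged above, is the second and third steps: proving that the Tarjan invariants and the $F_a$ bookkeeping dovetail through each vertex merging step and the ensuing control jump; once that is in place, the rest reduces to direct appeals to Propositions~\ref{prop:terminal_scc} and~\ref{prop:collapse} together with routine accounting.
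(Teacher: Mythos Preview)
Your plan is sound in outline and hits the right ingredients --- Propositions~\ref{prop:terminal_scc} and~\ref{prop:collapse}, Tarjan-style invariants on the evolving $\graph(\HH_\cur)$, and the counter bookkeeping for $c_a$ --- but it takes a genuinely different route from the paper. You attempt a \emph{direct} correctness proof of \Call{TerminalScc}{}, carrying all invariants (stack discipline, $\troot/\tindex$, $\ismax$, \emph{and} the $r_a,c_a,F_V$ machinery) simultaneously through each vertex merging step and the ensuing control jump. The paper explicitly avoids this, remarking that the invariants become too intricate, and instead factors through two intermediary algorithms. First, \Call{TerminalScc2}{} drops the auxiliary data entirely and recomputes the new simple arcs by brute force at each merge; its correctness (Proposition~\ref{prop:maxscc2}) is proved by induction on the number of merge steps, each step reducing to a fresh run on a smaller hypergraph. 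Second, \Call{TerminalScc3}{} computes the auxiliary data but then overwrites $F$ with the brute-force value; Proposition~\ref{prop:visit3} shows this overwrite is a no-op, so the auxiliary bookkeeping already produces the right $F$. Stripping the dead code from \Call{TerminalScc3}{} yields \Call{TerminalScc}{}. What this buys is a clean separation: the Tarjan-on-a-moving-graph argument is isolated in Proposition~\ref{prop:maxscc2} where there is no $c_a$ to track, and the $c_a$ argument is isolated in Proposition~\ref{prop:visit3} where the Tarjan correctness is already known. Your direct approach would work in principle, but the step you flag as ``the main obstacle'' --- showing that invariants (a)--(d) and the $F$-completeness claim survive each merge-and-jump --- is precisely the part the paper's decomposition is designed to tame, and your sketch does not yet supply the bookkeeping (in particular the completeness direction: every non-simple $a$ with $\Call{Find}{x}=U$ for all $x\in T(a)$ really was parked on some $F_V$ that gets flushed). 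One small correction: the invariant you cite for the meaning of $c_a$ is Invariant~\ref{inv:ce}, not Invariant~\ref{inv:call_to_visit3}; the latter is only the stack-reachability lemma used in its proof.
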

The proof of Theorem~\ref{th:correctness}, which is too long to be included here, is provided in Appendix~\ref{sec:correctness_proof}. It relies on successive transformations of intermediary algorithms to \Call{TerminalScc}{}.

When using disjoint-set forests with union by rank and path compression as union-find structure (see~\cite[Chapter 21]{Cormen01}), the time complexity of any sequence of $p$ operations \Call{MakeSet}{}, \Call{Find}{}, or \Call{Merge}{} is known to be $O(p \times \alpha(\card{\VV}))$, where $\alpha$ is the very slowly growing inverse of the Ackermann function. The following result states that the algorithm \Call{TerminalScc}{} is also almost linear time:
\begin{theorem}\label{th:complexity}
Let $\HH = (\VV,A)$ be a directed hypergraph. Then the algorithm \Call{TerminalScc}{$\HH$} terminates in time $O(\size(\HH) \times \alpha(\card{\VV}))$, and has linear space complexity. 
\end{theorem}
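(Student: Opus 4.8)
The plan is to bound the running time of \Call{TerminalScc}{} by an amortized argument, showing that over the whole execution the algorithm performs $O(\size(\HH))$ elementary steps together with $O(\size(\HH))$ calls to the union-find primitives \Call{MakeSet}{}, \Call{Find}{} and \Call{Merge}{}; the quoted $O(p\,\alpha(\card{\VV}))$ bound for a sequence of $p$ such primitives on $\card{\VV}$ elements then yields the announced time complexity, and the linear space bound will drop out of the same accounting.

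First I would establish two counting facts. (a) \Call{Visit}{} is executed at most once per original vertex $u\in\VV$: the entries of the array $\tindex$ are only ever overwritten with non-$\Nil$ values (Lines~\lineref{scc:troot_def} and~\lineref{scc:index_redef}), and any vertex popped from $S$ during a merging step has already received an index; hence the guard $\tindex[\cdot]=\Nil$ controlling the recursive call to \Call{Visit}{} and the analogous guard in the main loop of \Call{TerminalScc}{} can be satisfied at most once for a given vertex. (b) Each call to \Call{Merge}{} strictly decreases the number of vertices of $\HH_\cur$, so there are at most $\card{\VV}-1$ of them, which in turn bounds the number of executions of the vertex merging step (Lines~\lineref{scc:begin_node_merging}--\lineref{scc:end_node_merging2}) and of the pushes onto $S$. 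From (a), the loop of Lines~\lineref{scc:begin_node_loop}--\lineref{scc:end_node_loop} is entered, in total, $\sum_{u\in\VV}\card{A_u}=\sum_{a\in A}\card{T(a)}\le\size(\HH)$ times, and each of its iterations costs $O(1)$ plus $O(1)$ \Call{Find}{} operations (a constant-time ``on-stack'' flag being used to test whether $R_a$ appears in $S$). Combined with the initialisation phase, which costs $O(\card{\VV}+\card{A})$ plus $\card{\VV}$ calls to \Call{MakeSet}{}, this accounts for all of \Call{Visit}{} up to the edge-processing loop.

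The crux is the edge-processing loop of Lines~\lineref{scc:begin_edge_loop}--\lineref{scc:end_edge_loop}, whose total cost is not immediately obvious because of the backward jump at Line~\lineref{scc:goto}. Here I would invoke two elementary monotonicity properties of the auxiliary data (properties that also underlie the correctness proof of Appendix~\ref{sec:correctness_proof}, hence Theorem~\ref{th:correctness}): for a non-simple hyperarc $a$, the root $r_a$ is assigned exactly once, and the counter $c_a$ is non-decreasing with $0\le c_a\le\card{T(a)}$, so the test at Line~\lineref{scc:counter_reach} succeeds at most once and $a$ is placed into a single temporary stack $F_X$, at most once; a simple hyperarc is instead pushed onto a local stack $F$ only at Line~\lineref{scc:simple_edge}, during the unique call to \Call{Visit}{} on its tail, and never enters any $F_X$. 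Since the stacks $F_X$ are emptied only into some local $F$ (Lines~\lineref{scc:push_on_fprime1} and~\lineref{scc:push_on_fprime2}), and since $F$ is empty whenever a call to \Call{Visit}{} returns, it follows that each hyperarc is pushed onto a local $F$ at most once over the whole run, hence popped from $F$ at most once. Therefore the inner loop of Lines~\lineref{scc:begin_edge_loop2}--\lineref{scc:end_edge_loop2} is executed at most $\sum_{a\in A}\card{H(a)}\le\size(\HH)$ times in total, each iteration costing $O(1)$ plus one \Call{Find}{}, the recursive \Call{Visit}{} calls it may trigger being already accounted for by (a). Summing all contributions gives $O(\size(\HH))$ elementary steps and $O(\size(\HH))$ union-find operations, which is the time bound; the $O(\card{\VV})$ post-processing extracting the sets $C_U$ is negligible.

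For the space bound: the arrays $\tindex$, $\troot$, $\ismax$, the on-stack flags, and the set $\incomp$ take $O(\card{\VV})$ space; the per-hyperarc fields $r_a,c_a$ take $O(\card{A})$ space; and the union-find forest takes $O(\card{\VV})$ space. By the analysis above, at any instant each hyperarc lies in at most one of the stacks $F_X$ or $F$ (across all active \Call{Visit}{} frames), so these stacks together use $O(\card{A})$ space, while $S$ holds $O(\card{\VV})$ vertices; finally the recursion depth of \Call{Visit}{} is at most $\card{\VV}$ by (a), with $O(1)$ data per frame besides its local $F$. Hence the space is $O(\size(\HH))$. I expect the main obstacle to be precisely the analysis of the edge-processing loop: showing that the jump at Line~\lineref{scc:goto} cannot cause a hyperarc to be processed twice, which is exactly the point where one must lean on the monotonicity of the counters $c_a$ and on the fact that each temporary stack $F_X$ receives a given hyperarc at most once.
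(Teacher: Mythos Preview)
Your argument is correct and follows essentially the same amortized accounting as the paper: both proofs hinge on the fact that every hyperarc is pushed onto (hence popped from) a local stack $F$ at most once over the whole run, which simultaneously bounds the edge-processing loop, the number of \textbf{goto}s, and the total number of \Call{Find}{} calls by $O(\size(\HH))$. One small slip worth fixing: in (b) you assert that the number of \Call{Merge}{} calls bounds the number of executions of the vertex merging block, but that block is also entered whenever a terminal \scc{} of $\graph(\HH_\cur)$ turns out to be a singleton (no merge occurs); the paper instead bounds those entries by $O(\card{\VV}+\card{A})$, observing that each one follows either a fresh call to \Call{Visit}{} or a \textbf{goto}, and each \textbf{goto} forces at least one hyperarc to be popped from $F$---a fact you have already established.
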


\begin{proof}
The analysis of the time complexity \Call{TerminalScc}{} depends on the kind of the instructions. We distinguish:
\begin{inparaenum}[(i)]
\item the operations on the global stacks $F_u$ and on the local stacks $F$,
\item the call to the functions \Call{Find}{}, \Call{Merge}{}, and \Call{MakeSet}{},
\item and the other operations, referred to as \emph{usual operations} (by extension, their time complexity will be referred to as \emph{usual complexity}). 
\end{inparaenum}
The complexity of each kind of operations is respectively described in the following three paragraphs.

Each operation on a stack (pop or push) is performed in $O(1)$. A hyperarc $a$ is pushed on a stack of the form $F_u$ at most once during the whole execution of \Call{TerminalScc}{} (when counter $c_a$ reaches the value $\card{T(a)}$). Once it is popped from it, it will never be pushed on a stack of the form $F_v$ again. Similarly, a hyperarc is pushed on a local stack $F$ at most once, and after it is popped from it, it will never be pushed on any local stack $F'$ in the following states. Therefore, the total number of stack operations on the local and global stacks $F$ and $F_u$ is bounded by $4 \card{A}$. It follows that the corresponding complexity is bounded by $O(\size(\HH))$. The same argument proves that the total number of iterations of the loop from Lines~\lineref{scc:begin_edge_loop2} to~\lineref{scc:end_edge_loop2} occurring in a complete execution of \Call{TerminalScc}{} is bounded by $\sum_{a \in A} \card{H(a)}$.

During the execution of \Call{TerminalScc}{}, the function \Find{} is called exactly $\card{\VV}$ times at Line~\lineref{scc:find1}, at most $\sum_{u \in \VV} \card{A_u} = \sum_{a \in A} \card{T(a)}$ times at Line~\lineref{scc:find2}, and at most $\sum_{a \in A} \card{H(a)}$ times at Line~\lineref{scc:find3} (see above). Hence it is called at most $\size(\HH)$ times. The function \Call{Merge}{} is always called to merge two distinct vertices. Let $C_1,\dots,C_p$ ($p \leq \card{\VV}$) be the equivalence classes formed by the elements of $\VV$ at the end of the execution of \Call{TerminalScc}{}. Then \Call{Merge}{} is called at most $\sum_{i = 1}^p (\card{C_i}-1)$. Since $\sum_i \card{C_i} = \card{\VV}$, \Call{Merge}{} is executed at most $\card{\VV}-1$ times. Finally, \Call{MakeSet}{} is called exactly $\card{\VV}$ times. It follows that the total time complexity of the operations \Call{MakeSet}{}, \Find{} and \Call{Merge}{} is $O(\size(\HH) \times \alpha(\card{\VV}))$.

The analysis of the usual operations is split into several parts:
\begin{itemize}
\item the usual complexity of \Call{TerminalScc}{} without the calls to the function \Call{Visit}{} is clearly $O(\card{\VV} + \card{A})$.
\item during the execution of \Call{Visit}{$u$}, the usual complexity of the block from Lines~\lineref{scc:begin} to~\lineref{scc:end_node_loop} is $O(1) + O(\card{A_u})$. Indeed, we suppose that the test at Line~\lineref{scc:root_reach} can be performed in $O(1)$ by assuming that the stack $S$ is provided with an auxiliary array of booleans which determines, for each element of $\VV$, whether it is stored in $S$ (obviously, the push and pop operations are still in $O(1)$ under this assumption). Then the total usual complexity between Lines~\lineref{scc:begin} and~\lineref{scc:end_node_loop} is $O(\size(\HH))$ for a complete execution of \Call{TerminalScc}{}.
\item the usual complexity of the loop body from Lines~\lineref{scc:begin_edge_loop2} to~\lineref{scc:end_edge_loop2}, without the recursive calls to \Call{Visit}{}, is clearly $O(1)$ (the membership test at Line~\lineref{scc:membership} is supposed to be in $O(1)$, encoding the set $\incomp$ as an array of $\card{\VV}$ booleans). This inner loop is iterated $\card{H(a)}$ times during each iteration of the outer loop from Lines~\lineref{scc:begin_edge_loop} to~\lineref{scc:end_edge_loop}. Since a hyperarc is placed in a local stack $F$ at most once, the total usual complexity of the loop from Lines~\lineref{scc:begin_edge_loop} to~\lineref{scc:end_edge_loop} (without the recursive calls to \Call{Visit}{}) is bounded by $O(\size(\HH))$.
\item the usual complexity of the loop between Lines~\lineref{scc:begin_node_merging_loop} and~\lineref{scc:end_node_merging_loop} for a complete execution of \Call{TerminalScc}{} is $O(\card{\VV})$, since in total, it is iterated exactly the number of times the function \Call{Merge}{} is called.
\item the usual complexity of the loop between Lines~\lineref{scc:begin_non_max_scc_loop} and~\lineref{scc:end_non_max_scc_loop} for a whole execution of \Call{TerminalScc}{} is $O(\card{\VV})$, because a given element is placed at most once into the set $\incomp$ (adding an element in $\incomp$ is in $O(1)$).
\item if the previous two loops are not considered, less than $10$ usual operations are executed in the block from Lines~\lineref{scc:begin2} to~\lineref{scc:end}, all of complexity $O(1)$. The execution of this block either follows a call to \Call{Visit}{} or the execution of the goto statement (at Line~\lineref{scc:goto}). The latter is executed only if the stack $F$ is not empty. Since each hyperarc can be pushed on a local stack $F$ and then popped from it only once, it happens at most $\card{A}$ times during the whole execution of \Call{TerminalScc}{}. It follows that the usual complexity of the block from Lines~\lineref{scc:begin2} to~\lineref{scc:end} is $O(\card{\VV} + \card{A})$ in total (excluding the loops previously discussed).
\end{itemize}

Summing all the complexities above proves that the time complexity of \Call{TerminalScc}{} is $O(\size(\HH) \times \alpha(\card{\VV}))$. The space complexity is obviously linear in $\size(\HH)$.
\end{proof}

An implementation is provided in the library~\tplib{}~\cite{tplib}, in the module \texttt{Hypergraph}.\footnote{The module can be used independently of the rest of the library. Note that in the source code, terminal \scc{}s are referred to as \emph{maximal} \scc{}s.}

\begin{remark}
The algorithm \Call{TerminalScc}{} is not able to determine all strongly connected components in directed hypergraphs. Consider the following example:
\begin{center}
\begin{small}
\begin{tikzpicture}[>=stealth',scale=0.8,vertex/.style={circle,draw=black,very thick,minimum size=2ex}, hyperedge/.style={draw=black,thick}, simpleedge/.style={draw=black,thick}]
\node [vertex] (u) at (0, 0) {$u$};
\node [vertex] (t) at (0, -1.5) {$w$};

\node [vertex] (x) at (2.3, -0.75) {$v$};

\path[->] (u) edge[simpleedge,out=-110,in=110] (t);

\path[->] (x) edge[simpleedge,out=120,in=0] (u);

\hyperedge[0.6]{u,t}{x};
\node at (1,-1) {$a$};
\end{tikzpicture}
\end{small}
\end{center}
Our algorithm determines the unique terminal \scc{}, which is reduced to the vertex $w$. However, the non-terminal \scc{} formed by $u$ and $v$ is not discovered. Indeed, the non-simple hyperarc $a$, which allows to reach $v$ from $u$, cannot be transformed into a simple arc, since $u$ and $v$ do not belong to a same \scc{} of the underlying digraph.
\end{remark}

\subsection{Determining other properties in almost linear time, and applications}\label{subsec:other_properties}

Some properties can be directly determined from the terminal \scc{}s. Indeed, a directed hypergraph $\HH$ admits a sink (\ie\ a vertex reachable from all vertices) if, and only if, it contains a unique terminal \scc{}. Besides, strong connectivity amounts to the existence of a terminal \scc{} containing all the vertices.
\begin{corollary}\label{cor:strongly_connectivity_and_sink}
Given a directed hypergraph $\HH$, the following problems can be solved in almost linear time in $\size(\HH)$: 
\begin{inparaenum}[(i)]
\item\label{item:i} is there a sink in $\HH$?	
\item\label{item:ii} is $\HH$ strongly connected?
\end{inparaenum}
\end{corollary}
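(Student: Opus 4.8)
The plan is to reduce both decision problems to inspecting the terminal \scc{}s produced by \Call{TerminalScc}{$\HH$}, relying on the two combinatorial characterizations stated just before the corollary. So I would first prove: \emph{(a)} $\HH$ admits a sink if and only if it has exactly one terminal \scc{}; and \emph{(b)} $\HH$ is strongly connected if and only if it has a terminal \scc{} equal to the whole vertex set $\VV$.

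Characterization~\emph{(b)} is immediate: $\HH$ is strongly connected precisely when $\VV$ is a single strongly connected component, and such a component is vacuously terminal. For~\emph{(a)}, the preliminary observation is that $\reach_\HH$ is a preorder — reflexivity holds by definition, and transitivity follows by a straightforward induction on the recursive definition of reachability (if $u \reach_\HH v$ and $v \reach_\HH w$, induct on the derivation of $v \reach_\HH w$). Consequently the relation $C \preceq C'$ on \scc{}s, defined by requiring that some (equivalently, every) vertex of $C$ reach some vertex of $C'$, is a partial order on a finite set, whose maximal elements are exactly the terminal \scc{}s; since the poset is finite, every \scc{} lies below a maximal one, so every vertex of $\HH$ reaches at least one terminal \scc{}. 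The ``$\Leftarrow$'' direction of~\emph{(a)} then follows: if $C$ is the unique terminal \scc{}, every vertex reaches $C$, hence any vertex of $C$ is a sink. For ``$\Rightarrow$'', if $v$ is a sink one checks that its \scc{} $C$ is terminal — if $u \in C$ and $u \reach_\HH w$, then $w \reach_\HH v \reach_\HH u$ using the sink property and transitivity, so $w \reacheq_\HH u$ and $w \in C$ — and that any terminal \scc{} $C'$ must contain $v$ (since $v$ is reachable from every vertex, in particular from a vertex of $C'$), whence $C' = C$.

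It remains to make these tests run in almost linear time. I would execute \Call{TerminalScc}{$\HH$}, which costs $O(\size(\HH)\,\alpha(\card{\VV}))$ by Theorem~\ref{th:complexity}. By Theorem~\ref{th:correctness}, the terminal \scc{}s are exactly the nonempty sets $C_U = \{ v \in \VV \mid \Call{Find}{v} = U \text{ and } \ismax[U] = \True \}$. Hence a single pass over $\VV$, computing $\Call{Find}{v}$ and reading $\ismax$ for each $v$, lets one (i) count the distinct representatives $U$ that actually occur and satisfy $\ismax[U] = \True$ — there is a sink if and only if this count equals $1$ — and (ii) check whether all vertices have the same representative $U$ with $\ismax[U] = \True$ — which holds if and only if $\HH$ is strongly connected. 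This post-processing performs $O(\card{\VV})$ calls to \Call{Find}{} and $O(\card{\VV})$ usual operations, i.e.\ $O(\card{\VV}\,\alpha(\card{\VV}))$ time; since $\size(\HH) \geq \card{\VV}$, the total running time is $O(\size(\HH)\,\alpha(\card{\VV}))$, which is almost linear in $\size(\HH)$.

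The only genuinely delicate point is characterization~\emph{(a)}, and within it the claim that every vertex reaches some terminal \scc{} — this is exactly where transitivity of $\reach_\HH$ together with finiteness of $\VV$ is needed; once that is in place, everything else is routine bookkeeping layered on top of the already-established Theorems~\ref{th:correctness} and~\ref{th:complexity}.
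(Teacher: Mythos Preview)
Your proposal is correct and follows exactly the approach the paper takes: it relies on the two characterizations stated immediately before the corollary (sink $\Leftrightarrow$ unique terminal \scc{}, strong connectivity $\Leftrightarrow$ a terminal \scc{} equal to $\VV$) together with Theorems~\ref{th:correctness} and~\ref{th:complexity}. The paper in fact leaves the corollary without a formal proof, merely asserting the two characterizations; you have filled in the details (transitivity of $\reach_\HH$, the finite-poset argument showing every vertex reaches a terminal \scc{}, and the $O(\card{\VV}\,\alpha(\card{\VV}))$ post-processing) that the paper omits.
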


We now discuss some applications of these results.

\paragraph{Tropical geometry}

Tropical polyhedra are the analogues of convex polyhedra in \emph{tropical algebra}, \ie\ the semiring $\mathbb{R} \cup \{-\infty\}$ endowed with the operations $\max$ and $+$ as addition and multiplication. A \emph{tropical polyhedron} is the set of the solutions $x \in (\mathbb{R} \cup \{-\infty\})^n$ of finitely many tropical affine inequalities, of the form:
\[
\max(a_0, a_1 + x_1, \dots, a_n + x_n) \leq \max(b_0, b_1 + x_1, \dots, b_n + x_n) \enspace, 
\]
where $a_i, b_i \in \mathbb{R} \cup \{-\infty\}$. 

Analogously to classical convex polyhedra, any tropical polyhedron can be equivalently expressed as the convex hull (in the tropical sense) of a set of vertices and extreme rays. This yields the problem of computing the vertices of a tropical polyhedron, which can be seen as the ``tropical counterpart'' of the well-studied vertex enumeration problem in computational geometry. This problem has various applications in computer science and control theory, among others, in the analysis of discrete event systems~\cite{katz05}, software verification~\cite{AllamigeonGaubertGoubaultSAS08}, and verification of real-time systems~\cite{LuJLAP2011}.

Directed hypergraphs and their terminal \scc{}s arise in the characterization of the vertices of a tropical polyhedron. In a joint work of the author with Gaubert and Goubault~\cite{AllamigeonGaubertGoubaultDCG2013}, it has been proved that a point $x \in (\mathbb{R} \cup \{-\infty\})^n$ of a tropical polyhedron $\mathcal{P}$ is a vertex if, and only if, a certain directed hypergraph associated to $x$ and built from the inequalities defining $\mathcal{P}$, admits a sink. 

This combinatorial criterion plays a crucial role in the tropical vertex enumeration problem. It is indeed involved in an algorithm called \emph{tropical double description method}~\cite{AllamigeonGaubertGoubaultDCG2013,AllamigeonGaubertGoubaultSTACS10}, in order to eliminate points which are not vertices, among a set of candidates. This set can be very large (exponential in the dimension $n$), so the efficiency of the elimination step is critical. The almost linear time algorithm \Call{TerminalScc}{} consequently leads to a significant improvement over the state-of-the-art, both in theory and in practice (see~\cite[Section~6]{AllamigeonGaubertGoubaultDCG2013}). It also allows to show the surprising result that it is easier to determine whether a point is a vertex in a tropical polyhedron than in a classical one (if $p$ is the number of inequalities defining the polyhedron, the latter problem can be solved in $O(n^2 p)$ while the former in $O(n p \alpha(n))$). 

\paragraph{Nonlinear spectral theory}

Problem~\eqref{item:ii} appears in a generalization of the Perron-Frobenius theorem to homogeneous and monotone functions studied by Gaubert and Gunawardena in~\cite{GaubertGunawardena04}. Recall that a function $f : (\mathbb{R}^*_+)^n \mapsto (\mathbb{R}^*_+)^n$ is said to be \emph{monotone} when $f(x) \leq f(y)$ for any $x, y \in (\mathbb{R}^*_+)^n$ such that $x \leq y$ (the relation $\leq$ being understood entrywise), and that it is \emph{homogeneous} when $f(\lambda x) = \lambda f(x)$ for all $\lambda \in \mathbb{R}^*_+$ and $x \in (\mathbb{R}^*_+)^n$. A central problem is to give conditions under which $f$ admits an eigenvector in the cone $(\mathbb{R}^*_+)^n$, \ie\ a vector $x \in (\mathbb{R}^*_+)^n$ such that $f(x) = \lambda x$ for some $\lambda > 0$. Gaubert and Gunawardena establish a sufficient combinatorial condition~\cite[Theorem~6]{GaubertGunawardena04} 
expressed as the strong connectivity of a directed graph obtained as the limit of a sequence of graphs. This sequence is identical to the one arising during the execution of the method sketched in Section~\ref{subsec:maxscc_principle}. It follows that the sufficient condition is equivalent to the strong connectivity of a directed hypergraph $\HH(f)$ constructed from $f$. The hypergraph $\HH(f)$ consists of the vertices $1, \dots, n$ and the hyperarcs $(I, \{j\})$ such that $\lim_{\mu \rightarrow +\infty} f_j(\mu e_I) = +\infty$ ($e_I$ denotes the vector whose $i$-th entry is equal to $1$ when $i \in I$, and $0$ otherwise).

\paragraph{Horn propositional logic} As mentioned in Section~\ref{sec:introduction}, directed hypergraphs can be used to encode Horn formulas. Recall that a \emph{Horn formula} $F$ over the propositional variables $X_1, \dots, X_n$ is a conjunction of \emph{Horn clauses}, \ie\ 
\begin{inparaenum}[(i)]
\item either an implication $X_{i_1} \wedge \dots \wedge X_{i_p} \Rightarrow X_i$,
\item or a fact $X_i$,
\item or a goal $\neg X_{i_1} \vee \dots \vee \neg X_{i_p}$.
\end{inparaenum}
Given a propositional formula $F$, an assignment $\sigma : \{ X_1, \dots, X_n \} \rightarrow \{\True, \False\}$ is a \emph{model} of $F$ if replacing each $X_i$ by its associated truth value $\sigma(X_i)$ yields a true assertion. If $F_1, F_2$ are two propositional formulas, $F_1$ is said to \emph{entail} $F_2$, which is denoted by $F_1 \models F_2$, if every model of $F_1$ is a model of $F_2$. 

A directed hypergraph $\HH(F)$ can be associated to any Horn formula $F$ so as to decide entailment of implications over its variables. We use the construction developed by Ausiello and Italiano~\cite{Ausiello91}. The hypergraph $\HH(F)$ consists of the vertices $\vtrue, \vfalse, 1, \dots, n$ and the following hyperarcs: $(\{i_1, \dots, i_p\}, \{i\})$ for every implication $X_{i_1} \wedge \dots \wedge X_{i_p} \Rightarrow X_i$ in $F$, $(\{\vtrue\}, \{i\})$ for every fact $X_i$, $(\{i_1, \dots, i_p\}, \{\vfalse\})$ for every goal $\neg X_{i_1} \vee \dots \vee \neg X_{i_p}$, $(\{\vfalse\}, \{1, \dots, n\})$, and $(\{i\},\{\vtrue\})$ for all $i = 1, \dots, n$.
Observe that the size of $\HH(F)$ is linear in the size of the formula $F$, \ie\ the number of atoms in its clauses (without loss of generality, it is assumed that every variable occurs in $F$). 
\begin{lemma}\label{lemma:horn}
Let $F$ be a Horn formula over the variables $X_1, \dots, X_n$. Then $F \models X_i \Rightarrow X_j$ if, and only if, $j$ is reachable from $i$ in the directed hypergraph $\HH(F)$.
\end{lemma}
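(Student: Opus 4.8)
The plan is to express both sides of the stated equivalence in terms of one and the same set of variables, the least model of the Horn formula $F \wedge X_i$ obtained by adjoining to $F$ the fact $X_i$. First I would rephrase the entailment: from the semantics of implication, $F \models X_i \Rightarrow X_j$ holds if, and only if, every model of $F \wedge X_i$ assigns $\True$ to $X_j$, that is, if, and only if, either $F \wedge X_i$ is unsatisfiable, or $X_j$ holds in its least model. I would make this concrete by defining $M_i \subseteq \{1,\dots,n\}$ to be the smallest set of (indices of) variables that contains $i$, contains every variable occurring as a fact of $F$, and is closed under the implications of $F$: whenever $X_{k_1} \wedge \dots \wedge X_{k_p} \Rightarrow X_k$ is a clause of $F$ and $k_1,\dots,k_p \in M_i$, then $k \in M_i$. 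A routine induction shows that $\{k \mid \sigma(X_k) = \True\} \supseteq M_i$ for every model $\sigma$ of $F \wedge X_i$; and the assignment $\sigma_{M_i}$ that sends $X_k$ to $\True$ exactly when $k \in M_i$ is a model of $F \wedge X_i$ precisely when no goal clause $\neg X_{k_1} \vee \dots \vee \neg X_{k_p}$ of $F$ has all of $k_1,\dots,k_p$ in $M_i$; in the complementary case $F \wedge X_i$ has no model at all. Hence $F \models X_i \Rightarrow X_j$ is equivalent to the disjunction ``some goal of $F$ has its body contained in $M_i$, or $j \in M_i$''.

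Next I would identify the set $R$ of vertices of $\HH(F)$ reachable from $i$. Since $i \in R$ and $R$ is closed under the hyperarcs of $\HH(F)$, the arc $(\{i\},\{\vtrue\})$, the arcs $(\{\vtrue\},\{k\})$ attached to the facts, and the implication hyperarcs together force $M_i \cup \{\vtrue\} \subseteq R$. I would then split along the same dichotomy. If no goal of $F$ has its body in $M_i$, then $R = M_i \cup \{\vtrue\}$: this set contains $i$, and checking the five families of hyperarcs of $\HH(F)$ shows it is closed under all of them --- the goal hyperarcs and the hyperarc $(\{\vfalse\},\{1,\dots,n\})$ are vacuous here because their tails are not contained in $M_i \cup \{\vtrue\}$ --- so $R$, being the least vertex set containing $i$ and closed under the hyperarcs, is included in it. If instead some goal $\neg X_{k_1} \vee \dots \vee \neg X_{k_p}$ has $k_1,\dots,k_p \in M_i \subseteq R$, then $\vfalse \in R$ via the corresponding goal hyperarc, and then $(\{\vfalse\},\{1,\dots,n\})$ forces every vertex into $R$.

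Combining the two computations concludes the argument: when no goal has its body in $M_i$ we get, for $j \in \{1,\dots,n\}$, that $j \in R$ iff $j \in M_i$ iff $F \models X_i \Rightarrow X_j$; when some goal has its body in $M_i$, both $j \in R$ and $F \models X_i \Rightarrow X_j$ hold unconditionally, the latter because $F \wedge X_i$ is unsatisfiable. I expect the delicate point to be the handling of this unsatisfiable case: one must check that ``every vertex is reachable from $i$ in $\HH(F)$'' matches exactly ``$X_i \Rightarrow X_j$ is vacuously entailed'', which is precisely what the auxiliary vertices $\vtrue,\vfalse$ and the hyperarcs $(\{i\},\{\vtrue\})$ and $(\{\vfalse\},\{1,\dots,n\})$ are designed to ensure; the remainder is the standard least-model characterization of Horn satisfiability, made self-contained by exhibiting $M_i$ explicitly.
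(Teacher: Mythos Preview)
Your argument is correct, but it is organized differently from the paper. The paper proves the two directions separately and directly: for ``reachable implies entailed'' it does a structural induction on the reachability relation, handling each kind of hyperarc (implication, fact, goal via $\vfalse$) in turn; for ``entailed implies reachable'' it takes the set $R$ of vertices reachable from $i$, defines the assignment $\sigma(X_k)=\True$ iff $k\in R$, and checks that $\sigma$ is a model of $F$ whenever $j\notin R$, yielding the contrapositive. You instead introduce the least model $M_i$ of the definite part of $F\wedge X_i$ as an intermediate object, reduce the entailment side to the disjunction ``$j\in M_i$ or some goal body lies in $M_i$'', and then compute $R$ in terms of $M_i$ by a case split on that same disjunction. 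The two proofs meet at the same point---the paper's model $\sigma$ is exactly your $\sigma_{M_i}$ in the satisfiable case, since then $R\cap\{1,\dots,n\}=M_i$---but your route makes the role of the standard least-model semantics of Horn logic explicit and isolates the unsatisfiable case cleanly, at the cost of a slightly longer setup; the paper's route is shorter and avoids naming $M_i$ at all.
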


\begin{proof}
The ``if'' part can be shown by induction. If $i = j$, this is obvious. Otherwise, there exists a hyperarc $(T,H)$ such that $j \in H$ and every element of $T$ is reachable from $i$. Three cases can be distinguished:
\begin{itemize}
\item if $T$ is not equal to $\{\vtrue\}$ or $\{\vfalse\}$, then by construction, $F$ contains the implication $\wedge_{k \in T} X_k \Rightarrow X_j$. For all $k \in T$, $F \models X_i \Rightarrow X_k$ by induction, so that $F \models X_i \Rightarrow X_j$.
\item if $T = \{\vtrue\}$, then $X_j$ is a fact, and $F \models X_i \Rightarrow X_j$ trivially holds.
\item if $T$ is reduced to $\{\vfalse\}$, there is a goal $\neg X_{k_1} \vee \dots \vee \neg X_{k_p}$ in $F$ such that each $k_l$ is reachable from $i$, hence $F \models X_i \Rightarrow X_{k_l}$. As $F$ also entails the implication $X_{k_1} \wedge \dots \wedge X_{k_p} \Rightarrow X_j$, we conclude that $F \models X_i \Rightarrow X_j$.
\end{itemize}
For the ``only if'' part, let $R$ be the set of reachable vertices from $i$ in $\HH(F)$, and assume that $j \not \in R$. Let $\sigma$ be the assignment defined by $\sigma(X_k) = \True$ if $k \in R$, $\False$ otherwise. We claim that $\sigma$ models $F$. Consider an implication $X_{k_1} \wedge \dots \wedge X_{k_p} \Rightarrow X_k$ in $F$. If $\sigma(X_{k_l}) = \True$ for all $l = 1, \dots, p$, then $k$ is reachable from $i$ in $\HH(F)$, hence $\sigma(X_k) = \True$, and the implication is valid on $\sigma$. Similarly, for each fact $X_k$ in $F$, $k$ obviously belongs to $R$, which ensures that $\sigma(X_k) = \True$. Finally, if $F$ contains a goal $\neg X_{k_1} \vee \dots \vee \neg X_{k_p}$ such that $\sigma(X_{k_l}) = \True$ for all $l = 1, \dots, p$, then every vertex of the hypergraph is reachable from $i$ (through the vertex $\vfalse$), which is impossible ($j \not \in R$). This completes the proof.
\end{proof}
Corollary~\ref{cor:strongly_connectivity_and_sink} and Lemma~\ref{lemma:horn} consequently prove that the two following decision problems over Horn formulas can be solved in almost linear time: 
\begin{inparaenum}[(i)]
\item whether a variable of a Horn formula is implied by all the others,
\item whether all variables of a Horn formula are equivalent.
\end{inparaenum}

\section{Contributions on the complexity of computing all \scc{}s}\label{sec:combinatorics}

\subsection{A lower bound on the size of the transitive reduction of the reachability relation}\label{subsec:transitive_reduction}

Given a directed graph or a directed hypergraph, the reachability relation can be represented by the set of the couples $(x,y)$ such that $x$ reaches $y$. This is however a particularly redundant representation because of transitivity. In order to get a better idea of the intrinsic complexity of the reachability relation, we should rather consider transitive reductions, which are defined as minimal binary relations having the same transitive closure. 

In directed graphs, Aho~\etal\  have shown in~\cite{AhoGareyUllmanSICOMP72} that all transitive reductions of the reachability relation have the same size (the size of a binary relation $\RR$ is the number of couples $(x,y)$ such that $x \mathbin{\RR} y$). This size is bounded by the size of the digraph. Furthermore, a canonical transitive reduction can be defined by choosing a total ordering over the vertices.

In directed hypergraphs, the existence of a canonical transitive reduction of the reachability relation can be similarly established, because reachability is still reflexive and transitive.\footnote{Any finite reflexive and transitive relation $\RR$ can be seen as the reachability relation of a directed graph $G$, whose arcs are the couples $(x,y)$ such that $x \mathbin{\RR} y$, $x \neq y$. Then the transitive reduction of $\RR$ is defined as in~\cite{AhoGareyUllmanSICOMP72}.} However, we are going to show that its size is superlinear in $\size(\HH)$ for some directed hypergraphs $\HH$.

These hypergraphs arise from the subset partial order. More specifically, given a family $\FF$ of distinct sets over a finite domain $D$, the partial order induced by the relation $\subseteq$ on $\FF$ is called \emph{the subset partial order} over $\FF$. Without loss of generality, we assume that every set $S$ of $\FF$ satisfies $\card{S} > 1$ (up to adding two fixed elements $x,y \not \in D$ to all sets, which does not change the partial order over $\FF$). From this family, we build a corresponding directed hypergraph $\HH(\FF,D)$. Each of its vertices is either associated to a set $S \in \FF$ or to a domain element $x \in D$, and is denoted by $v[S]$ or $v[x]$ respectively. Besides, each set $S$ is associated to two hyperarcs $a[S]$ and $a'[S]$. The hyperarc $a[S]$ leaves the singleton $\{v[S]\}$ and enters the set of the vertices $v[x]$ such that $x \in S$. The hyperarc $a'[S]$ is defined inversely, leaving the latter set and entering $\{v[S]\}$. An example is given in Figure~\ref{fig:subset_hypergraph}.

\begin{figure}
\begin{center}
\begin{tikzpicture}[scale=0.9,>=stealth',elt/.style={circle,inner sep=1pt,draw=black,thick,font=\tiny}, set/.style={rectangle, minimum height=0.75cm,minimum width=0.75cm,inner sep=1pt,draw=black,thick,font=\tiny}, hyperedge/.style={draw=black,thick}, simpleedge/.style={black,thick},lbl/.style={font=\tiny}]
\node [elt] (x1) at (0,1.5) {$v[x_1]$};
\node [elt] (x2) at (2,1.5) {$v[x_2]$};
\node [elt] (x3) at (2,0)  {$v[x_3]$};
\node [elt] (x4) at (0,0)  {$v[x_4]$};

\node [set] (s1) at (-2,0.75) {$v[S_1]$};
\node [set] (s2) at (4,0.75) {$v[S_2]$};
\node [set] (s3) at (1,3.25) {$v[S_3]$};

\hyperedgewithangles[0.15][$(-1.6,-0.5)$]{x2/-100,x4/-145,x1/-110}{165}{s1.south/-80};
\path (-1.6,-0.5) node[below,lbl] {$a'[S_1]$};
\hyperedgewithangles[0.3][$(-1.2,0.75)$]{s1.east/0}{0}{x4/120,x1/-135,x2/-160};
\path (-1.2,0.8) node[above,lbl] {$a[S_1]$};
\hyperedgewithangles[0.15][$(3.6,-0.5)$]{x1/-80,x3/-35,x2/-70}{15}{s2.south/-100};
\path (3.6,-0.5) node[below,lbl] {$a'[S_2]$};
\hyperedgewithangles[0.3][$(3.2,0.75)$]{s2.west/180}{180}{x3/60,x2/-45,x1/-20};
\path (3.2,0.8) node[above,lbl] {$a[S_2]$};
\hyperedgewithangles[0.7][$(0,2.75)$]{x1/90,x2/170}{90}{s3.west/-170};
\path (0,2.5) node[right,lbl] {$a'[S_3]$};
\hyperedgewithangles[0.7][$(2,2.75)$]{s3.east/-10}{-90}{x1/10,x2/90};
\path (2,2.5) node[left,lbl] {$a[S_3]$};
\end{tikzpicture}
\end{center}
\caption{The directed hypergraph $\HH(\FF,D)$, with $D = \{x_1, \dots, x_4\}$ and $\FF$ consisting of $S_1 = \{x_1, x_2, x_4\}$, $S_2 = \{x_1, x_2, x_3\}$, and $S_3 = \{x_1,x_2\}$.}\label{fig:subset_hypergraph}
\end{figure}

\begin{lemma}\label{lemma:reachable}
Given $S \in \FF$, $v$ is reachable from $v[S]$ in $\HH(\FF,D)$ if, and only if, $v = v[S']$ for some $S' \in \FF$ such that $S' \subseteq S$, or $v = v[x]$ for some $x \in S$.
\end{lemma}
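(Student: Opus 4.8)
The plan is to describe the set $R$ of vertices reachable from $v[S]$ in $\HH(\FF,D)$ as the least subset of the vertex set that contains $v[S]$ and is \emph{hyperarc-closed}, meaning that for every hyperarc $(T,H)$ one has $T \subseteq R \implies H \subseteq R$. This least-fixpoint description is immediate from the recursive definition of $\reach_\HH$: the reachable set is hyperarc-closed and contains $v[S]$, and conversely an easy induction on the derivation of reachability shows it is contained in every hyperarc-closed set containing $v[S]$. Writing $R_0 := \{v[S'] \mid S' \in \FF,\ S' \subseteq S\} \cup \{v[x] \mid x \in S\}$ for the set appearing in the statement, the goal is to prove $R = R_0$.

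For the inclusion $R \subseteq R_0$ (the ``only if'' direction), I would verify that $R_0$ is hyperarc-closed and contains $v[S]$, so that minimality of $R$ gives $R \subseteq R_0$. Membership of $v[S]$ is clear since $S \subseteq S$ and $S \in \FF$. For closure, recall that every hyperarc has one of the two shapes $a[S'] = (\{v[S']\},\{v[x] \mid x \in S'\})$ or $a'[S'] = (\{v[x] \mid x \in S'\},\{v[S']\})$ for some $S' \in \FF$. If $T(a[S']) = \{v[S']\} \subseteq R_0$, then $v[S'] \in R_0$ forces $S' \subseteq S$, whence every $x \in S'$ lies in $S$ and $H(a[S']) \subseteq R_0$; if $T(a'[S']) = \{v[x] \mid x \in S'\} \subseteq R_0$, then $v[x] \in R_0$ for every $x \in S'$ means $S' \subseteq S$, hence $v[S'] \in R_0$ and $H(a'[S']) \subseteq R_0$. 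These are all the hyperarcs, so $R_0$ is closed.

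For the inclusion $R_0 \subseteq R$ (the ``if'' direction), I would exhibit explicit one-step derivations. The vertex $v[S]$ is reachable from itself. For $x \in S$, the hyperarc $a[S] = (\{v[S]\},\{v[x'] \mid x' \in S\})$ together with $v[S] \reach_\HH v[S]$ shows that $v[x]$ is reachable. For $S' \in \FF$ with $S' \subseteq S$, each $x \in S'$ also lies in $S$, so $v[x]$ is reachable by the previous point, and applying $a'[S'] = (\{v[x] \mid x \in S'\},\{v[S']\})$ then shows $v[S']$ is reachable. Combining the two inclusions yields $R = R_0$.

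I do not expect a real obstacle here; the only point requiring care is to argue the ``only if'' direction via closure/minimality (equivalently, induction on the recursive definition of reachability) rather than by directly manipulating hyperpaths, since the key phenomenon --- that reachability cannot ``bootstrap'' to a vertex $v[x]$ with $x \notin S$, because the sole hyperarc into $v[x]$ is $a[S'']$ whose tail $v[S'']$ is in turn only reachable via $a'[S'']$, which already requires $v[x]$ --- is precisely what the hyperarc-closure of $R_0$ captures.
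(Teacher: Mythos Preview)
Your proposal is correct and essentially the same as the paper's proof: the paper argues the ``only if'' direction by induction on the recursive definition of reachability with the same two-case analysis on whether the last hyperarc is $a[S']$ or $a'[S']$, which is exactly the unfolding of your closure/minimality argument; the ``if'' direction is likewise handled by the same explicit one- and two-step derivations you give.
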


\begin{proof}
Clearly, any vertex $v[x]$ is reachable from $v[S]$ through the hyperarc $a[S]$. Besides, assuming $S \supseteq S'$, then $v[S]$ reaches $v[S']$ through the hyperpath formed by the hyperarcs $a[S]$ and $a'[S']$.

Now, let us prove by induction that these are the only vertices reachable from $v[S]$. Let $u$ be reachable from $v[S]$. If $u = v[S]$, then this is obvious. Otherwise, there exists a hyperarc $a = (T,H)$ such that $u \in H$ and $T = \{u_1, \dots, u_q\}$ with each $u_i$ being reachable from $v[S]$. We can distinguish two cases:
\begin{enumerate}[(i)]
\item either $a$ is of the form $a[S']$ for some $S' \in \FF$, in which case the tail is reduced to the vertex $v[S']$, which is reachable from $v[S]$. By induction, we know that $S \supseteq S'$. Since $u = v[x]$ for some $x \in S'$, it follows that $x \in S$.
\item or $a$ is of the form $a'[S']$ for some $S' \in \FF$. Then its tail is the set of the $v[x]$ for $x \in S'$, and its head consists of the single vertex $v[S']$. Thus $x \in S$ for all $x \in S'$ by induction, which ensures that $u = v[S']$ with $S' \subseteq S$.\qedhere
\end{enumerate}
\end{proof}

\begin{proposition}\label{prop:transitive_reduction_lower_bound}
The size of the transitive reduction of the reachability relation of $\HH(\FF,D)$ is lower bounded by the size of the transitive reduction of the subset partial order over the family $\FF$. 
\end{proposition}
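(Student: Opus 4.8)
The plan is to exhibit a transitive reduction of the reachability relation of $\HH(\FF,D)$ that, restricted to the vertices of the form $v[S]$ ($S \in \FF$), essentially contains a copy of a transitive reduction of the subset partial order over $\FF$; then the size of the former is at least the size of the latter. By the footnote preceding the statement, a transitive reduction is well-defined for any reflexive transitive relation, so it suffices to work with one fixed canonical choice on each side.

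First I would use Lemma~\ref{lemma:reachable} to describe the reachability relation of $\HH(\FF,D)$ completely: from $v[S]$ one reaches exactly the $v[S']$ with $S' \subseteq S$ and the $v[x]$ with $x \in S$; from $v[x]$ one reaches only $v[x]$ itself (there is no hyperarc whose tail is a single element vertex $v[x]$, since every tail is either a singleton $\{v[S]\}$ or a set of size $>1$ of element vertices, and in the latter case the counter argument — all tail elements must be reachable — shows nothing new is reached from a single $v[x]$). Consequently the restriction of the reachability preorder to the set vertices $\{v[S] \mid S \in \FF\}$ is order-isomorphic, via $v[S] \mapsto S$, to the subset partial order $(\FF,\subseteq)$, and the element vertices $v[x]$ are all $\subseteq$-minimal (terminal) in the whole reachability relation.

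Next I would invoke the characterization of the canonical transitive reduction from~\cite{AhoGareyUllmanSICOMP72}: fixing a total order on the vertices, the canonical transitive reduction $\RR_0$ of the reachability relation of $\HH(\FF,D)$ consists of those pairs $(u,w)$ with $u \reach w$, $u \neq w$, that are not ``implied'' by a shorter chain. The key observation is that if $u = v[S]$ and $w = v[S']$ are both set vertices with $S' \subsetneq S$, then any intermediate vertex $t$ with $v[S] \reach t \reach v[S']$ must itself be a set vertex $v[S'']$ with $S' \subseteq S'' \subseteq S$ — because element vertices $v[x]$ reach nothing but themselves and hence cannot lie strictly between two distinct set vertices. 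Therefore the pair $(v[S],v[S'])$ belongs to $\RR_0$ if and only if the corresponding pair $(S,S')$ belongs to the canonical transitive reduction of $(\FF,\subseteq)$ (for the matching total order on $\FF$ induced by the one on vertices). This gives an injection from the canonical transitive reduction of the subset partial order into $\RR_0$, so $|\RR_0|$ is at least the size of the transitive reduction of the subset partial order, and since all transitive reductions have the same size, the proposition follows.

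The main obstacle — really the only delicate point — is verifying that no element vertex $v[x]$ (and no pair involving an element vertex) can ``shortcut'' a relation between two set vertices, i.e. that the transitive-reduction computation on the $v[S]$-part genuinely decouples from the $v[x]$-part. This rests entirely on the fact, read off from Lemma~\ref{lemma:reachable}, that element vertices are sinks of the reachability relation; once that is pinned down, the order-isomorphism on set vertices and the uniqueness-of-size result of Aho~\etal\ do the rest.
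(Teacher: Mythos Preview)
Your proposal is correct and follows essentially the same approach as the paper: both establish an injection from the transitive reduction of the subset partial order into that of the reachability relation by showing that no element vertex $v[x]$ can sit strictly between two set vertices $v[S]$ and $v[S']$. The paper phrases this as ``any vertex reaching $v[T]$ must be a set vertex'' (using $\lvert T\rvert>1$), while you phrase it as ``element vertices are sinks''; these are the same observation. One minor slip: the map $v[S]\mapsto S$ is order-\emph{reversing} (since $v[S]\reach v[S']$ iff $S'\subseteq S$), not order-preserving, so your ``corresponding pair'' bookkeeping is off by a reversal --- but this is harmless for the size comparison.
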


\begin{proof}
We claim that for any couple $(S,S')$ in the transitive reduction of the subset partial order over the family $\FF$, $(v[S'],v[S])$ belongs to the transitive reduction of the relation $\reach_{\HH(\FF,D)}$.

Suppose that the pair $(v[S'],v[S])$ is not in transitive reduction of $\reach_{\HH(\FF,D)}$, and that $S \subseteq S'$ (the case $S \not \subseteq S'$ is obvious). By Lemma~\ref{lemma:reachable}, $v[S]$ is reachable from $v[S']$. Besides, there exists a vertex $u$ of $\HH(\FF,D)$ distinct from $v[S]$ and $v[S']$ such that $v[S'] \reach_{\HH(\FF,D)} u \reach_{\HH(\FF,D)} v[S]$. Observe that any vertex reaching a vertex of the form $v[T]$ ($T \in \FF$) is necessarily of the form $v[T']$ for some $T' \in \FF$ (because of the assumption $\card{T} > 1$ which ensures that no vertex of the form $v[x]$ for $x \in D$ can reach $v[T]$). Consequently, there exists a set $S'' \in \FF$ (distinct from $S$ and $S'$) such that $u = v[S'']$. Following Lemma~\ref{lemma:reachable}, this shows that $S' \supsetneq S'' \supsetneq S$. Thus $(S,S')$ cannot belong to the transitive reduction of the subset partial order over $\FF$.
\end{proof}

The subset partial order has been well studied in the literature~\cite{YellinIPL93,PritchardIPL95,PritchardAlg99,PritchardJAlg99,ElmasryIPL09}. It has been proved in~\cite{YellinIPL93,ElmasryIPL09} that the size of the transitive reduction of the subset partial order can be superlinear in the size of the input $(\FF,D)$ (defined as $\card{D} + \sum_{S \in \FF} \card{S}$). Combining this with Proposition~\ref{prop:transitive_reduction_lower_bound} provides the following result:
\begin{theorem}\label{th:transitive_reduction_lower_bound}
There is a directed hypergraph $\HH$ such that the size of the transitive reduction of the reachability relation is in $\Omega(\size(\HH)^2 / \log^2 (\size(\HH)))$.
\end{theorem}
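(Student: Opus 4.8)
The plan is to combine Proposition~\ref{prop:transitive_reduction_lower_bound} with the known superlinear lower bound on the transitive reduction of the subset partial order, being careful to track how the parameters of the set family translate into the size of the hypergraph. Concretely, I would invoke the result of Yellin~\cite{YellinIPL93} (see also Elmasry~\cite{ElmasryIPL09}): there is a family $\FF$ of sets over a domain $D$, with input size $m := \card{D} + \sum_{S \in \FF} \card{S}$, for which the transitive reduction of the subset partial order has size $\Omega(m^2 / \log^2 m)$. The work of the proof is then entirely bookkeeping: show that the hypergraph $\HH := \HH(\FF, D)$ associated to this family has $\size(\HH) = \Theta(m)$, so that a lower bound of $\Omega(m^2/\log^2 m)$ on the transitive reduction of the subset partial order yields, via Proposition~\ref{prop:transitive_reduction_lower_bound}, a lower bound of the same asymptotic order on the transitive reduction of $\reach_\HH$, which is $\Omega(\size(\HH)^2 / \log^2(\size(\HH)))$.

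First I would verify the size estimate. By construction $\HH(\FF,D)$ has $\card{\FF} + \card{D}$ vertices, and for each $S \in \FF$ two hyperarcs $a[S]$ and $a'[S]$, each of which has one endpoint a singleton and the other endpoint of cardinality $\card{S}$. Hence
\[
\size(\HH(\FF,D)) = \card{\FF} + \card{D} + \sum_{S \in \FF} 2(1 + \card{S}) = \card{D} + 3\card{\FF} + 2\sum_{S \in \FF}\card{S}.
\]
Since we may assume every $S \in \FF$ satisfies $\card{S} > 1$ (as noted before Lemma~\ref{lemma:reachable}, padding by two fixed elements), we have $\card{\FF} \le \frac12 \sum_{S\in\FF}\card{S} \le \frac12 m$, so $\size(\HH(\FF,D)) \le \card{D} + \tfrac{3}{2}m + 2m \le \tfrac{9}{2} m$, and trivially $\size(\HH(\FF,D)) \ge m$. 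Thus $\size(\HH) = \Theta(m)$, which also makes $\log^2(\size(\HH)) = \Theta(\log^2 m)$.

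Then I would close the argument: by Proposition~\ref{prop:transitive_reduction_lower_bound}, the transitive reduction of $\reach_{\HH(\FF,D)}$ has size at least that of the transitive reduction of the subset partial order over $\FF$, which is $\Omega(m^2/\log^2 m) = \Omega(\size(\HH)^2/\log^2(\size(\HH)))$ by the size equivalence just established. This gives the claimed hypergraph $\HH$.

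The only genuine subtlety — the ``hard part'' — is not in this paper at all but in correctly importing the external combinatorial lower bound: one must make sure the quoted result is phrased in terms that, after the $\Theta(m)$ translation, produce exactly $m^2/\log^2 m$ rather than, say, $m^2/\log m$ or $m \cdot 2^{\sqrt{\log m}}$. Provided the statement of~\cite{YellinIPL93,ElmasryIPL09} is the $\Omega(m^2/\log^2 m)$ bound on the transitive reduction (equivalently, on the number of covering pairs) of the subset order, the rest is the routine size computation above, and no further obstacle arises.
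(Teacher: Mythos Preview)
Your proposal is correct and follows essentially the same route as the paper: apply Proposition~\ref{prop:transitive_reduction_lower_bound} to the hypergraph $\HH(\FF,D)$ built from a family whose subset partial order has a large transitive reduction, and translate the bound via $\size(\HH(\FF,D)) = \Theta(m)$. The only difference is that the paper does not cite the $\Omega(m^2/\log^2 m)$ bound as a black box but instead reproduces Elmasry's explicit construction (two families of $\binom{n/2}{n/4}$ sets over an $n$-element domain) and computes both the transitive-reduction size $\Theta(2^n/n)$ and $\size(\HH) = \Theta(\sqrt{n}\,2^{n/2})$ directly, thereby sidestepping the very concern you flag in your last paragraph about the precise form of the imported bound.
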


\begin{proof}
We use the construction given in~\cite{ElmasryIPL09} in which $\FF$ consists of two disjoint families $\FF_1$ and $\FF_2$ of sets over the domain $D = \{x_1, \dots, x_n\}$ (where $n$ is supposed to be divisible by 4). The first family consists of the subsets having $n/4$ elements among $x_1, \dots, x_{n/2}$. The second family is formed by the subsets containing all the elements $x_1, \dots, x_{n/2}$, and precisely $n/4$ elements among $x_{n/2+1}, \dots, x_n$. The transitive reduction of the subset partial order over $\FF$ coincides with the cartesian product $\FF_1 \times \FF_2$. Each $\FF_i$ precisely contains $\binom{n/2}{n/4} = \Theta(2^{n/2}/\sqrt{n})$ sets, so that the size of the transitive reduction of the subset partial order is $\Theta(2^n/n)$.

Proposition~\ref{prop:transitive_reduction_lower_bound} shows that the size of the transitive reduction of~$\reach_{\HH(\FF,D)}$ is in $\Omega(2^n/n)$. Now, the size of the directed hypergraph $\HH(\FF,D)$ is equal to:
\[
\size(\HH(\FF,D)) = n+2\binom{n/2}{n/4} + 2\frac{3n}{4}\binom{n/2}{n/4} + 2\frac{n}{4}\binom{n/2}{n/4},
\]
so that $\size(\HH(\FF,D)) = \Theta(\sqrt{n} 2^{n/2})$. This provides the expected result.
\end{proof}

The size of the transitive relation of the reachability relation can be seen as a partial measure of the complexity of the \scc{} computation problem. It is indeed natural to think at algorithms computing the \scc{}s by following the reachability relation between them, for instance by a depth-first search, hence by exploring at least a transitive reduction of the reachability relation. In fact, most of the algorithms determining \scc{}s of directed graphs, for instance the ones due to Tarjan~\cite{Tarjan72}, Cheriyan and Mehlhorn~\cite{CheriyanMehlhorn96}, or Gabow~\cite{Gabow00}, perform a depth-first search on the entire graph, and thus follow this approach. Theorem~\ref{th:transitive_reduction_lower_bound} shows that this class of algorithms cannot have a linear complexity on directed hypergraphs:
\begin{corollary}\label{cor:scc_lower_bound}
Any algorithm computing the strongly connected components of directed hypergraphs by traversing an entire transitive reduction of the reachability relation has a worst case complexity at least equal to $N^2 / \log^2 (N)$, where $N$ is the size of the input.
\end{corollary}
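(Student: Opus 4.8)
\textit{Proof proposal.} The plan is to derive the corollary essentially as an immediate consequence of Theorem~\ref{th:transitive_reduction_lower_bound}, the only work being to pin down precisely what "traversing an entire transitive reduction" costs. First I would fix the infinite family of directed hypergraphs produced in the proof of Theorem~\ref{th:transitive_reduction_lower_bound}, namely $\HH_n := \HH(\FF,D)$ for the two-layer family $\FF = \FF_1 \cup \FF_2$ over $D = \{x_1,\dots,x_n\}$ with $4 \mid n$. As computed there, $N := \size(\HH_n) = \Theta(\sqrt{n}\,2^{n/2})$, hence $\log N = \Theta(n)$ and $N^2/\log^2 N = \Theta(2^n/n)$, while the size of the transitive reduction of $\reach_{\HH_n}$ is $\Omega(2^n/n) = \Omega(N^2/\log^2 N)$.

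Next I would make the cost argument. Recall (footnote in Section~\ref{subsec:transitive_reduction}, following~\cite{AhoGareyUllmanSICOMP72}) that $\reach_{\HH}$, being reflexive and transitive, admits a transitive reduction, and all transitive reductions of a given such relation have the same size. Thus the phrase "an entire transitive reduction" is unambiguous as far as cardinality is concerned. By definition, an algorithm that \emph{traverses} such a reduction must, in the course of its execution, visit (read, enumerate, or otherwise touch) each of the pairs $(x,y)$ belonging to it; each such visit costs at least one elementary operation on the Random Access Machine. Consequently, on input $\HH_n$ the running time of any such algorithm is bounded below by the number of pairs in the transitive reduction of $\reach_{\HH_n}$, that is, by $\Omega(2^n/n) = \Omega(N^2/\log^2 N)$. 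Since this holds for arbitrarily large $n$, the worst-case complexity over inputs of size $N$ is at least $\Omega(N^2/\log^2 N)$, which is the claimed bound.

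The only genuinely delicate point, and the one I would be most careful about, is the formalization of "traverse an entire transitive reduction": the statement is meaningful only relative to a class of algorithms that actually manipulate the reduction as an explicit object (e.g.\ perform a depth-first search through the reachability DAG between components, as Tarjan's, Cheriyan--Mehlhorn's, or Gabow's algorithms do when lifted naively to hypergraphs). I would therefore state the hypothesis as: the algorithm, during its run on $\HH$, performs at least one operation for every pair in some transitive reduction of $\reach_\HH$; under this reading the lower bound is immediate from the counting above, with no further case analysis needed. No separate obstacle remains beyond this modelling choice, since the combinatorial heavy lifting is already carried out in Theorem~\ref{th:transitive_reduction_lower_bound} via Proposition~\ref{prop:transitive_reduction_lower_bound} and the bound of~\cite{ElmasryIPL09} on the subset partial order.
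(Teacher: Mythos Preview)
Your proposal is correct and follows exactly the route the paper intends: the corollary is stated there without a separate proof, as an immediate consequence of Theorem~\ref{th:transitive_reduction_lower_bound}, and you have simply spelled out the asymptotic bookkeeping ($N = \Theta(\sqrt{n}\,2^{n/2})$, hence $N^2/\log^2 N = \Theta(2^n/n)$) and the trivial counting argument that traversing a transitive reduction costs at least its cardinality. Your explicit discussion of what ``traversing an entire transitive reduction'' must mean is a welcome clarification that the paper leaves implicit.
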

Consequently, the reachability relation must be sufficiently explored to identify the \scc{}s, but it cannot be totally explored unless sacrificing the time complexity. Note that the algorithm \Call{TerminalScc}{} relies on a certain trade-off to discover terminal \scc{}s: it only traverses hyperarcs $(T,H)$ such that $T$ is contained in a \scc{}, whereas hyperarcs in which the tail vertices belong to distinct \scc{}s are ignored.

\subsection{Reduction from the minimal set problem}\label{subsec:set_pb_reduction}

Given a family $\FF$ of distinct sets over a domain $D$ as above, the \emph{minimal set problem} consists in finding all minimal sets $S \in \FF$ for the subset partial order. This problem has received much attention~\cite{PritchardActInf91,YellinSODA92,YellinIPL93,PritchardIPL95,PritchardJAlg99,ElmasryIPL09,BayardoSDM11}. It has important applications in propositional logic~\cite{PritchardActInf91} or data mining~\cite{BayardoSDM11}. It can also be seen as a boolean case of the problem of finding maximal vectors among a given family~\cite{KungJACM75,KirkpatrickSOCG85,GodfreyVLDB05}. 

Surprisingly, the most efficient algorithms addressing the minimal set problem compute the whole subset partial order~\cite{YellinIPL93,ElmasryIPL09}. The best known methods to compute the subset partial order in the general case are due to Pritchard~\cite{PritchardIPL95,PritchardJAlg99}. Their complexity are in $O(N^2/\log N)$, where $N$ is the size of the input $(\FF,D)$. In the dense case, \ie\ when the size of the family is in $\Theta(\card{D} \cdot \card{\FF})$, Elmasry defined a method with a complexity in $O(N^2/\log^2 N)$~\cite{ElmasryIPL09}. This matches the lower bound provided in Corollary~\ref{cor:scc_lower_bound}.

In this section, we establish a linear time reduction from the minimal set problem to the problem of computing the \scc{}s in directed hypergraph. 
To obtain it, we build a directed hypergraph $\HHbar(\FF,D)$ starting from the hypergraph $\HH(\FF,D)$. On top of the vertices of the latter, $\HHbar(\FF,D)$ has the following vertices:
\begin{inparaenum}[(i)]
\item for each $S \in \FF$, an additional vertex $w[S]$,
\item $(\card{D}+1)$ vertices labelled by $c_0,\dots,c_{\card{D}}$,
\item and a special vertex labelled by $\supersetvtx$.
\end{inparaenum}
Besides, we add the following hyperarcs:
\begin{inparaenum}[(i)]
\item for each $S \in \FF$, a hyperarc leaving $\{v[S]\}$ and entering the singleton $\{c_{\card{S}-1}\}$,
\item for every $0 \leq i \leq \card{D}$, a hyperarc leaving $\{c_i\}$ and entering the set of the vertices $w[S]$ such that $\card{S} = i$,
\item for each $i > 0$, a hyperarc from $\{c_i\}$ to $\{c_{i-1}\}$,
\item for each $S \in \FF$, a hyperarc leaving the set $\{v[S],w[S]\}$ and entering the singleton $\{\supersetvtx\}$,
\item for every $S \in \FF$, a hyperarc from $\{\supersetvtx\}$ to $\{v[S]\}$.
\end{inparaenum}
This construction is illustrated in Figure~\ref{fig:minimal_subset_hypergraph}. 

\begin{figure}[t]
\begin{center}
\begin{tikzpicture}[scale=0.8,>=stealth',elt/.style={circle,inner sep=0.5pt,draw=black,thick,font=\tiny}, set/.style={rectangle, minimum height=0.6cm,minimum width=0.6cm,inner sep=1pt,draw=black,thick,font=\tiny}, hyperedge/.style={draw=black,thick}, simpleedge/.style={draw=black,thick},lbl/.style={font=\tiny}]
\node [elt] (x1) at (0,1.5) {$v[x_1]$};
\node [elt] (x2) at (2,1.5) {$v[x_2]$};
\node [elt] (x3) at (2,0)  {$v[x_3]$};
\node [elt] (x4) at (0,0)  {$v[x_4]$};

\node [set] (s1) at (-2,0.75) {$v[S_1]$};
\node [set] (s2) at (4,0.75) {$v[S_2]$};
\node [set] (s3) at (1,3.25) {$v[S_3]$};

\node [draw=black,rectangle,minimum height=0.7cm,dashed,very thick,font=\small] (superset) at (-2.75,3.5) {$\supersetvtx$};

\begin{scope}
\tikzset{hyperedge/.style={lightgray}};
\hyperedgewithangles[0.15][$(-1.6,-0.5)$]{x2/-100,x4/-145,x1/-110}{165}{s1.south/-80};
\hyperedgewithangles[0.3][$(-1.2,0.75)$]{s1.east/0}{0}{x4/120,x1/-135,x2/-160};
\hyperedgewithangles[0.15][$(3.6,-0.5)$]{x1/-80,x3/-35,x2/-70}{15}{s2.south/-100};
\hyperedgewithangles[0.3][$(3.2,0.75)$]{s2.west/180}{180}{x3/60,x2/-45,x1/-20};
\hyperedgewithangles[0.7][$(0,2.75)$]{x1/90,x2/170}{90}{s3.west/-170};
\hyperedgewithangles[0.7][$(2,2.75)$]{s3.east/-10}{-90}{x1/10,x2/90};
\end{scope}

\node [set] (w1) at (-3.5,0.75) {$w[S_1]$};
\node [set] (w2) at (5.5,0.75) {$w[S_2]$};
\node [set] (w3) at (2.5,3.25) {$w[S_3]$};

\node [elt] (c0) at (-2,-2) {$c_0$};
\node [elt] (c1) at (-0.5,-2) {$c_1$};
\node [elt] (c2) at (1,-2) {$c_2$};
\node [elt] (c3) at (2.5,-2) {$c_3$};
\node [elt] (c4) at (4,-2) {$c_4$};

\draw[simpleedge,->] (c4) -- (c3);
\draw[simpleedge,->] (c3) -- (c2);
\draw[simpleedge,->] (c2) -- (c1);
\draw[simpleedge,->] (c1) -- (c0);
\draw[simpleedge,->] (s1.240) to[out=-80,in=130] (-1.25,-2) to[out=-50,in=-120] (c2); %
\draw[simpleedge,->] (s2.-60) to[out=-100,in=50] (3.25,-2) to[out=-130,in=-60] (c2) ; 
\hyperedgewithangles[0.3][$(2.2,-1.7)$]{c3/135}{135}{w1/-90,w2/-90};
\draw[simpleedge,->] (s3) .. controls (1,-1) and (0,-1.5) .. (c1);
\draw[simpleedge,->] (c2) .. controls (1,2.25) and (2,2.75) .. (w3);
\hyperedgewithangles[0.6][$(-2.75,2.5)$]{w1/50,s1/130}{90}{superset/-90};
\hyperedgewithangles[0.6][$(4.75,2.5)$]{w2/130,s2/50}{90}{superset/60};
\hyperedgewithangles[0.5][$(-1.5,2.5)$]{s3/-135,w3/-110}{160}{superset/-45};
\draw[simpleedge,->] (superset.-70) to[out=-80,in=90] (s1);
\draw[simpleedge,->] (superset) to[out=0,in=180] (s3.160);
\draw[simpleedge,->] (superset.30) .. controls (-1.75,4) and (4,7).. (s2);

\end{tikzpicture}
\end{center}
\caption{The hypergraph $\HHbar(\FF,D)$, where $D = \{x_1, \dots, x_4\}$ and $\FF$ consists of $S_1 = \{x_1, x_2, x_4\}$, $S_2 = \{x_1, x_2, x_3\}$, and $S_3 = \{x_1,x_2\}$. The hyperarcs of $\HH(\FF,D)$ are depicted in gray.}\label{fig:minimal_subset_hypergraph}
\end{figure}

\begin{proposition}\label{prop:equivalence}
For any $S \in \FF$, $S$ is not minimal in $\FF$ if, and only if, the vertex $\supersetvtx$ is reachable from $v[S]$ in $\HHbar(\FF,D)$.
\end{proposition}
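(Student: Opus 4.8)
The plan is to prove the two implications separately, treating the hyperarcs of $\HH(\FF,D)$ as a black box through Lemma~\ref{lemma:reachable} and reasoning about the five new families of hyperarcs by hand.

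For the implication ``$S$ not minimal $\Rightarrow$ $\supersetvtx$ reachable from $v[S]$'' I would argue directly, by exhibiting a hyperpath. Since $S$ is not minimal, pick $S' \in \FF$ with $S' \subsetneq S$, so that $|S'| \le |S| - 1$. First, $v[S']$ is reachable from $v[S]$: this is exactly Lemma~\ref{lemma:reachable} applied inside the subhypergraph $\HH(\FF,D)$, whose reachability relation is contained in that of $\HHbar(\FF,D)$. Second, $w[S']$ is reachable from $v[S]$: from $v[S]$ follow the hyperarc into $\{c_{|S|-1}\}$, then the (possibly empty) chain of hyperarcs $\{c_{|S|-1}\} \to \{c_{|S|-2}\} \to \cdots \to \{c_{|S'|}\}$, and finally the hyperarc from $\{c_{|S'|}\}$ into the set of all $w[T]$ with $|T| = |S'|$, which contains $w[S']$. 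With $v[S']$ and $w[S']$ both reachable, the hyperarc $\{v[S'],w[S']\} \to \{\supersetvtx\}$ concludes.

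For the converse I would prove the contrapositive: assuming $S$ is minimal, I show $\supersetvtx$ is not reachable from $v[S]$. I use the stratified description of reachability: let $R_0 = \{v[S]\}$ and let $R_{k+1}$ consist of $R_k$ together with the heads of all hyperarcs of $\HHbar(\FF,D)$ whose tail is contained in $R_k$, so that $\bigcup_k R_k$ is precisely the set of vertices reachable from $v[S]$. Suppose, for contradiction, that $\supersetvtx \in R_k$ for some $k$, and let $k^*$ be minimal with this property. The argument then rests on four elementary facts about which hyperarcs can have a given vertex in their head: (a) $\supersetvtx$ occurs in the head only of hyperarcs of the form $\{v[T],w[T]\} \to \{\supersetvtx\}$; (b) a vertex $w[T]$ occurs in the head only of a hyperarc $\{c_i\} \to \{w[T'] \mid |T'| = i\}$; (c) a vertex $c_j$ occurs in the head only of the hyperarcs $\{v[T]\} \to \{c_{|T|-1}\}$ and $\{c_i\} \to \{c_{i-1}\}$; (d) a vertex of the form $v[x]$ or $v[T]$ occurs in the head only of the hyperarcs of $\HH(\FF,D)$ and of the hyperarcs $\{\supersetvtx\} \to \{v[T]\}$. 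By minimality of $k^*$ we have $\supersetvtx \notin R_{k^*-1}$, so by (a) no hyperarc into $\supersetvtx$, and by (d) no hyperarc $\{\supersetvtx\} \to \{v[T]\}$, can have fired while building $R_0, \dots, R_{k^*-1}$. Hence, by (d), the vertices $v[T]$ with $T \in \FF$ lying in $R_{k^*-1}$ are exactly those reachable from $v[S]$ using only the hyperarcs of $\HH(\FF,D)$, which by Lemma~\ref{lemma:reachable} and the minimality of $S$ is the single vertex $v[S]$. Feeding this into (c) — the hyperarc from $v[S]$ yields $c_{|S|-1}$, the chain of $\{c_i\} \to \{c_{i-1}\}$ hyperarcs descends from there, and a short induction on the stratum rules out any $c_j$ with $j \ge |S|$ — the $c$-vertices in $R_{k^*-1}$ are exactly $c_0, \dots, c_{|S|-1}$; so by (b) the $w$-vertices in $R_{k^*-1}$ are exactly the $w[T]$ with $|T| < |S|$, and in particular $w[S] \notin R_{k^*-1}$. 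But (a) applied to $\supersetvtx \in R_{k^*}$ produces some $S' \in \FF$ with $v[S'], w[S'] \in R_{k^*-1}$; the first membership forces $S' = S$, whence the second gives $w[S] \in R_{k^*-1}$, a contradiction.

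The step I expect to be the main obstacle is precisely handling the circularity introduced by the hyperarcs $\{\supersetvtx\} \to \{v[T]\}$, which make every $v[T]$ reachable the instant $\supersetvtx$ is reached; one therefore cannot analyse ``which $v[T]$ are reachable'' in isolation. The stratification isolates the part of the computation occurring strictly before $\supersetvtx$ would first appear, a stage in which these back-arcs — and the arcs into $\supersetvtx$ — are provably inactive, and after that the proof is bookkeeping over the five hyperarc families. A minor but necessary point is that all indices $|S|-1$ and $|S'|$ occurring above lie in $\{0, \dots, |D|\}$, so that the relevant $c$-vertices and hyperarcs exist; this follows from $S' \subsetneq S \subseteq D$ together with the standing assumption that every member of $\FF$ has more than one element.
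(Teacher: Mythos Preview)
Your argument is correct. The forward direction is the same as the paper's. For the converse you take a genuinely different route.

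The paper proves the converse directly rather than by contrapositive: it takes a minimal hyperpath from $v[S]$ to $\supersetvtx$, observes that its last hyperarc has the form $(\{v[S'],w[S']\},\{\supersetvtx\})$, and then uses minimality to extract sub-hyperpaths from $v[S]$ to $v[S']$ and to $w[S']$ that avoid $\supersetvtx$ entirely. The first sub-hyperpath is then forced to live inside $\HH(\FF,D)$, so Lemma~\ref{lemma:reachable} gives $S' \subseteq S$; analysing the second sub-hyperpath (which must enter the $c$-chain via some $v[T]$ with $T \subseteq S$) yields $\lvert S' \rvert < \lvert S \rvert$, hence $S' \subsetneq S$. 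Your approach replaces this ``minimal hyperpath avoiding $\supersetvtx$'' device by the stratification $R_0 \subseteq R_1 \subseteq \cdots$, isolating the last stratum before $\supersetvtx$ appears; at that stage the back-arcs $\{\supersetvtx\} \to \{v[T]\}$ are provably inert, and the same structural analysis goes through. The two arguments are essentially dual---the paper reasons on the proof objects (hyperpaths), you reason on the forward-chaining fixpoint---and the key idea, neutralising the circularity introduced by the $\supersetvtx \to v[T]$ arcs, is the same in both. Your version is arguably more elementary in that it does not rely on the notion of minimal hyperpath or on extracting sub-hyperpaths; the paper's version has the minor advantage of directly exhibiting the witnessing $S' \subsetneq S$. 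One cosmetic point: in a couple of places you write ``exactly'' (for the $c$- and $w$-vertices present in $R_{k^*-1}$) where only the inclusion ``$\subseteq$'' is needed and justified; this does not affect the argument.
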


\begin{proof}
Assume that $S$ is not minimal in $\FF$, and let $S' \in \FF$ satisfying $S' \subsetneq S$. By Lemma~\ref{lemma:reachable}, $v[S']$ is reachable from $v[S]$ in $\HH(\FF,D)$, and hence in $\HHbar(\FF,D)$. Since $\card{S'} = j < \card{S} = i$, $w[S']$ is reachable from $v[S]$ through the hyperpath traversing the vertices $c_{i-1},c_{i-2},\dots,c_j$. Finally, the vertex $\supersetvtx$ is reachable through the hyperarc from $\{v[S'],w[S']\}$.

Conversely, suppose that $v[S]$ reaches $\supersetvtx$ in $\HHbar(\FF,D)$. Consider a minimal hyperpath $a_1, \dots, a_p$ from $v[S]$ to $\supersetvtx$. Necessarily, $a_p$ is a hyperarc of the form $(\{v[S'], w[S']\},\{\supersetvtx\})$ for some $S' \in \FF$. Consequently, both vertices $v[S']$ and $w[S']$ are reachable from $v[S]$. Besides, to each of the two vertices, there exists a hyperpath from $v[S]$, which is a subsequence of $a_1, \dots, a_{p-1}$, and which consequently does not contain the vertex $\supersetvtx$ (meaning that the latter does not appear in any tail or head of the hyperarcs of the hyperpath). 

Let $a'_1, \dots, a'_q$ be a minimal hyperpath from $v[S]$ to $v[S']$ not containing $\supersetvtx$. The hyperpath cannot contain any hyperarc of the form $(\{v[T], w[T]\},\{\supersetvtx\})$ (where $T \in \FF$). As a result, no vertex of the form $w[T]$ should occur in the hyperpath (by minimality). Similarly, no vertex of the form $c_i$ belongs to the hyperpath (otherwise, it should also contain a vertex of the form $w[T]$). It follows that the hyperpath $a'_1, \dots, a'_q$ is also a hyperpath in the hypergraph $\HH(\FF,D)$. Applying Lemma~\ref{lemma:reachable} then shows that $S' \subseteq S$. 

It remains to show that the latter inclusion is strict. Similarly, let $a''_1, \dots, a''_r$ be a minimal hyperpath from $v[S]$ to $w[S']$ not containing $\supersetvtx$. Then the tail of $a''_r$ is necessarily reduced to the vertex $c_i$, where $i = \card{S'}$, and its head is $\{w[S']\}$. It follows that $a''_1, \dots, a''_{r-1}$ is a hyperpath from $v[S]$ to $c_i$ not containing $\supersetvtx$. Now suppose that $i \geq \card{S}$. Let $j \geq i$ the greatest integer such that $c_j$ appears in the hyperpath $a''_1, \dots, a''_{r-1}$. Necessarily, one of the hyperarcs in the hyperpath is of the form $(\{v[T]\},\{c_j\})$, so that $v[T]$ is reachable from $v[S]$ through a hyperpath not passing through the vertex $\supersetvtx$. It follows from the previous discussion that $T \subseteq S$. But $\card{T} = j+1 > i \geq \card{S}$, which is a contradiction. This shows that $i = \card{S'} < \card{S}$, hence $S' \subsetneq S$.
\end{proof}

Since every vertex of the form $v[S]$ is reachable from $\supersetvtx$, minimal sets of the family $\FF$ are precisely given by the vertices which do not belong to the \scc{} of the vertex $\supersetvtx$. This proves the following complexity reduction:
\begin{theorem}\label{th:minimal_set_pb_reduction}
The minimal set problem can be reduced in linear time to the problem of determining the strongly connected components in a directed hypergraph.
\end{theorem}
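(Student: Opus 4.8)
The plan is to combine the construction of $\HHbar(\FF,D)$ with Proposition~\ref{prop:equivalence} and the observation that every vertex $v[S]$ is reachable from $\supersetvtx$. First I would check that $\HHbar(\FF,D)$ can be built in linear time in the size of the input $(\FF,D)$: the hypergraph $\HH(\FF,D)$ has $\card{D}+\card{\FF}$ vertices and two hyperarcs $a[S],a'[S]$ per set $S$, each of total size $\card{S}+1$, so $\size(\HH(\FF,D))$ is linear in $\card{D}+\sum_{S\in\FF}\card{S}$. The extra vertices $w[S]$, the $c_0,\dots,c_{\card{D}}$, and $\supersetvtx$ add $\card{\FF}+\card{D}+2$ vertices; the extra hyperarcs are: one from $\{v[S]\}$ to $\{c_{\card{S}-1}\}$ per $S$, one from each $\{c_i\}$ to $\{w[S]:\card S=i\}$ (total size $O(\card{D}+\card{\FF})$), one from $\{c_i\}$ to $\{c_{i-1}\}$ per $i$, one from $\{v[S],w[S]\}$ to $\{\supersetvtx\}$ per $S$, and one from $\{\supersetvtx\}$ to $\{v[S]\}$ per $S$. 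Summing, $\size(\HHbar(\FF,D)) = O(\card{D}+\sum_{S\in\FF}\card{S})$, and the construction is clearly carried out in that time given $(\FF,D)$.

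Next I would argue that the \scc{}s of $\HHbar(\FF,D)$ determine the minimal sets. By Proposition~\ref{prop:equivalence}, for $S\in\FF$, $S$ is not minimal in $\FF$ if and only if $\supersetvtx$ is reachable from $v[S]$. On the other hand, by construction there is a hyperarc $(\{\supersetvtx\},\{v[S]\})$ for every $S\in\FF$, so $v[S]$ is reachable from $\supersetvtx$ for all $S$. Combining these two facts: $S$ is not minimal if and only if $v[S]\reach_{\HHbar(\FF,D)}\supersetvtx$ and $\supersetvtx\reach_{\HHbar(\FF,D)}v[S]$, that is, if and only if $v[S]$ lies in the same \scc{} as $\supersetvtx$. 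Equivalently, $S$ is minimal in $\FF$ if and only if $v[S]$ is \emph{not} in the \scc{} of $\supersetvtx$.

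Finally, once the \scc{}s of $\HHbar(\FF,D)$ have been computed, one identifies the component containing $\supersetvtx$ and, scanning the vertices $v[S]$ for $S\in\FF$, outputs exactly those $S$ whose vertex $v[S]$ falls outside that component; this postprocessing is linear in $\size(\HHbar(\FF,D))$. Hence the minimal set problem reduces in linear time to computing the \scc{}s of a directed hypergraph, which is the claim. I expect the only genuinely delicate point to be the linear-time bound on building $\HHbar(\FF,D)$ and on the postprocessing — in particular making sure that the hyperarcs $(\{c_i\},\{w[S]:\card S = i\})$ are produced with total size $O(\card{\FF}+\card{D})$ rather than quadratic, which is handled by bucketing the sets $S$ by cardinality in one linear pass — while the logical core is already supplied by Proposition~\ref{prop:equivalence}.
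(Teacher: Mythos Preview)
Your proposal is correct and follows essentially the same approach as the paper: build $\HHbar(\FF,D)$ in linear time, invoke Proposition~\ref{prop:equivalence} together with the trivial reachability of each $v[S]$ from $\supersetvtx$, and read off the minimal sets as those whose vertex $v[S]$ lies outside the \scc{} of $\supersetvtx$. You supply more detail than the paper on the linear-size accounting (in particular the bucketing remark for the hyperarcs $(\{c_i\},\{w[S]:\card{S}=i\})$), but the structure of the argument is the same.
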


\begin{proof}
We assume the existence of an oracle providing the \scc{}s of any directed hypergraph.	
	
Consider an instance $(\FF,D)$ of the minimal set problem. The hypergraph $\HHbar(\FF,D)$ can be built in linear time in the size of the input. Calling the oracle on $\HHbar(\FF,D)$ yields its \scc{}s. Then, by examining each \scc{} and its content, we collect the sets $S \in \FF$ such that $v[S]$ does not belong to the same component as the vertex $\supersetvtx$. We finally return these sets. By Proposition~\ref{prop:equivalence}, they are precisely the minimal sets in the family $\FF$.
\end{proof}

\begin{remark}
Another interesting combinatorial problem is to decide whether a collection of sets is a Sperner family, \ie\ the sets are not pairwise comparable. As a consequence of Theorem~\ref{th:minimal_set_pb_reduction}, it can be shown that the problem of deciding whether a collection of sets is a Sperner family can be reduced in linear time to the problem of determining the \scc{}s in a directed hypergraph. The Sperner family problem can be indeed reduced in linear time to the minimal set problem, by examining whether the number of minimal sets of $\FF$ is equal to the cardinality of $\FF$.
\end{remark}

\begin{remark}
In a similar way, we can also exhibit a linear time reduction from the problem of determining a linear extension of the subset partial order over a family of sets, to the problem of topologically sorting the vertices of an acyclic directed hypergraph. The \emph{topological sort} of an acyclic directed hypergraph $\HH$ refers to a total ordering $\preceq$ of the vertices such that $u \preceq v$ as soon as $u \reach_\HH v$. 

The idea is to use the directed hypergraph $\HH(\FF,D)$ (which can be built in linear time in the size of $(\FF,D)$). This hypergraph can be shown to be acyclic (under the assumption $\card{S} > 1$ for all $S \in \FF$). By Lemma~\ref{lemma:reachable}, it is straightforward that inverting and restricting a topological ordering over the vertices of the form $v[S]$ provides a linear extension of the partial order over~$\FF$.

To our knowledge, the problem of determining a linear extension of the subset partial order has not been particularly studied. It is probably not obvious to solve this problem without examining a significant part of the subset partial order (or at least of a sparse representation such as its transitive reduction). 
\end{remark}

\section{Conclusion} 
\label{subsec:complexity}

In this paper, we have proved that all terminal \scc{}s can be determined in only almost linear time (Theorems~\ref{th:correctness} and~\ref{th:complexity}). As a consequence, two other problems, testing strong connectivity and the existence of a sink, can be solved in almost linear time. 

The problem of computing all \scc{}s appears to be much harder. We conclude with the following questions:
\begin{open_problem}\label{op1}
Is it possible to compute the strongly connected components in directed hypergraphs with the same time and space complexity as in directed graphs? 
\end{open_problem}
\begin{open_problem}\label{op2}
Is it possible to ``break'' the partial lower bound $O(N^2/\log^2 N)$ provided by Corollary~\ref{cor:scc_lower_bound}?
\end{open_problem}
The results established in Section~\ref{sec:combinatorics} on the size of the transitive reduction of the reachability relation in hypergraphs (Theorem~\ref{th:transitive_reduction_lower_bound}), and on the reduction from the minimal set problem (Theorem~\ref{th:minimal_set_pb_reduction}), show that the answer to Question~\ref{op1} is likely to be ``No'' (at least considering ``reasonable'' models of computation, like the RAM model). Corollary~\ref{cor:scc_lower_bound} indicates that solving Question~\ref{op2} would require to design an algorithm capturing only a part of the reachability relation (or a transitive reduction). This part should be however sufficiently large to correctly identify the \scc{}s. In any case, the directed hypergraphs $\HH(\FF,D)$ and $\HHbar(\FF,D)$ constructed in Section~\ref{sec:combinatorics} provide useful examples to study the problem.

\bibliographystyle{amsalpha}
\newcommand{\etalchar}[1]{$^{#1}$}
\providecommand{\bysame}{\leavevmode\hbox to3em{\hrulefill}\thinspace}
\providecommand{\MR}{\relax\ifhmode\unskip\space\fi MR }
\providecommand{\MRhref}[2]{%
  \href{http://www.ams.org/mathscinet-getitem?mr=#1}{#2}
}
\providecommand{\href}[2]{#2}

\appendix

\section{An Example of Complete Execution Trace of the Algorithm of Section~\ref{sec:maxscc}}\label{sec:execution_trace}

We give the main steps of the execution of the algorithm~\Call{TerminalScc}{} on the directed hypergraph depicted in Figure~\ref{fig:hypergraph}:
\begin{center}
\begin{small}
\begin{tikzpicture}[>=stealth',scale=0.7, vertex/.style={circle,draw=black,very thick,minimum size=2ex}, hyperedge/.style={draw=black,thick},simpleedge/.style={thick}]
\node [vertex] (u) at (-2,-1) {$u$};
\node [vertex] (v) at (0,0) {$v$};
\node [vertex] (w) at (0,-2)  {$w$};
\node [vertex] (x) at (3.5,0) {$x$};
\node [vertex] (y) at (3.5,-2) {$y$};
\node [vertex] (t) at (2,-4.5) {$t$};

\path[->] (u) edge[simpleedge,out=90,in=-180] (v);
\node at (-1,-0.5) {$a_1$};
\path[->] (v) edge[simpleedge,out=-90,in=90] (w);
\node at (-0.5,-1.3) {$a_2$};
\path[->] (w) edge[simpleedge,out=-120,in=-60] (u);
\node at (-1.5,-2.5) {$a_3$};
\node at (1.75,-0.5) {$a_4$};
\node at (2.5,-3.5) {$a_5$};
\oldhyperedge{v}{v,w}{w}{0.35}{0.6}{y,x};
\oldhyperedge{y}{y,w}{w}{0.48}{0.5}{t};
\end{tikzpicture}
\end{small}
\end{center}
Vertices are depicted by solid circles if their index is defined, and
by dashed circles otherwise. Once a vertex is placed into $\incomp$, it is depicted in gray. Similarly, a hyperarc which has never been placed into a local stack $F$ is represented by dotted lines. Once it is pushed into $F$, it becomes solid, and when it is popped from $F$, it is colored in gray (note that for the sake of readability, gray hyperarcs mapped to trivial cycles after a vertex merging step will not be represented). The stack $F$ which is mentioned always corresponds to the stack local to the last non-terminated call of the function $\Call{Visit}{}$.

Initially, $\Find{z} = z$ for all $z \in \{ u,v,w,x,y,t \}$. We suppose that $\Call{Visit}{u}$ is called first. After the execution of the block from Lines~\lineref{scc:begin} to~\lineref{scc:end_node_loop}, the current state is:
\begin{center}
\begin{small}
\begin{tikzpicture}[>=stealth',scale=0.7, vertex/.style={circle,draw=black,very thick,minimum size=2ex}, hyperedge/.style={draw=black,thick,dotted}, simpleedge/.style={draw=black,thick}]
\node [vertex] (u) at (-2,-1) {$u$} node[node distance=11ex,left of=u] {
$\begin{aligned} 
\tindex[u] &= 0 \\[-0.8ex]
\troot[u] &= 0 \\[-0.8ex]
\ismax[u] &= \True 
\end{aligned}$
};
\node [vertex,dashed] (v) at (0,0) {$v$};
\node [vertex,dashed] (w) at (0,-2)  {$w$};
\node [vertex,dashed] (x) at (3.5,0) {$x$};
\node [vertex,dashed] (y) at (3.5,-2) {$y$};
\node [vertex,dashed] (t) at (2,-4.5) {$t$};

\path[->] (u) edge[simpleedge,out=90,in=-180] (v);
\path[->] (v) edge[simpleedge,out=-90,in=90,dotted] (w);
\path[->] (w) edge[simpleedge,out=-120,in=-60,dotted] (u);
\oldhyperedge{v}{v,w}{w}{0.35}{0.6}{y,x};
\oldhyperedge{y}{y,w}{w}{0.48}{0.5}{t};

\node[anchor=west] at (7, -1.5) {$\begin{aligned} 
S & = [u] \\[-0.8ex]
n &= 1 \\[-0.8ex]
F & = [a_1]
\end{aligned}$ };
\end{tikzpicture}
\end{small}
\end{center}
Following the hyperarc $a_1$, $\Call{Visit}{v}$ is called during the execution of the block from Lines~\lineref{scc:begin_edge_loop} to~\lineref{scc:end_edge_loop} of $\Call{Visit}{u}$. After Line~\lineref{scc:end_node_loop} in $\Call{Visit}{v}$, the root of the hyperarc $a_4$ is set to $v$, and the counter $c_{a_4}$ is incremented to $1$ since $v \in S$. The state is:
\begin{center}
\begin{small}
\begin{tikzpicture}[>=stealth',scale=0.7, vertex/.style={circle,draw=black,very thick,minimum size=2ex}, hyperedge/.style={draw=black,thick,dotted}, simpleedge/.style={draw=black,thick}]
\node [vertex] (u) at (-2,-1) {$u$} node[node distance=11ex,left of=u] {
$\begin{aligned} 
\tindex[u] &= 0 \\[-0.8ex]
\troot[u] &= 0 \\[-0.8ex]
\ismax[u] &= \True 
\end{aligned}$
};
\node [vertex] (v) at (0,0) {$v$} node[node distance=7ex,above of=v] {
$\begin{aligned} 
\tindex[v] &= 1 \\[-0.8ex]
\troot[v] &= 1 \\[-0.8ex]
\ismax[v] &= \True 
\end{aligned}$
};
\node [vertex,dashed] (w) at (0,-2)  {$w$};
\node [vertex,dashed] (x) at (3.5,0) {$x$};
\node [vertex,dashed] (y) at (3.5,-2) {$y$};
\node [vertex,dashed] (t) at (2,-4.5) {$t$};

\path[->] (u) edge[simpleedge,out=90,in=-180,lightgray] (v);
\path[->] (v) edge[simpleedge,out=-90,in=90] (w);
\path[->] (w) edge[simpleedge,out=-120,in=-60,dotted] (u);
\oldhyperedge{v}{v,w}{w}{0.35}{0.6}{y,x};
\oldhyperedge{y}{y,w}{w}{0.48}{0.5}{t};

\node at (1.75,-0.3) {$\begin{aligned} r_{a_4} & = v \\[-1.5ex] c_{a_4} & = 1 \end{aligned}$};

\node[anchor=west] at (7, -1.5) {$\begin{aligned} 
S & = [v; u] \\[-0.8ex]
n &= 2 \\[-0.8ex]
F & = [a_2]
\end{aligned}$ };
\end{tikzpicture}
\end{small}
\end{center}
Similarly, the function $\Call{Visit}{w}$ is called during the execution of the loop from Lines~\lineref{scc:begin_edge_loop} to~\lineref{scc:end_edge_loop} in $\Call{Visit}{v}$. After Line~\lineref{scc:end_node_loop} in $\Call{Visit}{w}$, the root of the hyperarc $a_5$ is set to $w$, and the counter $c_{a_5}$ is incremented to $1$ since $w \in S$. Besides, $c_{a_4}$ is incremented to $2 = \card{T(a_4)}$ since $\Find{r_{a_4}} = \Find{v} = v \in S$, so that $a_4$ is pushed on the stack $F_v$. The state is:
\begin{center}
\begin{small}
\begin{tikzpicture}[>=stealth',scale=0.7, vertex/.style={circle,draw=black,very thick,minimum size=2ex}, hyperedge/.style={draw=black,thick,dotted}, simpleedge/.style={draw=black,thick}]
\node [vertex] (u) at (-2,-1) {$u$} node[node distance=11ex,left of=u] {
$\begin{aligned} 
\tindex[u] &= 0 \\[-0.8ex]
\troot[u] &= 0 \\[-0.8ex]
\ismax[u] &= \True 
\end{aligned}$
};
\node [vertex] (v) at (0,0) {$v$} node[node distance=7ex,above of=v] {
$\begin{aligned} 
\tindex[v] &= 1 \\[-0.8ex]
\troot[v] &= 1 \\[-0.8ex]
\ismax[v] &= \True 
\end{aligned}$
};
\node [vertex] (w) at (0,-2) {$w$} node[node distance=11ex,below left of=w] {
$\begin{aligned} 
\tindex[w] &= 2 \\[-0.8ex]
\troot[w] &= 2 \\[-0.8ex]
\ismax[w] &= \True 
\end{aligned}$
};
\node [vertex,dashed] (x) at (3.5,-0) {$x$};
\node [vertex,dashed] (y) at (3.5,-2) {$y$};
\node [vertex,dashed] (t) at (2,-4.5) {$t$};

\node at (1.75,-0.3) {$\begin{aligned} r_{a_4} & = v \\[-1.5ex] c_{a_4} & = 2 \end{aligned}$};
\node at (3.2,-3.4) {$\begin{aligned} r_{a_5} & = w \\[-1.5ex] c_{a_5} & = 1 \end{aligned}$};

\path[->] (u) edge[simpleedge,out=90,in=-180,lightgray] (v);
\path[->] (v) edge[simpleedge,out=-90,in=90,lightgray] (w);
\path[->] (w) edge[simpleedge,out=-120,in=-60] (u);
\oldhyperedge{v}{v,w}{w}{0.35}{0.6}{y,x};
\oldhyperedge{y}{y,w}{w}{0.48}{0.5}{t};

\node[anchor=west] at (7, -1.5) {$\begin{aligned} 
S & = [w; v; u] \\[-0.8ex]
n &= 3 \\[-0.8ex] 
F & = [a_3]\\[-0.8ex]
F_{v} & = [a_4] 
\end{aligned}$ };
\end{tikzpicture}
\end{small}
\end{center}
The execution of the loop from Lines~\lineref{scc:begin_edge_loop} to~\lineref{scc:end_edge_loop} of $\Call{Visit}{w}$ discovers that $\tindex[u]$ is defined but $u \not \in \incomp$, so that $\troot[w]$ is set to $\min(\troot[w],\troot[u]) = 0$ and $\ismax[w]$ to $\ismax[w] \And \ismax[u] = \True$. At the end of the loop, the state is therefore:
\begin{center}
\begin{small}
\begin{tikzpicture}[>=stealth',scale=0.7, vertex/.style={circle,draw=black,very thick,minimum size=2ex}, hyperedge/.style={draw=black,thick,dotted}, simpleedge/.style={draw=black,thick}]
\node [vertex] (u) at (-2,-1) {$u$} node[node distance=11ex,left of=u] {
$\begin{aligned} 
\tindex[u] &= 0 \\[-0.8ex]
\troot[u] &= 0 \\[-0.8ex]
\ismax[u] &= \True 
\end{aligned}$
};
\node [vertex] (v) at (0,0) {$v$} node[node distance=7ex,above of=v] {
$\begin{aligned} 
\tindex[v] &= 1 \\[-0.8ex]
\troot[v] &= 1 \\[-0.8ex]
\ismax[v] &= \True 
\end{aligned}$
};
\node [vertex] (w) at (0,-2)  {$w$} node[node distance=11ex,below left of=w] {
$\begin{aligned} 
\tindex[w] &= 2 \\[-0.8ex]
\troot[w] &= 0 \\[-0.8ex]
\ismax[w] &= \True 
\end{aligned}$
};
\node [vertex,dashed] (x) at (3.5,0) {$x$};
\node [vertex,dashed] (y) at (3.5,-2) {$y$};
\node [vertex,dashed] (t) at (2,-4.5) {$t$};

\path[->] (u) edge[simpleedge,out=90,in=-180,lightgray] (v);
\path[->] (v) edge[simpleedge,out=-90,in=90,lightgray] (w);
\path[->] (w) edge[simpleedge,out=-120,in=-60,lightgray] (u);
\oldhyperedge{v}{v,w}{w}{0.35}{0.6}{y,x};
\oldhyperedge{y}{y,w}{w}{0.48}{0.5}{t};

\node at (1.75,-0.3) {$\begin{aligned} r_{a_4} & = v \\[-1.5ex] c_{a_4} & = 2 \end{aligned}$};
\node at (3.2,-3.4) {$\begin{aligned} r_{a_5} & = w \\[-1.5ex] c_{a_5} & = 1 \end{aligned}$};

\node[anchor=west] at (7, -1.5) {$\begin{aligned} 
S & = [w; v; u] \\[-0.8ex]
n &= 3 \\[-0.8ex]
F & = \emptystack\\[-0.8ex]
F_{v} & = [a_4] 
\end{aligned}$ };
\end{tikzpicture}
\end{small}
\end{center}
Since $\troot[w] \neq \tindex[w]$, the block from Lines~\lineref{scc:begin2} to~\lineref{scc:end2} is not executed, and $\Call{Visit}{w}$ terminates. Back to the loop from Lines~\lineref{scc:begin_edge_loop} to~\lineref{scc:end_edge_loop} in $\Call{Visit}{v}$, $\troot[v]$ is assigned to the value $\min(\troot[v],\troot[w]) = 0$, and $\ismax[v]$ to $\ismax[v] \And \ismax[w] = \True$:
\begin{center}
\begin{small}
\begin{tikzpicture}[>=stealth',scale=0.7, vertex/.style={circle,draw=black,very thick,minimum size=2ex}, hyperedge/.style={draw=black,thick,dotted}, simpleedge/.style={draw=black,thick}]
\node [vertex] (u) at (-2,-1) {$u$} node[node distance=11ex,left of=u] {
$\begin{aligned} 
\tindex[u] &= 0 \\[-0.8ex]
\troot[u] &= 0 \\[-0.8ex]
\ismax[u] &= \True 
\end{aligned}$
};
\node [vertex] (v) at (0,-0) {$v$} node[node distance=7ex,above of=v] {
$\begin{aligned} 
\tindex[v] &= 1 \\[-0.8ex]
\troot[v] &= 0 \\[-0.8ex]
\ismax[v] &= \True 
\end{aligned}$
};
\node [vertex] (w) at (0,-2)  {$w$} node[node distance=11ex,below left of=w] {
$\begin{aligned} 
\tindex[w] &= 2 \\[-0.8ex]
\troot[w] &= 0 \\[-0.8ex]
\ismax[w] &= \True 
\end{aligned}$
};

\node [vertex,dashed] (x) at (3.5,0) {$x$};
\node [vertex,dashed] (y) at (3.5,-2) {$y$};
\node [vertex,dashed] (t) at (2,-4.5) {$t$};

\path[->] (u) edge[simpleedge,out=90,in=-180,lightgray] (v);
\path[->] (v) edge[simpleedge,out=-90,in=90,lightgray] (w);
\path[->] (w) edge[simpleedge,out=-120,in=-60,lightgray] (u);
\oldhyperedge{v}{v,w}{w}{0.35}{0.6}{y,x};
\oldhyperedge{y}{y,w}{w}{0.48}{0.5}{t};

\node at (1.75,-0.3) {$\begin{aligned} r_{a_4} & = v \\[-1.5ex] c_{a_4} & = 2 \end{aligned}$};
\node at (3.2,-3.4) {$\begin{aligned} r_{a_5} & = w \\[-1.5ex] c_{a_5} & = 1 \end{aligned}$};

\node[anchor=west] at (7, -1.5) {$\begin{aligned} 
S & = [w; v; u] \\[-0.8ex]
n &= 3 \\[-0.8ex]
F & = \emptystack\\[-0.8ex]
F_{v} & = [a_4] 
\end{aligned}$ };
\end{tikzpicture}
\end{small}
\end{center}
Since $\troot[v] \neq \tindex[v]$, the block from Lines~\lineref{scc:begin2} to~\lineref{scc:end2} is not executed, and $\Call{Visit}{v}$ terminates. Back to the loop from Lines~\lineref{scc:begin_edge_loop} to~\lineref{scc:end_edge_loop} in $\Call{Visit}{u}$, $\troot[u]$ is assigned to the value $\min(\troot[u],\troot[v]) = 0$, and $\ismax[u]$ to $\ismax[u] \And \ismax[v] = \True$. Therefore, at Line~\lineref{scc:begin2}, the conditions $\troot[u] = \tindex[u]$ and $\ismax[u] = \True$ hold, so that a vertex merging step is executed. At that point, the stack $F$ is empty. After that, $i$ is set to $\tindex[u] = 0$ (Line~\lineref{scc:begin_node_merging}), and $F_u = \emptystack$ is emptied to $F$ (Line~\lineref{scc:push_on_fprime1}), so that $F$ is still empty. Then $w$ is popped from $S$, and since $\tindex[w] = 2 > i = 0$, the loop from Lines~\lineref{scc:begin_node_merging_loop} to~\lineref{scc:end_node_merging_loop} is iterated. Then the stack $F_w = \emptystack$ is emptied in $F$. At Line~\lineref{scc:merge}, \Call{Merge}{$u,w$} is called. The result is denoted by $U$ (in practice, either $U = u$ or $U = w$). The state is:
\begin{center}
\begin{small}
\begin{tikzpicture}[>=stealth',scale=0.7, vertex/.style={circle,draw=black,very thick,minimum size=2ex}, hyperedge/.style={draw=black,thick,dotted}, simpleedge/.style={draw=black,thick}]
\node [vertex] (v) at (0,0) {$v$} node[node distance=7ex,above of=v] {
$\begin{aligned} 
\tindex[v] &= 1 \\[-0.8ex]
\troot[v] &= 0 \\[-0.8ex]
\ismax[v] &= \True 
\end{aligned}$
};
\node [vertex] (w) at (0,-2)  {$U$} node[node distance=10ex,below left of=w] {
$\begin{aligned} 
\tindex[U] &= 0 \text{ or } 2  \\[-0.8ex]
\troot[U] &= 0 \\[-0.8ex]
\ismax[U] &= \True 
\end{aligned}$
};
\node [vertex,dashed] (x) at (3.5,0) {$x$};
\node [vertex,dashed] (y) at (3.5,-2) {$y$};
\node [vertex,dashed] (t) at (2,-4.5) {$t$};

\path[->] (v) edge[simpleedge,out=-90,in=90,lightgray] (w);
\path[->] (w) edge[simpleedge,out=170,in=-170,lightgray] (v);
\oldhyperedge{v}{v,w}{w}{0.35}{0.6}{y,x};
\oldhyperedge{y}{y,w}{w}{0.48}{0.5}{t};

\node at (1.75,-0.3) {$\begin{aligned} r_{a_4} & = v \\[-1.5ex] c_{a_4} & = 2 \end{aligned}$};
\node at (3.2,-3.4) {$\begin{aligned} r_{a_5} & = w \\[-1.5ex] c_{a_5} & = 1 \end{aligned}$};

\node[anchor=west] at (7, -1.5) {$\begin{aligned} 
S & = [v; u] \\[-0.8ex]
n &= 3 \\[-0.8ex]
F_{v} & = [a_4] \\[-0.8ex]
i & = 0 \\[-0.8ex]
F & = \emptystack\\[-0.8ex]
U & = \Find{u} = \Find{w} 
\end{aligned}$ };
\end{tikzpicture}
\end{small}
\end{center}
Then $v$ is popped from $S$, and since $\tindex[v] = 1 > i = 0$, the loop Lines~\lineref{scc:begin_node_merging_loop} to~\lineref{scc:end_node_merging_loop} is iterated again. Then the stack $F_v = [a_4]$ is emptied in $F$. At Line~\lineref{scc:merge}, \Call{Merge}{$U,v$} is called. The result is set to $U$ (in practice, $U$ is one of the vertices $u$, $v$, $w$). The state is:
\begin{center}
\begin{small}
\begin{tikzpicture}[>=stealth',scale=0.7, vertex/.style={circle,draw=black,very thick,minimum size=2ex}, hyperedge/.style={draw=black,thick,dotted}, simpleedge/.style={draw=black,thick}]
\node [vertex,very thick,text width=0.48cm,text centered] (U) at (0,-1.5) {$U$} node[node distance=16ex,left of=U] {
$\begin{aligned} 
\tindex[U] &= 0, 1, \text{ or } 2  \\[-0.8ex]
\troot[U] &= 0 \\[-0.8ex]
\ismax[U] &= \True 
\end{aligned}$
};
\node [vertex,dashed] (x) at (3.5,0) {$x$};
\node [vertex,dashed] (y) at (3.5,-2) {$y$};
\node [vertex,dashed] (t) at (2,-4.5) {$t$};

\node at (3.2,-3.4) {$\begin{aligned} r_{a_5} & = w \\[-1.5ex] c_{a_5} & = 1 \end{aligned}$};
\path[->] (U) edge[simpleedge,out=40,in=-180] (x);
\path[->] (U) edge[simpleedge,out=40,in=120] (y);
\oldhyperedge{y}{y,U}{U}{0.48}{0.5}{t};

\node[anchor=west] at (6, -2.5) {$\begin{aligned} 
S & = [u] \\[-0.8ex]
n &= 3 \\[-0.8ex]
F_{v} & = \emptystack \\[-0.8ex]
i & = 0 \\[-0.8ex]
F & = [a_4]\\[-0.8ex]
U & = \Find{u} = \Find{v} \\[-0.8ex]
& = \Find{w} 
\end{aligned}$ };
\end{tikzpicture}
\end{small}
\end{center}
After that, $u$ is popped from $S$, and as $\tindex[u] = 0 = i$, the loop is terminated. At Line~\lineref{scc:index_redef}, $\tindex[U]$ is set to $i$, and $U$ is pushed on $S$. Since $F \neq \emptyset$, we go back to Line~\lineref{scc:begin_edge_loop}, in the state:
\begin{center}
\begin{small}
\begin{tikzpicture}[>=stealth',scale=0.7, vertex/.style={circle,draw=black,very thick,minimum size=2ex}, hyperedge/.style={draw=black,thick,dotted}, simpleedge/.style={draw=black,thick}]
\node [vertex,very thick,text width=0.48cm,text centered] (U) at (0,-1.5) {$U$} node[node distance=13ex,left of=U] {
$\begin{aligned} 
\tindex[U] &= 0  \\[-0.8ex]
\troot[U] &= 0 \\[-0.8ex]
\ismax[U] &= \True 
\end{aligned}$
};
\node [vertex,dashed] (x) at (3.5,0) {$x$};
\node [vertex,dashed] (y) at (3.5,-2) {$y$};
\node [vertex,dashed] (t) at (2,-4.5) {$t$};

\node at (3.2,-3.4) {$\begin{aligned} r_{a_5} & = w \\[-1.5ex] c_{a_5} & = 1 \end{aligned}$};
\path[->] (U) edge[simpleedge,out=40,in=-180] (x);
\path[->] (U) edge[simpleedge,out=40,in=120] (y);
\oldhyperedge{y}{y,U}{U}{0.48}{0.5}{t};

\node[anchor=west] at (7, -1.5) {$\begin{aligned} 
S & = [U] \\[-0.8ex]
n &= 3 \\[-0.8ex]
F & = [a_4]\\[-0.8ex]
U & = \Find{u} = \Find{v} \\[-0.8ex]
& = \Find{w} 
\end{aligned}$ };
\end{tikzpicture}
\end{small}
\end{center}
Then $a_4$ is popped from $F$, and the loop from~\lineref{scc:begin_edge_loop2} to~\lineref{scc:end_edge_loop2} iterates over $H(a_4) = \{x,y\}$. Suppose that $x$ is treated first. Then $\Call{Visit}{x}$ is called. During its execution, at Line~\lineref{scc:end_node_loop}, the state is:
\begin{center}
\begin{small}
\begin{tikzpicture}[>=stealth',scale=0.7, vertex/.style={circle,draw=black,very thick,minimum size=2ex}, hyperedge/.style={draw=black,thick,dotted}, simpleedge/.style={draw=black,thick}]
\node [vertex,very thick,text width=0.48cm,text centered] (U) at (0,-1.5) {$U$} node[node distance=13ex,left of=U] {
$\begin{aligned} 
\tindex[U] &= 0  \\[-0.8ex]
\troot[U] &= 0 \\[-0.8ex]
\ismax[U] &= \True 
\end{aligned}$
};
\node [vertex] (x) at (3.5,0) {$x$} node[node distance=7ex,above of=x] {
$\begin{aligned} 
\tindex[x] &= 3  \\[-0.8ex]
\troot[x] &= 3 \\[-0.8ex]
\ismax[x] &= \True 
\end{aligned}$
};
\node [vertex,dashed] (y) at (3.5,-2) {$y$};
\node [vertex,dashed] (t) at (2,-4.5) {$t$};

\node at (3.2,-3.4) {$\begin{aligned} r_{a_5} & = w \\[-1.5ex] c_{a_5} & = 1 \end{aligned}$};
\path[->] (U) edge[simpleedge,out=40,in=-180,lightgray] (x);
\path[->] (U) edge[simpleedge,out=40,in=120] (y);
\oldhyperedge{y}{y,U}{U}{0.48}{0.5}{t};

\node[anchor=west] at (7, -1.5) {$\begin{aligned} 
S & = [x;U] \\[-0.8ex]
n & = 4 \\[-0.8ex]
F & = \emptystack\\[-0.8ex]
U & = \Find{u} = \Find{v} \\[-0.8ex]
& = \Find{w} 
\end{aligned}$ };
\end{tikzpicture}
\end{small}
\end{center}
Since $F$ is empty, the loop from Lines~\lineref{scc:begin_edge_loop} to~\lineref{scc:end_edge_loop} is not executed. At Line~\lineref{scc:begin2}, $\troot[x] = \tindex[x]$ and $\ismax[x] = \True$, so that a trivial vertex merging step is performed, only on $x$, since it is the top element of $S$. After Line~\lineref{scc:index_redef}, it can be verified that $S = [x; U]$, $\tindex[x] = 3$ and $F = \emptystack$. Therefore, the goto statement at Line~\lineref{scc:goto} is not executed. It follows that the loop from Lines~\lineref{scc:begin_non_max_scc_loop} to~\lineref{scc:end_non_max_scc_loop} is executed, and after that, the state is:

\begin{center}
\begin{small}
\begin{tikzpicture}[>=stealth',scale=0.7, vertex/.style={circle,draw=black,very thick,minimum size=2ex}, hyperedge/.style={draw=black,thick,dotted}, simpleedge/.style={draw=black,thick}]
\node [vertex,very thick,text width=0.48cm,text centered] (U) at (0,-1.5) {$U$} node[node distance=13ex,left of=U] {
$\begin{aligned} 
\tindex[U] &= 0  \\[-0.8ex]
\troot[U] &= 0 \\[-0.8ex]
\ismax[U] &= \True 
\end{aligned}$
};
\node [vertex,lightgray] (x) at (3.5,0) {$x$} node[node distance=7ex,above of=x] {
$\begin{aligned} 
\tindex[x] &= 3  \\[-0.8ex]
\troot[x] &= 3 \\[-0.8ex]
\ismax[x] &= \True 
\end{aligned}$
};
\node [vertex,dashed] (y) at (3.5,-2) {$y$};
\node [vertex,dashed] (t) at (2,-4.5) {$t$};

\node at (3.2,-3.4) {$\begin{aligned} r_{a_5} & = w \\[-1.5ex] c_{a_5} & = 1 \end{aligned}$};
\path[->] (U) edge[simpleedge,out=40,in=-180,lightgray] (x);
\path[->] (U) edge[simpleedge,out=40,in=120] (y);
\oldhyperedge{y}{y,U}{U}{0.48}{0.5}{t};

\node[anchor=west] at (6, -1.5) {$\begin{aligned} 
S & = [U] \\[-0.8ex]
n & = 4 \\[-0.8ex]
F & = \emptystack\\[-0.8ex]
U & = \Find{u} = \Find{v} \\[-0.8ex] 
& = \Find{w} \\[-0.8ex]
\incomp & = \{ x \}
\end{aligned}$ };
\end{tikzpicture}
\end{small}
\end{center}
After the termination of $\Call{Visit}{x}$, since $x \in \incomp$, $\ismax[U]$ is set to $\False$. After that, $\Call{Visit}{y}$ is called, and at Line~\lineref{scc:end_node_loop}, it can be checked that $c_{a_5}$ has been incremented to $2 = \card{T(a_5)}$ because $R_{a_5} = \Find{r_{a_5}} = \Find{w} = U$ and $U \in S$. Therefore, $a_5$ is pushed to $F_U$, and the state is:
\begin{center}
\begin{small}
\begin{tikzpicture}[>=stealth',scale=0.7, vertex/.style={circle,draw=black,very thick,minimum size=2ex}, hyperedge/.style={draw=black,thick,dotted}, simpleedge/.style={draw=black,thick}]
\node [vertex,very thick,text width=0.48cm,text centered] (U) at (0,-1.5) {$U$} node[node distance=13ex,left of=U] {
$\begin{aligned} 
\tindex[U] &= 0  \\[-0.8ex]
\troot[U] &= 0 \\[-0.8ex]
\ismax[U] &= \False
\end{aligned}$
};
\node [vertex,lightgray] (x) at (3.5,0) {$x$} node[node distance=7ex,above of=x] {
$\begin{aligned} 
\tindex[x] &= 3  \\[-0.8ex]
\troot[x] &= 3 \\[-0.8ex]
\ismax[x] &= \True 
\end{aligned}$
};
\node [vertex] (y) at (3.5,-2) {$y$} node[node distance=12ex,right of=y] {
$\begin{aligned} 
\tindex[y] &= 4  \\[-0.8ex]
\troot[y] &= 4 \\[-0.8ex]
\ismax[y] &= \True 
\end{aligned}$
};
\node [vertex,dashed] (t) at (2,-4.5) {$t$};

\path[->] (U) edge[simpleedge,out=40,in=-180,lightgray] (x);
\path[->] (U) edge[simpleedge,out=40,in=120,lightgray] (y);
\oldhyperedge{y}{y,U}{U}{0.48}{0.5}{t};

\node at (3.2,-3.4) {$\begin{aligned} r_{a_5} & = w \\[-1.5ex] c_{a_5} & = 2 \end{aligned}$};

\node[anchor=west] at (8, -1.5) {$\begin{aligned} 
S & = [y;U] \\[-0.8ex]
n & = 5 \\[-0.8ex]
F & = \emptystack\\[-0.8ex]
F_U & = [a_5]\\[-0.8ex]
U & = \Find{u} \\[-0.8ex]
& = \Find{v} \\[-0.8ex] 
& = \Find{w} \\[-0.8ex]
\incomp & = \{ x \}
\end{aligned}$ };
\end{tikzpicture}
\end{small}
\end{center}
As for the vertex $x$, $\Call{Visit}{y}$ terminates by 
popping $y$ from $S$ and adding it to $\incomp$. Back to the execution of $\Call{Visit}{U}$, at Line~\lineref{scc:begin2}, the state is:
\begin{center}
\begin{small}
\begin{tikzpicture}[>=stealth',scale=0.7, vertex/.style={circle,draw=black,very thick,minimum size=2ex}, hyperedge/.style={draw=black,thick,dotted}, simpleedge/.style={draw=black,thick}]
\node [vertex,very thick,text width=0.48cm,text centered] (U) at (0,-1.5) {$U$} node[node distance=13ex,left of=U] {
$\begin{aligned} 
\tindex[U] &= 0  \\[-0.8ex]
\troot[U] &= 0 \\[-0.8ex]
\ismax[U] &= \False
\end{aligned}$
};
\node [vertex,lightgray] (x) at (3.5,0) {$x$} node[node distance=7ex,above of=x] {
$\begin{aligned} 
\tindex[x] &= 3  \\[-0.8ex]
\troot[x] &= 3 \\[-0.8ex]
\ismax[x] &= \True 
\end{aligned}$
};
\node [vertex,lightgray] (y) at (3.5,-2) {$y$} node[node distance=12ex,right of=y] {
$\begin{aligned} 
\tindex[y] &= 4  \\[-0.8ex]
\troot[y] &= 4 \\[-0.8ex]
\ismax[y] &= \True 
\end{aligned}$
};
\node [vertex,dashed] (t) at (2,-4.5) {$t$};

\path[->] (U) edge[simpleedge,out=40,in=-180,lightgray] (x);
\path[->] (U) edge[simpleedge,out=40,in=120,lightgray] (y);
\oldhyperedge{y}{y,U}{U}{0.48}{0.5}{t};

\node at (3.2,-3.4) {$\begin{aligned} r_{a_5} & = w \\[-1.5ex] c_{a_5} & = 2 \end{aligned}$};

\node[anchor=west] at (8, -1.5) {$\begin{aligned} 
S & = [U] \\[-0.8ex]
n & = 5 \\[-0.8ex]
F & = \emptystack\\[-0.8ex]
F_U & = [a_5]\\[-0.8ex]
U & = \Find{u}\\[-0.8ex]
& = \Find{v} \\[-0.8ex] 
& = \Find{w} \\[-0.8ex]
\incomp & = \{ y, x \}
\end{aligned}$ };
\end{tikzpicture}
\end{small}
\end{center}
While $\troot[U] = \tindex[U]$, $\ismax[U]$ is equal to $\False$, so that no vertex merging loop is performed on $U$. Therefore, $a_5$ is not popped from $F_U$. 
Nevertheless, the loop from Lines~\lineref{scc:begin_non_max_scc_loop} to~\lineref{scc:end_non_max_scc_loop} is executed, and after that, $\Call{Visit}{u}$ is terminated in the state:
\begin{center}
\begin{small}
\begin{tikzpicture}[>=stealth',scale=0.7, vertex/.style={circle,draw=black,very thick,minimum size=2ex}, hyperedge/.style={draw=black,thick,dotted}, simpleedge/.style={draw=black,thick}]
\node [vertex,very thick,text width=0.48cm,text centered, lightgray] (U) at (0,-1.5) {$U$} node[node distance=13ex,left of=U] {
$\begin{aligned} 
\tindex[U] &= 0  \\[-0.8ex]
\troot[U] &= 0 \\[-0.8ex]
\ismax[U] &= \False
\end{aligned}$
};
\node [vertex,lightgray] (x) at (3.5,0) {$x$} node[node distance=7ex,above of=x] {
$\begin{aligned} 
\tindex[x] &= 3  \\[-0.8ex]
\troot[x] &= 3 \\[-0.8ex]
\ismax[x] &= \True 
\end{aligned}$
};
\node [vertex,lightgray] (y) at (3.5,-2) {$y$} node[node distance=12ex,right of=y] {
$\begin{aligned} 
\tindex[y] &= 4  \\[-0.8ex]
\troot[y] &= 4 \\[-0.8ex]
\ismax[y] &= \True 
\end{aligned}$
};
\node [vertex,dashed] (t) at (2,-4.5) {$t$};

\path[->] (U) edge[simpleedge,out=40,in=-180,lightgray] (x);
\path[->] (U) edge[simpleedge,out=40,in=120,lightgray] (y);
\oldhyperedge{y}{y,U}{U}{0.48}{0.5}{t};

\node at (3.2,-3.4) {$\begin{aligned} r_{a_5} & = w \\[-1.5ex] c_{a_5} & = 2 \end{aligned}$};

\node[anchor=west] at (8, -1.5) {$\begin{aligned} 
S & = \emptystack \\[-0.8ex]
n & = 5 \\[-0.8ex]
F & = \emptystack\\[-0.8ex]
F_U & = [a_5]\\[-0.8ex]
U & = \Find{u}\\[-0.8ex]
& = \Find{v} \\[-0.8ex] 
& = \Find{w} \\[-0.8ex]
\incomp & = \{ U, y, x \}
\end{aligned}$ };
\end{tikzpicture}
\end{small}
\end{center}
Finally, $\Call{Visit}{t}$ is called from \Call{TerminalScc}{} at Line~\lineref{scc:visit_call}. It can be verified that a trivial vertex merging loop is performed on $t$ only. 
After that, $t$ is placed into $\incomp$. Therefore, the final state of \Call{TerminalScc}{} is:
\begin{center}
\begin{small}
\begin{tikzpicture}[>=stealth',scale=0.7, vertex/.style={circle,draw=black,very thick,minimum size=2ex}, hyperedge/.style={draw=black,thick,dotted}, simpleedge/.style={draw=black,thick}]
\node [vertex,very thick,text width=0.48cm,text centered, lightgray] (U) at (0,-1.5) {$U$} node[node distance=13ex,left of=U] {
$\begin{aligned} 
\tindex[U] &= 0  \\[-0.8ex]
\troot[U] &= 0 \\[-0.8ex]
\ismax[U] &= \False
\end{aligned}$
};
\node [vertex,lightgray] (x) at (3.5,0) {$x$} node[node distance=7ex,above of=x] {
$\begin{aligned} 
\tindex[x] &= 3  \\[-0.8ex]
\troot[x] &= 3 \\[-0.8ex]
\ismax[x] &= \True 
\end{aligned}$
};
\node [vertex,lightgray] (y) at (3.5,-2) {$y$} node[node distance=12ex,right of=y] {
$\begin{aligned} 
\tindex[y] &= 4  \\[-0.8ex]
\troot[y] &= 4 \\[-0.8ex]
\ismax[y] &= \True 
\end{aligned}$
};
\node [vertex,lightgray] (t) at (2,-4.5) {$t$} node[node distance=7ex,below of=t] {
$\begin{aligned} 
\tindex[t] &= 5 \\[-0.8ex]
\troot[t] &= 5 \\[-0.8ex]
\ismax[t] &= \True 
\end{aligned}$
};;

\path[->] (U) edge[simpleedge,out=40,in=-180,lightgray] (x);
\path[->] (U) edge[simpleedge,out=40,in=120,lightgray] (y);
\oldhyperedge{y}{y,U}{U}{0.48}{0.5}{t};

\node at (3.2,-3.4) {$\begin{aligned} r_{a_5} & = w \\[-1.5ex] c_{a_5} & = 2 \end{aligned}$};

\node[anchor=west] at (7.8, -1.5) {$\begin{aligned} 
S & = \emptystack \\[-0.8ex]
n & = 6 \\[-0.8ex]
F_U & = [a_5]\\[-0.8ex]
U & = \Find{u}\\[-0.8ex]
& = \Find{v} \\[-0.8ex]
& = \Find{w} \\[-0.8ex]
\incomp & = \{ t, U, y, x \}
\end{aligned}$ };
\end{tikzpicture}
\end{small}
\end{center}
As $\ismax[x] = \ismax[y] = \ismax[t] = \True$ and $\ismax[\Find{z}] = \False$ for $z = u, v, w$, there are three terminal \scc{}s, given by the sets:
\begin{align*}
\{ z \mid \Find{z} = x \} & = \{x\}, \\
\{ z \mid \Find{z} = y \} & = \{y\}, \\
\{ z \mid \Find{z} = t \} & = \{t\}.
\end{align*}

\section{Proof of Theorem~\ref{th:correctness}}\label{sec:correctness_proof}

The correctness proof of the algorithm \Call{TerminalScc}{} turns out to be harder than for algorithms on directed graphs such as Tarjan's one~\cite{Tarjan72}, due to the complexity of the invariants which arise in the former algorithm. That is why we propose to show the correctness of two intermediary algorithms, named \Call{TerminalScc2}{} (Figure~\ref{fig:maxscc2}) and \Call{TerminalScc3}{} (Figure~\ref{fig:maxscc3}), and then to prove that they are equivalent to \Call{TerminalScc}{}.

\begin{figure}[t]
\begin{scriptsize}
\begin{minipage}[t]{0.41\textwidth}
\vspace{0pt}
\begin{algorithmic}[1]
\Function {TerminalScc2}{$\VV,A$}
  \State $n \gets 0$, $S \gets \emptystack$, $\incomp \gets \emptyset$
  \ForAll{$a \in A$} 
    \State $\collected_a \gets \False$
  \EndFor
  \ForAll{$u \in A$} 
    \State $\tindex[u] \gets \Nil$
    \State $\troot[u] \gets \Nil$
    \State \Call{Makeset}{$u$}
  \EndFor
  \ForAll{$u \in \VV$}
    \If{$\tindex[u] = \Nil$} 
      \State \Call{Visit2}{$u$}
	\EndIf
  \EndFor
\EndFunction
\Statex
\Statex
\Statex
\Function {Visit2}{$u$} 
  \State local $U \gets \Call{Find}{u}$\label{scc2:find1}, local $F \gets \emptyset$\label{scc2:begin_atom1}
  \State $\tindex[U] \gets n$, $\troot[U] \gets n$
  \State $n \gets n+1$
  \State $\ismax[U] \gets \True$
  \State push $U$ on the stack $S$\label{scc2:push1}
  \State local $\single \gets \True$
  \State $F \gets \{ a \in A \mid T(a) = \{ u  \}  \}$\label{scc2:f_assign}
  \ForAll{$a \in F$} 
    \State $\collected_a \gets \True$
  \EndFor\label{scc2:end_atom1}
  \algstore{scc_break}
  \end{algorithmic}
  \end{minipage}
  \hfill
  \begin{minipage}[t]{0.58\textwidth}
  \vspace{0pt}
  \begin{algorithmic}[1]
  \algrestore{scc_break}
  \While{$F$ is not empty} \label{scc2:begin_edge_loop}
    \State pop $a$ from $F$
    \ForAll{$w \in H(a)$}
      \State local $W \gets \Call{Find}{w}$\label{scc2:find3}
      \sIf{$\tindex[W] = \Nil$} \Call{Visit2}{$w$}\label{scc2:rec_call}
      \If{$W \in \incomp$}
	\State $\ismax[U] \gets \False$
      \Else
	\State $\troot[U] \gets \min(\troot[U],\troot[W])$
	\State $\ismax[U] \gets \ismax[U] \And \ismax[W]$
      \EndIf
    \EndFor 
  \EndWhile\label{scc2:end_edge_loop}
  \If{$\troot[U] = \tindex[U]$} \label{scc2:is_root}
    \If{$\ismax[U] = \True$}
      \State local $i \gets \tindex[U]$\label{scc2:begin_node_merging}\label{scc2:begin_atom2}
      \State pop $V$ from $S$\label{scc2:pop1}
      \While{$\tindex[V] > i$} \label{scc2:begin_node_merging_loop}
	\State $\single \gets \False$
	\State $U \gets \Call{Merge}{U, V}$\label{scc2:merge}
	\State pop $V$ from $S$\label{scc2:pop2}
      \EndWhile \label{scc2:end_node_merging_loop}
      \State push $U$ on $S$\label{scc2:push2}
      \State $F \gets 
		\biggl\{ a \in A \Bigm| \begin{aligned}
		 	      & \collected_a = \False, \\[-1ex]
		 	      & \forall x \in T(a),\Call{Find}{x} = U 
		 	    \end{aligned} \biggr\}$ \label{scc2:f_assign2}
      \sForAll{$a \in F$} $\collected_a \gets \True$\label{scc2:end_atom2}\label{scc2:end_node_merging}
      
      \If{$\single = \False$} \label{scc2:begin_not_single}
	\State $n \gets i$, $\tindex[U] \gets n$, $n \gets n+1$ \label{scc2:nredefined}
	\State $\single \gets \True$, go to Line~\lineref{scc2:begin_edge_loop}\label{scc2:goto}
       \label{scc2:end_not_single}
      \EndIf
    \EndIf
    \Repeat \label{scc2:begin_non_max_scc_loop}\label{scc2:begin_atom3}
      \State pop $V$ from $S$, add $V$ to $\incomp$\label{scc2:pop3}
    \Until{$\tindex[V] = \tindex[U]$} \label{scc2:end_non_max_scc_loop}\label{scc2:end_atom3}
  \EndIf
\EndFunction
\end{algorithmic}
\end{minipage}
\end{scriptsize}
\caption{First intermediary form of our algorithm computing the terminal \scc{}s}\label{fig:maxscc2}
\end{figure}

The main difference between the first intermediary form and \Call{TerminalScc}{} is that it does not use auxiliary data associated to the hyperarcs to determine which ones are added to the digraph $\graph(\HH_\cur)$ after a vertex merging step. Instead, the stack $F$ is directly filled with the right hyperarcs (Lines~\lineref{scc2:f_assign} and~\lineref{scc2:f_assign2}). Besides, a boolean $\single$ is used to determine whether a vertex merging step has been executed. The notion of \emph{vertex merging step} is refined: it now refers to the execution of the instructions between Lines~\lineref{scc2:begin_node_merging} and~\lineref{scc2:end_node_merging} in which the boolean $\single$ is set to $\False$. 

For the sake of simplicity, we will suppose that sequences of assignment or stack manipulations are executed atomically. For instance, the sequences of instructions located in the blocks from Lines~\lineref{scc2:begin_atom1} and~\lineref{scc2:end_atom1}, or from Lines~\lineref{scc2:begin_atom2} and~\lineref{scc2:end_atom2}, and at from Lines~\lineref{scc2:begin_atom3} to~\lineref{scc2:end_atom3}, are considered as elementary instructions. Under this assumption, intermediate complex invariants do not have to be considered.

We first begin with very simple invariants:
\begin{invariant}\label{inv:tindex}
Let $U$ be a vertex of the current hypergraph $\HH_\cur$. Then $\tindex[U]$ is defined if, and only if, $\tindex[u]$ is defined for all $u \in \VV$ such that $\Find{u} = U$.
\end{invariant}

\begin{proof}
It can be shown by induction on the number of vertex merging steps which has been performed on $U$. 

In the basis case, there is a unique element $u \in \VV$ such that $\Find{u} = U$. Besides, $U = u$, so that the statement is trivial. 

After a merging step yielding the vertex $U$, we necessarily have $\tindex[U] \neq \Nil$. Moreover, all the vertices $V$ which has been merged into $U$ satisfied $\tindex[V] \neq \Nil$ because they were stored in the stack $S$. Applying the induction hypothesis terminates the proof.
\end{proof}

\begin{invariant}\label{inv:comporstack}
Let $u \in \VV$. When $\tindex[u]$ is defined, then $\Call{Find}{u}$ belongs either to the stack $S$, or to the set $\incomp$ (both cases cannot happen simultaneously).
\end{invariant}

\begin{proof}
Initially, $\Find{u} = u$, and once $\tindex[u]$ is defined, $\Find{u}$ is pushed on $S$ (Line~\lineref{scc2:push1}). Naturally, $u \not \in \incomp$, because otherwise, $\tindex[u]$ would have been defined before (see the condition Line~\lineref{scc2:end_non_max_scc_loop}). After that, $U = \Find{u}$ can be popped from $S$ at three possible locations:
\begin{itemize}[\textbullet]
\item at Lines~\lineref{scc2:pop1} or~\lineref{scc2:pop2}, in which case $U$ is transformed into a vertex $U'$ which is immediately pushed on the stack $S$ at Line~\lineref{scc2:push2}. Since after that, $\Find{u} = U'$, the property $\Find{u} \in S$ still holds.
\item at Line~\lineref{scc2:pop3}, in which case it is directly appended to the set $\incomp$.\qedhere
\end{itemize}
\end{proof}

\begin{invariant}\label{inv:incomp}
The set $\incomp$ is always growing. 
\end{invariant}

\begin{proof}
Once an element is added to $\incomp$, it is never removed from it nor merged into another vertex (the function \Call{Merge}{} is always called on elements immediately popped from the stack $S$).
\end{proof}

\begin{proposition}\label{prop:maxscc2}
After the algorithm $\Call{TerminalScc2}{\HH}$ terminates, the sets $\{ v \in \VV \mid \Call{Find}{v} = U \text{ and } \ismax[U] = \True \}$ are precisely the terminal \scc{}s of $\HH$.
\end{proposition}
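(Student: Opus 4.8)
The plan is to establish that $\Call{TerminalScc2}{}$ is a faithful implementation of the method sketched in Section~\ref{subsec:maxscc_principle}: it performs a Tarjan-style depth-first search on the evolving digraph $\graph(\HH_\cur)$, where $\HH_\cur$ is the image $\Call{Find}{}(\HH)$ of $\HH$ under the current union-find structure, and each time it detects a terminal \scc{} of $\graph(\HH_\cur)$ that is not a singleton it merges it and resumes. First I would record the obvious global invariants: $\Call{MakeSet}{}$ is called exactly once per vertex of $\VV$, $\Call{Merge}{}$ is only ever called on two distinct vertices both just popped from $S$, so the vertices of $\HH_\cur$ always partition $\VV$ and $\HH_\cur = \Call{Find}{}(\HH)$; Invariants~\ref{inv:tindex}--\ref{inv:incomp} supply the companion facts about $\tindex$, the stack $S$ and the set $\incomp$. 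The core structural invariant is a strengthening of the standard Tarjan invariant: each time the edge loop of $\Call{Visit2}{u}$ has emptied $F$ and control reaches Line~\lineref{scc2:is_root}, the stack $S$ is the active path of a depth-first search of the \emph{current} $\graph(\HH_\cur)$, $\tindex$ records the order of first visit, $\troot[U]$ is the usual lowlink, and $\troot[U] = \tindex[U]$ holds precisely when $U$ is the root of a \scc{} of $\graph(\HH_\cur)$ --- this \scc{} being exactly the set of vertices lying above $U$ in $S$.

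The second invariant concerns $\ismax$: at the instant a root $U$ with $\troot[U] = \tindex[U]$ is reached, $\ismax[U] = \True$ if and only if the \scc{} $C$ of $U$ in $\graph(\HH_\cur)$ is terminal in $\graph(\HH_\cur)$ (and, more generally, $\ismax[V] = \False$ for $V$ on the stack records that the part of the DFS subtree of $V$ already explored leaves $C$). For the ``only if'' direction, a non-terminal \scc{} of a digraph has an out-arc to a different \scc{}; by the root property that arc cannot point below $U$ in $S$, so when it is relaxed in the edge loop its endpoint already lies in $\incomp$, and the corresponding branch sets $\ismax[U] \gets \False$, a value never reset inside the current call. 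For the ``if'' direction one checks that each assignment to $\ismax[U]$ mirrors exactly one of the two ways a \scc{} can reach another: a direct arc into a finished \scc{} ($\incomp$ branch) or the lowlink-style propagation $\ismax[U] \gets \ismax[U] \wedge \ismax[W]$ along an arc internal to $C$, where the induction hypothesis on $\ismax[W]$ applies. A delicate point here, which is why the verification is routed through this and a later intermediary form rather than done directly on $\Call{TerminalScc}{}$, is that the local variable $W$ is computed before the recursive call $\Call{Visit2}{w}$ and the representative of $w$ may change (through a merge) during it; I would resolve this by noting that $\Call{Merge}{}$ preserves $\troot$ and $\ismax$ across the \scc{} being collapsed, so the reads after the recursive return are still the intended ones (equivalently, read them through $\Call{Find}{w}$).

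Next I would establish the $F$-invariant: whenever control is at Line~\lineref{scc2:begin_edge_loop}, the hyperarcs in $F$ together with those already flagged $\collected$ are exactly the simple hyperarcs of the current $\HH_\cur$ incident to the vertex being explored, so that the loop performs precisely the edge relaxations of a DFS of $\graph(\HH_\cur)$ from that vertex. This holds at the first entry by Line~\lineref{scc2:f_assign}, and is restored after a vertex merging step by Line~\lineref{scc2:f_assign2}, which selects exactly the hyperarcs not yet collected whose tail has become the single vertex $U$; the $\collected$ flags guarantee that no hyperarc is scanned twice, which in particular gives termination of the edge loop. When a genuine merge has taken place ($\single$ set to $\False$), the re-indexing $n \gets i$, $\tindex[U] \gets n$ of Line~\lineref{scc2:nredefined} followed by the jump of Line~\lineref{scc2:goto} restarts exploration of the outgoing arcs of the merged vertex $U$ with the same index it would have had as the root, so the structural invariant above is re-established with the collapsed \scc{} now seen as the singleton $\{U\}$ in $\graph(\HH_\cur)$. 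Global termination then follows from two bounds: at most $\card{\VV} - 1$ merges occur in total, and between consecutive merges each simple hyperarc of $\HH_\cur$ is scanned at most once.

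Putting everything together: when $\Call{TerminalScc2}{\HH}$ halts, the structural invariant shows that every terminal \scc{} of the final digraph $\graph(\HH_\cur)$ is a singleton, since otherwise the merging code at Lines~\lineref{scc2:begin_node_merging}--\lineref{scc2:end_node_merging} would have fired once more; by Proposition~\ref{prop:terminal_scc} the terminal \scc{}s of the final $\HH_\cur$ are then exactly the singleton vertices $U$ with $\ismax[U] = \True$ (the $\ismax$ invariant identifies which of these singletons are terminal). Applying Proposition~\ref{prop:collapse} once for each merge performed during the run, the terminal \scc{}s of $\HH_\cur$ are in one-to-one correspondence with those of $\HH$, the vertex $U$ of $\HH_\cur$ corresponding to the set $\{v \in \VV \mid \Call{Find}{v} = U\}$ of original vertices it represents. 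Composing the two facts, the family $\{\, \{v \in \VV \mid \Call{Find}{v} = U\} \mid \ismax[U] = \True \,\}$ is exactly the family of terminal \scc{}s of $\HH$, which is the assertion. The step I expect to be the main obstacle is the $\ismax$ invariant together with its interaction with the union-find merges and the jump of Line~\lineref{scc2:goto}: this is the only place where the evolving graph, the stale locals $W$, and the resumption of the search all come together, and it is precisely what motivates splitting the correctness proof of $\Call{TerminalScc}{}$ through the intermediary forms $\Call{TerminalScc2}{}$ and $\Call{TerminalScc3}{}$.
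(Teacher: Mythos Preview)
Your plan is sound but takes a genuinely different route from the paper's proof. You propose to maintain, throughout the whole run, a package of Tarjan-style invariants on the evolving digraph $\graph(\HH_\cur)$ (stack structure, lowlink, the $\ismax$ flag, the contents of $F$), and then conclude via Propositions~\ref{prop:terminal_scc} and~\ref{prop:collapse} at termination. The paper instead argues by \emph{induction on the number of vertex merging steps}. In the basis case (no merges) the algorithm behaves exactly like Tarjan's algorithm on $\graph(\HH)$, whose terminal \scc{}s are then all singletons, and Proposition~\ref{prop:terminal_scc} applies. For the inductive step, the paper isolates the \emph{first} merging step, which collapses a terminal \scc{} $C$ of $\graph(\HH)$ into a vertex $X$, and constructs a hypergraph $\HH'$ by performing this collapse and discarding the simple hyperarcs with tail in $C$. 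The heart of the argument is then a simulation: the execution of $\Call{TerminalScc2}{\HH}$ from the jump at Line~\lineref{scc2:goto} coincides, state for state, with the execution of $\Call{TerminalScc2}{\HH'}$ from Line~\lineref{scc2:begin_edge_loop} in $\Call{Visit2}{X}$. Correctness follows from the induction hypothesis on $\HH'$ (which has one fewer merge) together with Proposition~\ref{prop:collapse}.

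The trade-off is this: your approach keeps a single hypergraph but must carry all the invariants \emph{across} each merge and each \textbf{goto}, which is where the stale-$W$ issue you flag lives; the paper's approach restarts the analysis on a fresh hypergraph at every merge, so the Tarjan-style invariants need only be checked on an ordinary digraph run, and the delicate interaction with merging is packaged into the single claim that the two executions coincide. One place where your sketch would need more care is the assertion that ``$\Call{Merge}{}$ preserves $\troot$ across the \scc{} being collapsed'': in general different vertices of a terminal \scc{} carry different lowlinks, so after a merge the value $\troot[U]$ depends on which representative the union-find returns, and you would have to argue that this does not affect subsequent control flow. The paper's simulation argument avoids having to confront this directly.
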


\begin{proof}
We prove the whole statement by induction on the number of vertex merging steps.

\textit{Basis Case.} First, suppose that the hypergraph $\HH$ is such that no vertices are merged during the execution of \Call{TerminalScc2}{$\HH$}, \ie\ the vertex merging loop (from Lines~\lineref{scc2:begin_node_merging_loop} to~\lineref{scc2:end_node_merging_loop}) is never executed. Then the boolean $\single$ is always set to $\True$, so that $n$ is never redefined to $i+1$ (Line~\lineref{scc2:nredefined}), and there is no back edge to Line~\lineref{scc2:begin_edge_loop} in the control-flow graph. It follows that removing all the lines between Lines~\lineref{scc2:begin_node_merging} to~\lineref{scc2:end_not_single} does not change the behavior of the algorithm. Besides, since the function \Call{Merge}{} is never called, $\Call{Find}{u}$ always coincides with $u$. Finally, at Line~\lineref{scc2:f_assign}, $F$ is precisely assigned to the set of simple hyperarcs leaving $u$ in $\HH$, so that the loop from Lines~\lineref{scc2:begin_edge_loop} to~\lineref{scc2:end_edge_loop} iterates on the successors of $u$ in $\graph(\HH)$. As a consequence, the algorithm \Call{TerminalScc2}{$\HH$} behaves exactly like \Call{TerminalScc2}{$\graph(\HH)$}. Moreover, under our assumption, the terminal \scc{}s of $\graph(\HH)$ are all reduced to singletons (otherwise, the loop from Lines~\lineref{scc2:begin_node_merging_loop} to~\lineref{scc2:end_node_merging_loop} would be executed, and some vertices would be merged). Therefore, by Proposition~\ref{prop:terminal_scc}, the statement in Proposition~\ref{prop:maxscc2} holds.

\textit{Inductive Case.} Suppose that the vertex merging loop is executed at least once, and that its first execution happens during the execution of, say, \Call{Visit2}{$x$}. Consider the state of the algorithm at Line~\lineref{scc2:begin_node_merging} just before the execution of the first occurrence of the vertex merging step. Until that point, $\Find{v}$ is still equal to $v$ for all vertices $v \in \VV$, so that the execution of \Call{TerminalScc2}{$\HH$} coincides with the execution of \Call{TerminalScc2}{$\graph(\HH)$}. Consequently, if $C$ is the set formed by the vertices $y$ located above $x$ in the stack $S$ (including $x$), $C$ forms a terminal \scc{} of $\graph(\HH)$. In particular, the elements of $C$ are located in a same \scc{} of the hypergraph $\HH$.

Consider the hypergraph $\HH'$ obtained by merging the elements of $C$ in the hypergraph $(\VV,A \setminus \{ a \mid \exists y \in C \text{ s.t. } T(a) = \{ y \} \})$, and let $X$ be the resulting vertex. For now, we may add a hypergraph as last argument of the functions \Call{Visit2}{}, \Call{Find}{}, \etc{}, to distinguish their execution in the context of the call to \Call{TerminalScc2}{$\HH$} or \Call{TerminalScc2}{$\HH'$}. We make the following observations:
\begin{itemize}[\textbullet]
\item the vertex $x$ is the first element of the component $C$ to be visited during the execution of \Call{TerminalScc2}{$\HH$}. It follows that the execution of \Call{TerminalScc2}{$\HH$} until the call to \Call{Visit2}{$x,\HH$} coincides with the execution of \Call{TerminalScc2}{$\HH'$} until the call to \Call{Visit2}{$X,\HH'$}.

\item besides, during the execution of \Call{Visit2}{$x,\HH$}, the execution of the loop from Lines~\lineref{scc2:begin_edge_loop} to~\lineref{scc2:end_edge_loop} only has a local impact, \ie\ on the $\ismax[y]$, $\tindex[y]$, or $\troot[y]$ for $y \in C$, and not on any information relative to other vertices. Indeed, we claim that the set of the vertices $y$ on which \Call{Visit2}{} is called during the execution of the loop is exactly $C \setminus \{ x  \}$. First, for all $y \in C \setminus \{ x \}$, \Call{Visit2}{$y$} has necessarily been executed \emph{after} Line~\lineref{scc2:begin_edge_loop} (otherwise, by Invariant~\ref{inv:comporstack}, $y$ would be either below $x$ in the stack $S$, or in $\incomp$). Conversely, suppose that after Line~\lineref{scc2:begin_edge_loop}, there is a call to \Call{Visit2}{$t$} with $t \not \in C$. By Invariant~\ref{inv:comporstack}, $t$ belongs to $\incomp$, so that for one of the vertices $w$ examined in the loop, either $w \in \incomp$ or $\ismax[w] = \False$ after the call to \Call{Visit2}{$w$}. Hence $\ismax[x]$ should be $\False$, which contradicts our assumptions.

\item finally, from the execution of Line~\lineref{scc2:goto} during the call to \Call{Visit2}{$x,\HH$}, our algorithm behaves exactly as \Call{TerminalScc2}{$\HH'$} from the execution of Line~\lineref{scc2:begin_edge_loop} in \Call{Visit2}{$X,\HH'$}. Indeed, $\tindex[X]$ is equal to $i$, and the latter is equal to $n-1$. Similarly, for all $y \in C$, $\troot[y] = i$ and $\ismax[y] = \True$. The vertex $X$ being equal to one of the $y \in C$, we also have $\troot[X] = i$ and $\ismax[X] = \True$. Moreover, $X$ is the top element of $S$. 

Furthermore, it can be verified that at Line~\lineref{scc2:f_assign2}, the set $F$ contains exactly all the hyperarcs of $A$ which generate the simple hyperarcs leaving $X$ in $\HH'$: they are exactly characterized by
\begin{align*}
& \Find{z,\HH} = X \text{ for all } z \in T(a), \text{ and }T(a) \neq \{y\} \text{ for all } y \in C  \\
\Longleftrightarrow{} & \Find{z,\HH} = X \text{ for all } z \in T(a), \text{ and } \collected_a = \False 
\end{align*}
since at that Line~\lineref{scc2:f_assign2}, a hyperarc $a$ satisfies $\collected_a = \True$ if, and only if, $T(a)$ is reduced to a singleton $\{t\}$ such that $\tindex[t]$ is defined. 

Finally, for all vertices $y \in C$, $\Call{Find}{y,\HH}$ can be equivalently replaced by $\Call{Find}{X,\HH'}$.
\end{itemize}
As a consequence, \Call{TerminalScc2}{$\HH$} and \Call{TerminalScc2}{$\HH'$} return the same result. Both functions perform the same union-find operations, except the first the vertex merging step executed by \Call{TerminalScc2}{$\HH$} on $C$.

Let $f$ be the function which maps all vertices $y \in C$ to $X$, and any other vertex to itself. We claim that $\HH'$ and $f(\HH)$ have the same reachability graph, \ie\ $\reach_{\HH'}$ and $\reach_{f(\HH)}$ are identical relations. Indeed, the two hypergraphs only differ on the images of the hyperarcs $a \in A$ such that $T(a) = \{y\}$ for some $y \in C$. For such hyperarcs, we have $H(a) \subseteq C$, because otherwise, $\ismax[x]$ would have been set to $\False$ (\ie\ the component $C$ would not be terminal). It follows that their are mapped to the cycle $(\{X\},\{X\})$ by $f$, so that $\HH'$ and $f(\HH)$ clearly have the same reachability graph. In particular, they have the same terminal \scc{}s.

Finally, since the elements of $C$ are in a same \scc{} of $\HH$, Proposition~\ref{prop:collapse} shows that the function $f$ induces a one-to-one correspondence between the \scc{}s of $\HH$ and the \scc{}s of $f(\HH)$:
\begin{align*}
D & \longmapsto f(D) \\
(D' \setminus \{ X \}) \cup C & \longmapsfrom D' && \text{if }X \in D' \\
D' & \longmapsfrom D' && \text{otherwise}.
\end{align*}
The action of the function $f$ exactly corresponds to the vertex merging step performed on $C$. Since by induction hypothesis, \Call{TerminalScc2}{$\HH'$} determines the terminal \scc{}s in $f(\HH)$, it follows that Proposition~\ref{prop:maxscc2} holds.
\end{proof}

\begin{figure}[t]
\begin{scriptsize}
\begin{minipage}[t]{0.44\textwidth}
\vspace{0pt}
\begin{algorithmic}[1]
\Function {TerminalScc3}{$\VV,A$}
  \State $n \gets 0$, $S \gets \emptystack$, $\incomp \gets \emptyset$
  \ForAll{$a \in A$} \tikz[remember picture, baseline]{\coordinate (t4);}
    \State \tikz[remember picture, baseline]{\coordinate (l4);}$r_a \gets \Nil$, $c_a \gets 0$\tikz[remember picture, baseline]{\coordinate (r4);}\tikz[remember picture, baseline]{\coordinate (b4);}
    \State $\collected_a \gets \False$\label{scc3:collected_init}
  \EndFor
  \ForAll{$u \in \VV$} 
    \State $\tindex[u] \gets \Nil$
    \State $\troot[u] \gets \Nil$
    \State \Call{Makeset}{$u$}, $F_u \gets \emptystack$
  \EndFor
  \ForAll{$u \in \VV$}
    \If{$\tindex[u] = \Nil$} 
      \State \Call{Visit3}{$u$}\label{scc3:init_call}
    \EndIf
  \EndFor
\EndFunction
\Statex
\Function {Visit3}{$u$} 
  \State local $U \gets \Call{Find}{u}$\label{scc3:find1}, local $F \gets \emptystack$\label{scc3:begin_atom1}
  \State $\tindex[U] \gets n$, $\troot[U] \gets n$
  \State $n \gets n+1$\label{scc3:troot_def}
  \State $\ismax[U] \gets \True$
  \State push $U$ on the stack $S$
  \ForAll{$a \in A_u$}\label{scc3:begin_node_loop}
    \sIf{$\card{T(a)} = 1$} push $a$ on $F$\label{scc3:f_push}
    \fElse
      \sIf{$r_a = \Nil$} $r_a \gets u$\tikz[remember picture, baseline]{\coordinate (t);}
      \State \tikz[remember picture, baseline]{\coordinate (l);}local $R_a \gets \Call{Find}{r_a}$\label{scc3:find2}
      \If{$R_a$ appears in $S$}\label{scc3:root_reach}
	\State $c_a \gets c_a + 1$\label{scc3:counter_increment}
	\If{$c_a = \card{T(a)}$}\label{scc3:counter_reach}
	  \State push $a$ on the stack $F_{R_a}$\label{scc3:stack_edge}\tikz[remember picture, baseline]{\coordinate (r);}\label{scc3:root_def}
	\EndIf
      \EndIf\tikz[remember picture, baseline]{\coordinate (b);}
    \EndIf
  \EndFor
  \ForAll{$a \in F$} \label{scc3:collected1_begin}
    \State $\collected_a \gets \True$
  \EndFor\label{scc3:collected1_end}\label{scc3:end_atom1}
\algstore{scc_break}
\end{algorithmic}
\end{minipage}
\hfill
\begin{minipage}[t]{0.55\textwidth}
\vspace{0pt}
\begin{algorithmic}[1]
\algrestore{scc_break}
  \While{$F$ is not empty} \label{scc3:begin_edge_loop}
    \State pop $a$ from $F$
    \ForAll{$w \in H(a)$}
      \State local $W \gets \Call{Find}{w}$\label{scc3:find3}
      \sIf{$\troot[W] = \Nil$} \Call{Visit3}{$w$}\label{scc3:rec_call}
      \If{$W \in \incomp$}
	\State $\ismax[U] \gets \False$
      \Else
	\State $\troot[U] \gets \min(\troot[U],\troot[W])$
	\State $\ismax[U] \!\! \gets \!\! \ismax[U] \! \And \! \ismax[W]$
      \EndIf
    \EndFor
  \EndWhile \label{scc3:end_edge_loop}
  \If{$\troot[U] = \tindex[U]$} 
    \If{$\ismax[U] = \True$}\tikz[remember picture, baseline]{\coordinate (t2);}
      \State local $i \gets \tindex[U]$\label{scc3:begin_node_merging}
      \State \tikz[remember picture, baseline]{\coordinate (l2);}pop each $a \in F_U$ and push it on $F$\tikz[remember picture, baseline]{\coordinate (r2);}\tikz[remember picture, baseline]{\coordinate (b2);}\label{scc3:push_on_fprime1}\label{scc3:begin_atom2}
      \State pop $V$ from $S$
      \While{$\tindex[V] > i$} \tikz[remember picture, baseline]{\coordinate (t3);}\label{scc3:begin_node_merging_loop}
	\State \tikz[remember picture, baseline]{\coordinate (l3);}pop each $a \in F_V$ and push it on $F$\tikz[remember picture, baseline]{\coordinate (r3);}\tikz[remember picture, baseline]{\coordinate (b3);}\label{scc3:push_on_fprime2}
	\State $U \gets \Call{Merge}{U, V}$\label{scc3:merge}
	\State pop $V$ from $S$
      \EndWhile \label{scc3:end_node_merging_loop}\label{scc3:end_node_merging}
      \State $\tindex[U] \gets i$, push $U$ on $S$
      \State $F \gets \biggl\{\! a \in A \!\Bigm|\! \begin{aligned}
	      & \collected_a = \False, \\[-1ex]
	      & \forall x \in T(a),\Call{Find}{x} = U
	    \end{aligned} \!\biggr\}$ \label{scc3:f_assign2}
      \sForAll{$a \in F$} $\collected_a \gets \True$\label{scc3:end_atom2}\label{scc3:collected2}
      \sIf{$F \neq \emptyset$} go to Line~\lineref{scc3:begin_edge_loop}\label{scc3:goto}\label{scc3:begin_not_single}
    \EndIf
    \Repeat \label{scc3:begin_non_max_scc_loop}\label{scc3:begin_atom3}
      \State pop $V$ from $S$, add $V$ to $\incomp$
    \Until{$\tindex[V] = \tindex[U]$} \label{scc3:end_non_max_scc_loop}
  \EndIf
\EndFunction\label{scc3:end_atom3}
\end{algorithmic}
\end{minipage}
\end{scriptsize}
\begin{tikzpicture}[remember picture,overlay]
\path (t) ++ (0ex,2ex) coordinate (t);
\path (b) ++ (0ex,-0.5ex) coordinate (b);
\path (l) ++ (-0.5ex,0ex) coordinate (l);
\path (r) ++ (0.5ex,0ex) coordinate (r);
\path let \p1 = (l), \p2 = (t) in coordinate (lt) at (\x1,\y2);
\path let \p1 = (l), \p2 = (b) in coordinate (lb) at (\x1,\y2);
\path let \p1 = (r), \p2 = (t) in coordinate (rt) at (\x1,\y2);
\path let \p1 = (r), \p2 = (b) in coordinate (rb) at (\x1,\y2);
\filldraw[draw=none, style=very nearly transparent, fill=gray!70!black] (lt) -- (rt) -- (rb) -- (lb) -- cycle;

\path (t2) ++ (0ex,-0.5ex) coordinate (t2);
\path (b2) ++ (0ex,-0.5ex) coordinate (b2);
\path (l2) ++ (-0.5ex,0ex) coordinate (l2);
\path (r2) ++ (0.5ex,0ex) coordinate (r2);
\path let \p1 = (l2), \p2 = (t2) in coordinate (lt2) at (\x1,\y2);
\path let \p1 = (l2), \p2 = (b2) in coordinate (lb2) at (\x1,\y2);
\path let \p1 = (r2), \p2 = (t2) in coordinate (rt2) at (\x1,\y2);
\path let \p1 = (r2), \p2 = (b2) in coordinate (rb2) at (\x1,\y2);
\filldraw[draw=none, style=very nearly transparent, fill=gray!70!black] (lt2) -- (rt2) -- (rb2) -- (lb2) -- cycle;

\path (t3) ++ (0ex,-0.5ex) coordinate (t3);
\path (b3) ++ (0ex,-0.5ex) coordinate (b3);
\path (l3) ++ (-0.5ex,0ex) coordinate (l3);
\path (r3) ++ (0.5ex,0ex) coordinate (r3);
\path let \p1 = (l3), \p3 = (t3) in coordinate (lt3) at (\x1,\y3);
\path let \p1 = (l3), \p3 = (b3) in coordinate (lb3) at (\x1,\y3);
\path let \p1 = (r3), \p3 = (t3) in coordinate (rt3) at (\x1,\y3);
\path let \p1 = (r3), \p3 = (b3) in coordinate (rb3) at (\x1,\y3);
\filldraw[draw=none, style=very nearly transparent, fill=gray!70!black] (lt3) -- (rt3) -- (rb3) -- (lb3) -- cycle;

\path (t4) ++ (0ex,-0.5ex) coordinate (t4);
\path (b4) ++ (0ex,-0.5ex) coordinate (b4);
\path (l4) ++ (-0.5ex,0ex) coordinate (l4);
\path (r4) ++ (0.5ex,0ex) coordinate (r4);
\path let \p1 = (l4), \p3 = (t4) in coordinate (lt4) at (\x1,\y3);
\path let \p1 = (l4), \p3 = (b4) in coordinate (lb4) at (\x1,\y3);
\path let \p1 = (r4), \p3 = (t4) in coordinate (rt4) at (\x1,\y3);
\path let \p1 = (r4), \p3 = (b4) in coordinate (rb4) at (\x1,\y3);
\filldraw[draw=none, style=very nearly transparent, fill=gray!70!black] (lt4) -- (rt4) -- (rb4) -- (lb4) -- cycle;
\end{tikzpicture}
\caption{Second intermediary form of our algorithm computing the terminal \scc{}s}\label{fig:maxscc3}
\end{figure}

The second intermediary version of our algorithm, \Call{TerminalScc3}{}, is based on the first one, but it performs the same computations on the auxiliary data $r_a$ and $c_a$ as in \Call{TerminalScc}{}. However, the latter are never used, because at Line~\lineref{scc3:f_assign2}, $F$ is re-assigned to the value provided in \Call{TerminalScc2}{}. It follows that for now, the parts in gray can be ignored. The following lemma states that \Call{TerminalScc2}{} and \Call{TerminalScc3}{} are equivalent:
\begin{proposition}\label{prop:maxscc3}
Let $\HH$ be a directed hypergraph. After the execution of the algorithm $\Call{TerminalScc3}{\HH}$, the sets $\{ v \in \VV \mid \Call{Find}{v} = U \text{ and } \ismax[U] = \True\}$ precisely correspond to the terminal \scc{}s of $\HH$.
\end{proposition}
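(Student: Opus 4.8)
The plan is to reduce to Proposition~\ref{prop:maxscc2}: it suffices to show that \Call{TerminalScc3}{$\HH$} and \Call{TerminalScc2}{$\HH$} terminate with the same union--find partition of $\VV$ and the same array $\ismax$. The first observation is that the gray-shaded instructions of \Call{TerminalScc3}{} are inert. Indeed, the variables $r_a$, $c_a$ and the stacks $F_u$ are written only by gray code, and the only way they can influence a non-gray variable is through Lines~\lineref{scc3:push_on_fprime1} and~\lineref{scc3:push_on_fprime2}, which push hyperarcs onto $F$; but $F$ is entirely overwritten immediately afterwards at Line~\lineref{scc3:f_assign2}. Hence, deleting the gray code and replacing the per-arc pushes of Line~\lineref{scc3:f_push} by the single assignment $F \gets \{ a \in A \mid T(a) = \{u\} \}$ (which yields the same set, the flags $\collected_a$ being set for exactly the same hyperarcs by Lines~\lineref{scc3:collected1_begin}--\lineref{scc3:collected1_end} as by Line~\lineref{scc2:f_assign}) produces a program \Call{TerminalScc3'}{} with the same observable behaviour as \Call{TerminalScc3}{}.

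Next I would compare \Call{TerminalScc3'}{} with \Call{TerminalScc2}{} line by line. Two differences remain. First, the guard of the recursive call tests whether $\troot[W]$ is undefined (Line~\lineref{scc3:rec_call}) instead of whether $\tindex[W]$ is undefined (Line~\lineref{scc2:rec_call}); but for every vertex $W$ of $\HH_\cur$, $\tindex[W]$ and $\troot[W]$ are assigned together at the start of \Call{Visit}{}, and neither is ever un-defined once set, so $\tindex[W]$ is defined exactly when $\troot[W]$ is, and the two guards coincide at all times. Second, after the vertex merging loop \Call{TerminalScc2}{} uses the boolean $\single$, resets $n$ to $i+1$ and jumps back to the edge loop unconditionally, whereas \Call{TerminalScc3'}{} sets $\tindex[U] \gets i$, leaves $n$ untouched, and jumps back only when $F \neq \emptyset$.

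To handle the second difference I would set up a step-by-step simulation between the two executions, matching the union--find structure, the stack $S$, the set $\incomp$, the array $\ismax$, and the \emph{relative} order of the defined $\tindex$ values. The point is that every branching decision of the algorithm ($\troot[U] = \tindex[U]$, $\tindex[V] > i$, membership in $S$ or $\incomp$, and so on) depends only on comparisons among $\tindex$ values, never on their magnitudes, so matching their relative order suffices. When a merging step collapses a terminal component $C$ of $\graph(\HH_\cur)$ whose root has index $i$, \Call{TerminalScc2}{} re-uses $i+1, i+2, \dots$ for the vertices visited next while \Call{TerminalScc3'}{} keeps incrementing $n$; in both runs these new indices exceed $i = \tindex[U]$ and are handed out in the same order, and the only already-visited vertices lying outside $C$ have by then been placed in $\incomp$ (exactly as argued in the proof of Proposition~\ref{prop:maxscc2}), so no stale index is ever compared with a fresh one and the relative order coincides. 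Finally, when the reassignment of $F$ after the merging loop gives $F = \emptyset$, \Call{TerminalScc2}{} still jumps back, but the edge loop is vacuous, the root test passes again, the merging loop pops nothing because now $\tindex[U] = i$, $F$ is reassigned empty, $\single$ has become $\True$, and control falls through to the $\incomp$ loop --- precisely what \Call{TerminalScc3'}{} does directly. Hence the two algorithms compute the same final partition and the same $\ismax$.

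The main obstacle is this last simulation argument: one must make the invariant ``the relative order of live $\tindex$ values agrees in the two runs'' precise and verify it across the post-merging control-flow divergence, which is exactly where abandoning the reset of $n$ could in principle create index clashes. Importing the fact, established in the proof of Proposition~\ref{prop:maxscc2}, that every vertex visited between the entry of the current \Call{Visit}{} call and the merging step which is not in $C$ is already in $\incomp$, rules this out. Once the simulation is in place, $\{ v \in \VV \mid \Call{Find}{v} = U \text{ and } \ismax[U] = \True\}$ is the same family of sets after \Call{TerminalScc3}{$\HH$} as after \Call{TerminalScc2}{$\HH$}, and Proposition~\ref{prop:maxscc2} identifies it with the terminal \scc{}s of $\HH$, which proves Proposition~\ref{prop:maxscc3}.
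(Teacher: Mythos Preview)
Your reduction to Proposition~\ref{prop:maxscc2} is the right idea and your handling of the inert gray code is correct, but the argument you build on top of it is considerably heavier than the paper's. The paper disposes of the three differences between \Call{TerminalScc2}{} and \Call{TerminalScc3}{} by three one-line observations: (a) the loop pushing simple hyperarcs onto $F$ yields exactly $\{a \in A \mid T(a)=\{u\}\}$; (b) the guard $F \neq \emptyset$ can replace $\single = \False$ because $F \neq \emptyset$ forces at least one merge, while if $\single = \False$ and $F = \emptyset$ the jump back to Line~\lineref{scc2:begin_edge_loop} is vacuous and falls through with $\single = \True$ on the next pass; (c) not resetting $n$ is harmless because the algorithm only needs the values of $n$ to be \emph{increasing}, not consecutive. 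No explicit bisimulation is set up.

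Your bisimulation via the relative order of $\tindex$ values is a valid route to the same conclusion, but the justification you give for it contains an inaccuracy. You assert that ``the only already-visited vertices lying outside $C$ have by then been placed in $\incomp$,'' citing the proof of Proposition~\ref{prop:maxscc2}. That proof establishes only that the vertices visited \emph{during the edge loop of} $\Call{Visit2}{x}$ are exactly $C \setminus \{x\}$; it says nothing about vertices sitting on $S$ below the root $U$, which are visited, lie outside $C$, and are not in $\incomp$. Fortunately your conclusion survives: those vertices carry indices strictly less than $i$, every fresh index exceeds $i$ in both algorithms, and indices of $\incomp$-vertices are never compared (the test $W \in \incomp$ short-circuits), so all order comparisons agree regardless. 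This is exactly the paper's observation (c), and once you state it that way the elaborate simulation becomes unnecessary. Your remark on the $\troot[W]=\Nil$ versus $\tindex[W]=\Nil$ guard is correct and is a point the paper glosses over.
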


\begin{proof}
When \Call{Visit3}{$u$} is executed, the local stack $F$ is not directly assigned to the set $\{ a \in A \mid T(a) = \{ u \} \}$ (see Line~\lineref{scc2:f_assign} in Figure~\ref{fig:maxscc2}), but built by several iterations on the set $A_u$ (Line~\lineref{scc3:f_push}). Since $u \in T(a)$ and $\card{T(a)} = 1$ holds if, and only if, $T(a)$ is reduced to $\{ u \}$, \Call{Visit3}{$u$} initially fills $F$ with the same hyperarcs as \Call{Visit2}{$u$}.

Besides, the condition $\single = \False$ in \Call{Visit2}{} (Line~\lineref{scc2:begin_not_single}) is replaced by $F \neq \emptyset$ (Line~\lineref{scc3:begin_not_single}). We claim that the condition $F \neq \emptyset$ can be safely used in \Call{Visit2}{} as well. Indeed, in \Call{Visit2}{}, $F \neq \emptyset$ implies $\single = \False$. Conversely, suppose that in \Call{Visit2}{}, $\single = \False$ and $F = \emptyset$, so that the algorithm goes back to Line~\lineref{scc2:goto} after having $\single$ to $\True$. The loop from Lines~\lineref{scc2:begin_edge_loop} to~\lineref{scc2:end_edge_loop} is not executed since $F = \emptyset$, and it directly leads to a new execution of Lines~\lineref{scc2:is_root} to~\lineref{scc2:begin_not_single} with $\single = \True$. Therefore, going back to Line~\lineref{scc2:goto} was useless.

Finally, during the vertex merging step in \Call{Visit3}{}, $n$ keeps its value, which is greater than or equal to $i+1$, but is not necessarily equal to $i+1$ like in \Call{Visit2}{} (just after Line~\lineref{scc2:nredefined}). This is safe because the whole algorithm only need that $n$ take increasing values, and not necessarily consecutive ones.

We conclude by applying Proposition~\ref{prop:maxscc2}.
\end{proof}

We make similar assumptions on the atomicity of the sequences of instructions. Note that Invariant~\ref{inv:tindex},~\ref{inv:comporstack}, and~\ref{inv:incomp} still holds in \Call{Visit3}{}.

\begin{invariant}\label{inv:re}
Let $a \in A$ such that $\card{T(a)} > 1$. If for all $x \in T(a)$, $\tindex[x]$ is defined, then the root $r_a$ is defined.
\end{invariant}

\begin{proof}
For all $x \in T(a)$, $\Call{Visit3}{x}$ has been called. The root $r_a$ has necessarily been defined at the first of these calls (remember that the block from Lines~\lineref{scc3:begin_atom1} to~\lineref{scc3:end_atom1} is supposed to be executed atomically).
\end{proof}

\begin{invariant}\label{inv:incomp2}
Consider a state $\cur$ of the algorithm in which $U \in \incomp$. Then any vertex reachable from $U$ in $\graph(\HH_\cur)$ is also in $\incomp$. 
\end{invariant}

\begin{proof}
The invariant clearly holds when $U$ is placed in $\incomp$. Using the atomicity assumptions, the call to $\Call{Visit3}{u}$ is necessarily terminated. Let $\old$ be the state of the algorithm at that point, and $\HH_\old$ and $\incomp_\old$ the corresponding hypergraph and set of terminated vertices at that state respectively. Since $\Call{Visit3}{u}$ has performed a depth-first search from the vertex $U$ in $\graph(\HH_\old)$, all the vertices reachable from $U$ in $\HH_\old$ stand in $\incomp_\old$.

We claim that the invariant is then preserved by the following vertex merging steps. The graph arcs which may be added by the latter leave vertices in $S$, and consequently not from elements in $\incomp$ (by Invariant~\ref{inv:comporstack}). It follows that the set of reachable vertices from elements of $\incomp_\old$ is not changed by future vertex merging steps. As a result, \emph{all the vertices reachable from $U$ in $\graph(\HH_\cur)$ are elements of $\incomp_\old$}. Since by Invariant~\ref{inv:incomp2}, $\incomp_\old \subseteq \incomp$, this proves the whole invariant in the state $\cur$.
\end{proof}

\begin{invariant}\label{inv:call_to_visit3}
In the digraph $\graph(\HH_\cur)$, at the call to $\Call{Visit3}{u}$, $u$ is reachable from a vertex $W$ such that $\tindex[W]$ is defined if, and only if, $W$ belongs to the stack $S$.
\end{invariant}

\begin{proof}
The ``if'' part can be shown by induction. When the function $\Call{Visit3}{u}$ is called from Line~\lineref{scc3:init_call}, the stack $S$ is empty, so that this is obvious. Otherwise, it is called from Line~\lineref{scc3:rec_call} during the execution of $\Call{Visit3}{x}$. Then $X  = \Call{Find}{x}$ is reachable from any vertex in the stack, since $x$ was itself reachable from any vertex in the stack at the call to $\Call{Find}{X}$ (inductive hypothesis) and that this reachability property is preserved by potential vertex merging steps (Proposition~\ref{prop:collapse}). As $u$ is obviously reachable from $X$, this shows the statement.

Conversely, suppose that $\tindex[W]$ is defined, and $W$ is not in the stack. According to Invariant~\ref{inv:comporstack}, $W$ is necessarily an element of $\incomp$. Hence $u$ also belongs to $\incomp$ by Invariant~\ref{inv:incomp2}, which is a contradiction since this cannot hold at the call to $\Call{Visit}{u}$.
\end{proof}

\begin{invariant}\label{inv:ce}
Let $a \in A$ such that $\card{T(a)} > 1$. Consider a state $\cur$ of the algorithm \Call{TerminalScc3}{} in which $r_a$ is defined. 

Then $c_a$ is equal to the number of elements $x \in T(a)$ such that $\tindex[x]$ is defined and $\Find{x}$ is reachable from $\Find{r_a}$ in $\graph(\HH_\cur)$.
\end{invariant}

\begin{proof}
Since at Line~\lineref{scc3:counter_increment}, $c_a$ is incremented only if $R_a = \Find{r_a}$ belongs to $S$, we already know using Invariant~\ref{inv:call_to_visit3} that $c_a$ is equal to the number of elements $x \in T(a)$ such that, at the call to $\Call{Visit3}{x}$, $x$ was reachable from $\Find{r_a}$.

Now, let $x \in \VV$, and consider a state $\cur$ of the algorithm in which $r_a$ and $\tindex[x]$ are both defined, and $\Find{r_a}$ appears in the stack $S$. Since $\tindex[x]$ is defined, \Call{Visit3}{} has been called on $x$, and let $\old$ be the state of the algorithm at that point. Let us denote by $\HH_\old$ and $\HH_\cur$ the current hypergraphs at the states $\old$ and $\cur$ respectively. Like previously, we may add a hypergraph as last argument of the function $\Call{Find}{}$ to distinguish its execution in the states $\old$ and $\cur$. We claim that $\Find{r_a,\HH_\cur} \reach_{\graph(\HH_\cur)} \Find{x,\HH_\cur}$ if, and only if, $\Find{r_a,\HH_\old} \reach_{\graph(\HH_\old)} x$. The ``if'' part is due to the fact that reachability in $\graph(\HH_\old)$ is not altered by the vertex merging steps (Proposition~\ref{prop:collapse}). Conversely, if $x$ is not reachable from $\Find{r_a,\HH_\old}$ in $\HH_\old$, then $\Find{r_a,\HH_\old}$ is not in the call stack $S_\old$ (Invariant~\ref{inv:call_to_visit3}), so that it is an element of $\incomp_\old$. But $\incomp_\old \subseteq \incomp_\cur$, which contradicts our assumption since by Invariant~\ref{inv:comporstack}, an element cannot be stored in $\incomp_\cur$ and $S_\cur$ at the same time. It follows that if $r_a$ is defined and $\Find{r_a}$ appears in the stack $S$, $c_a$ is equal to the number of elements $x \in T(a)$ such that $\tindex[x]$ is defined and $\Find{r_a} \reach_{\graph(\HH_\cur)} \Find{x}$. 

Let $\cur$ be the state of the algorithm when $\Find{r_a}$ is moved from $S$ to $\incomp$. The invariant still holds. Besides, in the future states $\new$, $c_a$ is not incremented because $\Find{r_a, \HH_\cur} \in \incomp_\cur \subseteq \incomp_\new$ (Invariant~\ref{inv:incomp}), so that $\Find{r_a,\HH_\new} = \Find{r_a, \HH_\cur}$, and the latter cannot appear in the stack $S_\new$ (Invariant~\ref{inv:comporstack}). Furthermore, any vertex reachable from $R_a = \Find{r_a,\HH_\new}$ in $\graph(\HH_\new)$ belongs to $\incomp_\new$ (Invariant~\ref{inv:incomp2}). It even belongs to $\incomp_\cur$, as shown in the second part of the proof of Invariant~\ref{inv:incomp2} (emphasized sentence). It follows that the number of reachable vertices from $\Find{r_a}$ has not changed between states $\cur$ and $\new$. Therefore, the invariant on $c_a$ will be preserved, which completes the proof.
\end{proof}

\begin{proposition}\label{prop:visit3}
In \Call{Visit3}{}, the assignment at Line~\lineref{scc3:f_assign2} does not change the value of $F$.
\end{proposition}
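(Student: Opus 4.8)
The plan is to prove that, at the moment Line~\lineref{scc3:f_assign2} is reached (inside the atomic vertex merging block, i.e.\ after the loop from Lines~\lineref{scc3:begin_edge_loop} to~\lineref{scc3:end_edge_loop} has terminated and the merging loop has collapsed the terminal strongly connected component $C$ of $\graph(\HH_\cur)$ into the vertex $U$), the local stack $F$ already holds \emph{exactly} the set $\{a \in A \mid \collected_a = \False \text{ and } \forall x \in T(a), \Call{Find}{x} = U\}$, with no repetitions; the re-assignment is then a no-op. First I would unwind the code. When the merging block is entered, $F$ is empty, and Lines~\lineref{scc3:push_on_fprime1} and~\lineref{scc3:push_on_fprime2} push onto $F$, exactly once each, the whole contents of $F_W$ for every $W \in C$ (the contents of the root's stack at Line~\lineref{scc3:push_on_fprime1}, and those of the stacks of the vertices merged into it in the loop at Line~\lineref{scc3:push_on_fprime2}; the root together with the vertices popped from $S$ in that loop are precisely the elements of $C$). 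Hence at Line~\lineref{scc3:f_assign2} one has $F = \bigcup_{W \in C} F_W$, as multisets.

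Second, I would establish an auxiliary invariant on the interplay between $c_a$, the stacks $F_\bullet$, and the booleans $\collected_a$, valid at every state of \Call{TerminalScc3}{}: for a simple hyperarc $a$ with $T(a) = \{u\}$, the arc is never pushed onto any global stack $F_\bullet$ and $\collected_a = \True$ iff $\tindex[u]$ is defined; while for a non-simple hyperarc $a$, exactly one of three situations holds — (i) $c_a < \card{T(a)}$, $\collected_a = \False$, and $a$ lies on no stack $F_\bullet$; (ii) $c_a = \card{T(a)}$, $\collected_a = \False$, and $a$ lies on $F_{\Call{Find}{r_a}}$ and on no other stack; (iii) $c_a = \card{T(a)}$, $\collected_a = \True$, and $a$ lies on no stack $F_\bullet$. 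In particular each hyperarc sits on at most one stack $F_\bullet$ at any time, so the union in the previous paragraph is disjoint. This invariant is proved by induction on the execution trace \emph{simultaneously with Proposition~\ref{prop:visit3} itself}: the transition (i)$\,\to\,$(ii) is immediate from Lines~\lineref{scc3:counter_reach} and~\lineref{scc3:stack_edge} together with Invariant~\ref{inv:ce} (which bounds $c_a \le \card{T(a)}$, so the threshold is crossed at most once), and the only delicate transition is the one across the merging block, where the passage of an arc from (ii) to (iii) is exactly the event ``$F_W$ is emptied into $F$, then $\collected_a$ is set at Line~\lineref{scc3:collected2}'': checking that every such arc — and only such arcs — is re-collected into $F$ by Line~\lineref{scc3:f_assign2} is precisely the content of the proposition for that block, so the inductive hypothesis at the state entering the block is what is available for the step.

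Third, I would close the equivalence. Using the auxiliary invariant at the state just before the merging block, membership $a \in \bigcup_{W \in C} F_W$ amounts to: $\card{T(a)} > 1$, $\collected_a = \False$, $c_a = \card{T(a)}$, and $\Call{Find}{r_a} \in C$. By Invariants~\ref{inv:re} and~\ref{inv:ce}, under $\collected_a = \False$ the clause ``$\card{T(a)} > 1$ and $c_a = \card{T(a)}$'' says exactly that $r_a$ is defined and every $x \in T(a)$ has $\tindex[x]$ defined with $\Call{Find}{x}$ reachable from $\Call{Find}{r_a}$ in $\graph(\HH_\cur)$; combined with $\Call{Find}{r_a} \in C$ and the \emph{terminality} of $C$, this forces all these $\Call{Find}{x}$ to lie in $C$, hence to equal $U$ after the merge. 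Conversely, if $\Call{Find}{x} = U$ for all $x \in T(a)$ after the merge, then since the merge fuses exactly the vertices of $C$ into $U$ and $r_a \in T(a)$, $\Call{Find}{r_a}$ was in $C$; every $x \in T(a)$ has $\tindex[x]$ defined by Invariant~\ref{inv:tindex} (as $\tindex[U]$ is defined) and lies in the single component $C$, so $c_a = \card{T(a)}$ again by Invariant~\ref{inv:ce}. The simple case is excluded on both sides: a simple $a$ with $T(a) = \{u\}$ and $\Call{Find}{u} = U$ has $\tindex[u]$ defined by Invariant~\ref{inv:tindex}, hence $\collected_a = \True$ by the auxiliary invariant, so it belongs neither to $\bigcup_{W \in C} F_W$ nor to the right-hand side of Line~\lineref{scc3:f_assign2}. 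Therefore the two sets coincide, with no repetitions, and the assignment leaves $F$ unchanged.

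The main obstacle I anticipate is exactly the circularity noted above: the auxiliary invariant and Proposition~\ref{prop:visit3} cannot be proved independently, since preserving the invariant across the merging block requires knowing that Line~\lineref{scc3:f_assign2} sets $\collected_a = \True$ for precisely the arcs that were waiting in the collapsed stacks $F_W$ (so that no arc is ``orphaned'' on a stack whose vertex has disappeared). Once the induction is organised so that the hypothesis at the state entering the block suffices to carry out the step, everything else reduces to bookkeeping with Invariants~\ref{inv:tindex}, \ref{inv:re}, and~\ref{inv:ce} and the terminality of $C$.
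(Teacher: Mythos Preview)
Your proposal is correct and follows essentially the same route as the paper: a (strong) induction over successive executions of the merging block, combined with Invariants~\ref{inv:tindex}, \ref{inv:re}, \ref{inv:ce} and the terminality of the component $C$, to show that the hyperarcs emptied from the stacks $F_W$ ($W\in C$) are exactly those with $\collected_a=\False$ and $\Call{Find}{x}=U$ for all $x\in T(a)$. The only difference is organisational: you package the bookkeeping into an explicit three-state invariant on each non-simple hyperarc and run the induction simultaneously with the proposition, whereas the paper derives the same facts inline (its characterisation of when $\collected_{a'}=\True$ is your case~(iii), and its argument that ``$R_a$ has not been involved in a vertex merging step'' is your case~(ii)). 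Your explicit acknowledgement of the circularity is exactly the paper's use of the induction hypothesis at the $r$-th execution with $r<p$.
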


\begin{proof}
It can be shown by strong induction on the number $p$ of times that this line has been executed. 
Suppose that we are currently at Line~\lineref{scc3:begin_node_merging}, and let $X_1,\dots,X_q$ be the elements of the stack located above the root $U = X_1$ of the terminal \scc{} of $\graph(\HH_\cur)$. Any arc $a$ which will transferred to $F$ from Line~\lineref{scc3:begin_node_merging} to Line~\lineref{scc3:end_node_merging} satisfies $c_a = \card{T(a)} > 1$ and $\Find{r_a} = X_i$ for some $1 \leq i \leq q$ (since at~\lineref{scc3:begin_node_merging}, $F$ is initially empty). Invariant~\ref{inv:ce} implies that for all elements $x \in T(a)$, $\Find{x}$ is reachable from $X_i$ in $\graph(\HH_\cur)$, so that by terminality of the \scc{} $C = \{X_1,\dots,X_q\}$, $\Find{x}$ belongs to $C$, \ie\ there exists $j$ such that $\Find{x} = X_j$. It follows that at Line~\lineref{scc3:end_node_merging}, $\Find{x} = U$ for all $x \in T(a)$. Then, we claim that $\collected_a = \False$ at Line~\lineref{scc3:end_node_merging}. Indeed, $a' \in A$ satisfies $\collected_{a'} = \True$ if, and only if: 
\begin{itemize}[\textbullet]
\item either it has been copied to $F$ at Line~\lineref{scc3:f_push}, in which case $\card{T(a')} = 1$,
\item or it has been copied to $F$ at the $r$-th execution of Line~\lineref{scc3:f_assign2}, with $r < p$. By induction hypothesis, this means that $a'$ has been pushed on a stack $F_X$ and then popped from it strictly before the $r$-th execution of Line~\lineref{scc3:f_assign2}.
\end{itemize}
Observe that a given hyperarc can be popped from a stack $F_x$ at most once during the whole execution of \Call{TerminalScc3}{}. Here, $a$ has been popped from $F_{X_i}$ after the $p$-th execution of Line~\lineref{scc3:f_assign2}, and $\card{T(a)} > 1$. It follows that $\collected_a = \False$. 

Conversely, suppose for that, at Line~\lineref{scc3:f_assign2}, $\collected_a = \False$, and all the $x \in T(a)$ satisfies $\Find{x} = U$. Clearly, $\card{T(a)} > 1$ (otherwise, $a$ would have been placed into $F$ at Line~\lineref{scc3:f_push} and $\collected_a$ would be equal to $\True$). Few steps before, at Line~\lineref{scc3:begin_node_merging}, $\Find{x}$ is equal to one of $X_j$, $1 \leq j \leq q$. Since $\tindex[X_j]$ is defined ($X_j$ is an element of the stack $S$), by Invariant~\ref{inv:tindex}, $\tindex[x]$ is also defined for all $x \in T(a)$, hence, the root $r_a$ is defined by Invariant~\ref{inv:re}. Besides, $\Find{r_a}$ is equal to one of the $X_j$, say $X_k$ (since $r_a \in T(a)$). As all the $\Find{x}$ are reachable from $\Find{r_a}$ in $\graph(\HH_\cur)$, then $c_a = \card{T(a)}$ using Invariant~\ref{inv:ce}. It follows that $a$ has been pushed on the stack $F_{R_a}$, where $R_a = \Find{r_a,\HH_\old}$ in an previous state $\old$ of the algorithm. As $\collected_a = \False$, $a$ has not been popped from $F_{R_a}$, and consequently, the vertex $R_a$ of $\HH_\old$ has not involved in a vertx merging step. Therefore, $R_a$ is still equal to $\Find{r_a,\HH_\cur} = X_k$. It follows that at Line~\lineref{scc3:begin_node_merging}, $a$ is stored in $F_{X_k}$, and thus it is copied to $F$ between Lines~\lineref{scc3:begin_node_merging} and~\lineref{scc3:end_node_merging}. This completes the proof.
\end{proof}

We now can prove the correctness of \Call{TerminalScc}{}.
\begin{proof}[Theorem~\ref{th:correctness}]
By Proposition~\ref{prop:visit3}, Line~\lineref{scc3:f_assign2} can be safely removed in \Call{Visit3}{}. It follows that the booleans $\collected_a$ are now useless, so that Line~\lineref{scc3:collected_init}, the loop from Lines~\lineref{scc3:collected1_begin} to~\lineref{scc3:collected1_end}, and Line~\lineref{scc3:collected2} can be also removed. After that, we precisely obtain the algorithm \Call{TerminalScc}{}. Proposition~\ref{prop:maxscc3} completes the proof.
\end{proof}

\end{document}